\renewcommand{\le}{\leqslant}
\renewcommand{\leq}{\leqslant}
\renewcommand{\geq}{\geqslant}
\title{Capacity of Non-Malleable Codes}
\author{{Mahdi Cheraghchi}\thanks{%
Email: $\langle$mahdi@csail.mit.edu$\rangle$.
Research supported in part by V. Guruswami's Packard Fellowship, MSR-CMU Center for Computational Thinking,
and the Swiss National Science Foundation research grant PA00P2-141980.} \\%
 CSAIL \\
  MIT\\
  Cambridge, MA 02139
 \and %
{Venkatesan Guruswami}\thanks{
Email: $\langle$guruswami@cmu.edu$\rangle$. Research supported in part by the National Science Foundation under Grant No. CCF-0963975.  Any opinions, findings, and conclusions or recommendations expressed in this material are those of the author(s) and do not necessarily reflect the views of the National Science Foundation.
} %
 \\
  Computer Science Department \\
  CMU\\
  Pittsburgh, PA 15213
}
\date{}
\newcommand{\cC}{\mathcal{C}}
\newcommand{\cD}{\mathcal{D}}
\newcommand{\cE}{\mathcal{E}}
\newcommand{\cX}{\mathcal{X}}
\newcommand{\cY}{\mathcal{Y}}
\newcommand{\cS}{\mathcal{S}}
\newcommand{\cN}{\mathcal{N}}
\newcommand{\cU}{\mathcal{U}}
\newcommand{\U}{\mathcal{U}}
\newcommand{\F}{\mathds{F}}
\newcommand{\cF}{\mathcal{F}}
\newcommand{\E}{\mathds{E}}
\newcommand{\supp}{\mathsf{supp}}
\newcommand{\eps}{\epsilon}
\newtheorem{thm}{Theorem}[section]
\newtheorem{prop}[thm]{Proposition}
\newtheorem{claim}[thm]{Claim}
\newtheorem{lem}[thm]{Lemma}
\newtheorem{coro}[thm]{Corollary}
\theoremstyle{definition}
\newtheorem{defn}[thm]{Definition}
\newtheorem{remark}[thm]{Remark}
\newtheorem*{caveat*}{Caveat}
\newcommand{\dist}{\mathsf{dist}}
\newcommand{\distH}{\mathsf{dist}_{h}}
\newcommand{\zo}{\{0,1\}}
\newcommand{\poly}{\mathsf{poly}}
\newcommand{\enc}{{\mathsf{Enc}}}
\newcommand{\dec}{{\mathsf{Dec}}}
\newcommand{\same}{{\underline{\mathsf{same}}}}
\newcommand{\Copy}{\mathsf{copy}}
\newcommand{\distr}{\mathscr{D}}
\newcommand{\mchOK}[1]{}
\newcommand{\vnoteOK}[1]{}
\newcommand{\calF}{\mathcal{F}}
\begin{document}

\maketitle

\begin{abstract}
Non-malleable codes, introduced by Dziembowski, Pietrzak and Wichs~(ICS~2010), 
encode messages $s$
in a manner so that tampering the codeword causes the decoder to
either output $s$ or a message that is independent of $s$. While this
is an impossible goal to achieve against unrestricted tampering
functions, rather surprisingly non-malleable coding becomes possible
against every fixed family $\calF$ of tampering functions that is not too large
(for instance, when $|\calF| \le 2^{2^{\alpha n}}$ for some $\alpha
<1$ where $n$ is the number of bits in a codeword).

\smallskip
In this work, we study the ``capacity of non-malleable coding,'' and
establish optimal bounds on the achievable rate as a function of the
family size, answering an open problem from Dziembowski et al.~(ICS~2010). 
Specifically,
\begin{itemize}
\item We prove that for every family $\calF$ with $|\calF| \le
2^{2^{\alpha n}}$, there exist non-malleable codes against $\calF$
with rate arbitrarily close to $1-\alpha$ (this is achieved w.h.p. by
a randomized construction).
\item We show the existence
of families of size $\exp(n^{O(1)} 2^{\alpha n})$ against which there is no
non-malleable code of rate $1-\alpha$ (in fact this is the case w.h.p for a random family of this size). 
\item We also show that $1-\alpha$ is the best achievable rate for the
  family of functions which are only allowed to tamper the first
  $\alpha n$ bits of the codeword, which is of special interest.

As a corollary, this implies that the capacity of
  non-malleable coding in the split-state model (where the tampering
  function acts independently but arbitrarily on the two halves of the
  codeword, a model which has received some attention recently) 
equals $1/2$.
\end{itemize}

We also give an efficient Monte Carlo construction of codes of rate
close to $1$ with polynomial time encoding and decoding that is
non-malleable against any fixed $c > 0$ and family $\calF$ of size
$2^{n^c}$, in particular tampering functions with
say cubic size circuits.

\end{abstract}
\newpage
\tableofcontents
\newpage

\section{Introduction}
Non-malleable codes are a fascinating new concept put forth in
\cite{ref:nmc}, following the program on non-malleable cryptography
which was introduced by the seminal work of Dolev, Dwork and Naor~\cite{ref:DDN00}.
Non-malleable codes are aimed at protecting the integrity of data in situations
where it might be corrupted in ways that precludes error-correction or
even error-detection.  Informally, a code is non-malleable if the
corrupted codeword either encodes the original message, or a completely
unrelated value. This is akin to the notion of non-malleable
encryption in cryptography which requires the intractability of, given
a ciphertext, producing a different ciphertext so that the
corresponding plaintexts are related to each other.

A non-malleable (binary\footnote{Throughout this paper we deal only
  with binary codes. We point out that non-malleability is mainly interesting
  over small alphabets, since when the alphabet size is large enough,
  even error-detection (e.g., via MDS codes) is possible at rates
  achieved by non-malleable codes.
  }) code against a family $\cF$ of tampering functions each mapping $\{0,1\}^n$ to $\{0,1\}^n$, consists of a
randomized encoding function $\enc : \{0,1\}^k \to \{0,1\}^n$ and a
deterministic decoding function $\dec : \{0,1\}^n \to \{0,1\}^k \cup
\{\perp\}$ (where $\perp$ denotes error-detection) which satisfy
$\dec(\enc(s))=s$ always, and the following non-malleability property
with error $\eps$: For every message $s \in \{0,1\}^k$ and every
function $f \in \calF$, the distribution of $\dec(f(\enc(s))$ is
$\eps$-close to a distribution $\cD_f$ that depends only on $f$ and is
independent\footnote{The formal definition (see
  Definition\ref{def:nmCode}) has to accommodate the possibility that
  $\dec$ error-corrects the tampered codeword to the original message
  $s$; and this is handled in a manner independent of $s$ by including
  a special element $\same$ in the support of $\cD_f$.} of $s$.
In other words, if some
  adversary (who has full knowledge of the code and the message $s$,
  but not the internal randomness of the encoder) tampers with the
  codeword $\enc(s)$ corrupting it to $f(\enc(s))$, he cannot control the relationship between $s$
  and the message the corrupted codeword $f(\enc(s))$ encodes.


In general, it is impossible to achieve non-malleability against
arbitrary tampering functions. Indeed, the tampering function can
decode the codeword to compute the original message $s$, flip the last
bit of $s$ to obtain a related message $\tilde{s}$, and then reencode
$\tilde{s}$. This clearly violates non-malleability as the tampered
codeword encodes the message $\tilde{s}$ which is closely related to
$s$. Therefore, in order to construct non-malleable codes, one focuses
on a restricted class of tampering functions. For example, the body of
work on error-correcting codes consists of functions which can flip an
arbitrary subset of bits up to a prescribed limit on the total number
of bit flips. 

The notion of non-malleable coding becomes more interesting for
families against which error-correction is not possible. A simple and
natural such family is the set of functions causing arbitrary
``additive errors," namely $\calF_{\mathsf{add}} = \{ f_\Delta \mid
\Delta \in \{0,1\}^n \}$ where $f_\Delta(x) := x + \Delta$. Note that
there is no restriction on the Hamming weight of $\Delta$ as in the
case of channels causing bounded number of bit flips. While
error-correction is impossible against $\calF_{\mathsf{add}}$,
{\em error-detection} is still possible --- the work of Cramer et
al. \cite{ref:CDFPW08} constructed codes of rate approaching $1$ (which they
called ``Algebraic Manipulation Detection" (AMD) codes) such that
offset by an arbitrary $\Delta \neq 0$ will be detected with high
probability. AMD codes give a construction of non-malleable codes
against the family $\calF_{\mathsf{add}}$.

Even error-detection becomes impossible against many other natural
families of tampering functions. A particularly simple such class
consists of all constant functions $f_{c}(x) := c$ for $c \in
\{0,1\}^n$. This family includes some function that maps all inputs to
a valid codeword $c^\ast$, and hence one cannot detect tampering.
Note, however, that non-malleability is trivial to achieve against
this family --- the rate $1$ code with identity encoding function is
itself non-malleable as the output distribution of a constant function
is trivially independent of the message. A natural function family for
which non-malleability is non-trivial to achieve consists of
{\em bit-tampering functions} $f$ in which the different of bits of the
codewords are tampered independently (i.e., either flipped, set to $0/1$, or 
left unchanged); formally $f(x) = (f_1(x_1),f_2(x_2),\dots,f_n(x_n))$ 
for arbitrary $1$-bit functions $f_1,f_2,\dots,f_n$~\cite{ref:nmc}.

The family $\calF_{\mathsf{all}}$ of all functions $f : \{0,1\}^n \to
\{0,1\}^n$ has size given by $\log \log |\calF_{\mathsf{all}}| = n +
\log n$.  The authors of \cite{ref:nmc} show the existence of a
non-malleable code against {\em any} small enough family $\calF$ (for which
$\log \log |\calF| < n$). The rate of the code is constant if $\log
\log |\calF| \le \alpha n$ for some constant $\alpha \in (0,1)$.  The
question of figuring out the optimal rates of non-malleable codes for
various families of tampering functions was left as an open problem in
\cite{ref:nmc}. In this work we give a satisfactory answer to this question, pinning down the rate for many natural function families. We describe our results next.

\subsection{Our results}
Our results include improvements to the rate achievable as a function of the size of the family of
tampering functions, as well as limitations of non-malleable codes
demonstrating that the achieved rate cannot be improved for natural
families of the stipulated size. Specifically, we establish the
following results concerning the possible rates for non-malleable coding as a function of the size of the family of tampering functions:
\begin{enumerate} 
\item \label{part:exis} (Rate lower bound) We prove in Section~\ref{sec:prob} that if $|\calF|
  \le 2^{2^{\alpha n}}$, then there exists a (strong) non-malleable code of
  rate arbitrarily close to $1-\alpha$ which is non-malleable w.r.t
  $\calF$ with error $\exp(-\Omega(n))$. This significantly improves the 
  probabilistic construction of \cite{ref:nmc}, which achieves a rate close to
  $(1-\alpha)/3$ using a delicate Martingale argument. In particular, for arbitrary
  small families, of size $2^{2^{o(n)}}$, our result shows that the rate can be made
  arbitrarily close to $1$. This was not known to be possible even for
  the family of bit-tampering functions (which has size $4^n$), for
  which $1/3$ was the best known rate\footnote{%
  Assuming the existence of one-way functions, an explicit construction of
  non-malleable codes of rate close to $1$ was proposed in \cite{ref:nmc}.
  This construction, however, only satisfies a weaker definition of non-malleability 
  that considers computational indistinguishability rather than statistical security.%
  }~\cite{ref:nmc}. In fact, we note
  (in Appendix~\ref{sec:barrier}) why the proof strategy of \cite{ref:nmc} is limited to a rate of
  $1/2$ even for a very simple tampering function such as the one that
  flips the first bit.  As discussed in Section~\ref{sec:upper:efficiency}, 
  our probabilistic construction is equipped with an 
  encoder and decoder that can be efficiently and exactly implemented with
  access to a uniformly random permutation oracle and its inverse (corresponding to
  the ideal-cipher model in cryptography). This is a slight additional advantage over
  \cite{ref:nmc}, where only an approximation of the encoder and decoder is shown to be
  efficiently computable.

\item (Upper bound/limitations on rate) The above coding theorem shows
  that the ``capacity'' of a function family $|\calF|$ for
  non-malleable coding is at least $1- (\log \log |\calF|)/n$. We also
  address the natural ``converse coding quesiton" of whether this rate
  bound is the best achievable (Section~\ref{sec:lower}). This turns out to be false in general
  due to the existence of uninteresting large families for which
  non-malleable coding with rate close to $1$ is easy. But we do prove
  that the $1-\alpha$ rate is best achievable in ``virtually all''
  situations:

\begin{enumerate}
\itemsep=0ex
\vspace{-1ex}
\item We prove that for {\em random} families of size $2^{2^{\alpha n}}$, with high probability it is not possible to exceed a rate of $1-\alpha$ for non-malleable coding with small error.
\item \label{part:1-alpha} For the family of tampering functions which leave the last
  $(1-\alpha)n$ bits intact and act arbitrarily on the first $\alpha
  n$ bits, we prove that $1-\alpha$ is the best achievable rate for
  non-malleable coding. (Note that a rate of $1-\alpha$ is trivial to
  achieve for this family, by placing the message bits in the last
  $(1-\alpha)n$ bits of the codeword, and setting the first $\alpha n$
  bits of the codeword to all $0$s.)
\end{enumerate}

\end{enumerate}

The result \ref{part:1-alpha}, together with the existential result
\ref{part:exis} above, pins down the optimal rate for non-malleable
codes in the {\em split-state model} to $1/2$. In the split-state
model, which was the focus of a couple of recent
works~\cite{ref:DKO,ref:ADL}, the tampering function operates
independently (but in otherwise arbitary ways) on the two halves of
the codeword, i.e., $f(x) = ((f_1(x_1),f_2(x_2))$ where $x_1,x_2$ are
the two halves of $x$ and $f_1,f_2$ are functions mapping $n/2$ bits
to $n/2$ bits. The recent work \cite{ref:ADL} gave an explicit
construction in this model with polynomially small rate. Our work
shows that the capacity of the split-state model is $1/2$, but we do
not offer any explicit construction. For the more restrictive class of
bit-tampering functions (where each bit is tampered independently), in
a follow-up work~\cite{ref:CG2} we give an explicit construction with
rate approaching $1$~\cite{ref:CG2}. We also present in that work a
reduction of non-malleable coding for the split-state model to a new
notion of non-malleable two-source extraction.

\medskip \noindent {\bf Monte Carlo construction for small
  families.} Our result \ref{part:exis} above is based on a random
construction which takes exponential time (and space). Derandomizing
this construction, in Section~\ref{sec:MC} we are able to obtain an efficient Monte Carlo
construction of non-malleable codes of rate close to $1$ (with
polynomial time encoding and decoding, and inverse polynomially small
error) for an {\em arbitrary} family of size
$\exp(n^c)$ for any fixed $c > 0$. Note that in particular this includes
tampering functions that can be implemented by circuits of any fixed polynomial size,
or simpler families such as bit-tampering adversaries.
The construction does not rely on any computational hardness assumptions,
at the cost of using a small amount of randomness.

\subsection{Proof ideas}
  
\smallskip \noindent {\bf Rate lower bound.} Our construction
of rate $\approx 1-(\log \log |\cF|)/n$ codes is obtained by picking for each
message, a random blob of $t$ codewords, such that blobs corresponding
to distinct messages are disjoint.  For each tampering function $f$, 
our proof analyzes the distribution of $\dec(f(\enc(s))$ for each message $s$ separately, 
and shows that w.h.p. they are essentially close to the same distribution $\cD_f$.
In order to achieve sufficiently small error probability allowing for a union
bound, the proof uses a number of additional ideas, including a
randomized process that gradually reveals information about the code
while examining the $t$ codewords in each blob in sequence. The
analysis ensures that as little information is revealed in each step
as possible, so that enough independence remains in the
conditional joint distribution of the codewords throughout the analysis.
Finally, strong concentration bounds are used to derive the desired
bound on the failure probability.
The proof for the special case of bijective tampering functions turns
out to be quite straightforward, and as a warm-up we present this special
case first in Section~\ref{sec:upper:bijection}.

\medskip \noindent {\bf Monte Carlo construction.} 
Since the analysis of the probabilistic code construction considers
each message $s$ separately, we observe that it only
only needs limited ($t$-wise) independence of the
codewords. On the other hand, the code construction is designed to be
sparse, namely taking $t = \mathrm{poly}(n,\log |\cF|,1/\eps)$ suffices
for the analysis. This is the key idea behind our efficient Monte Carlo
construction for small families with $\log |\cF| \le
\mathrm{poly}(n)$.

 The birthday paradox implies that picking the blob of codewords
 encoding each message independently of other messages, while
 maintaining disjointness of the various blobs, limits the rate to $1/2$. Therefore, we construct the code by means of a $t$-wise independent
 {\em decoding} function implemented via a random low-degree
 polynomial. After overcoming some complications to ensure an
 efficient encoding function, we get our efficient randomized
 construction for small families of tampring functions.


\medskip \noindent {\bf Rate upper bounds.} 
Our main impossibility result for the family of adversaries that
only tamper the first $\alpha n$ bits of the codeword uses
an information theoretic argument. We argue that if the rate
of the code is sufficiently large, one can always find messages
$s_0$ and $s_1$ and a set $X_\eta \subseteq \zo^{\alpha n}$ such that
the following holds: The first $\alpha n$ bits of the encoding of $s_0$ has a 
noticeable chance of being in $X_\eta$, whereas this chance for
$s_1$ is quite small. Using this property, we design an adversary
that maps the first $\alpha n$ bits of the encoding to a dummy
string if they belong to $X_\eta$ and leaves the codeword intact otherwise.
This suffices to violate non-malleability of the code.

\section{Preliminaries} \label{sec:prelim}
\subsection{Notation}
We use $\U_n$ for the uniform distribution on $\zo^n$ and $U_n$
for the random variable sampled from $\U_n$ and independently 
of any existing randomness.
For a random variable $X$, we denote by $\distr(X)$ the probability distribution
that $X$ is sampled from. Moreover, for an event $\cE$, we use $\distr(X | \cE)$ to denote
the conditional distribution on the random variable $X$ on the event $\cE$.
Generally, we will use calligraphic symbols (such as $\cX$) for probability distributions
and the corresponding capital letters (such as $X$) for related random variables.
For a discrete distribution $\cX$, we denote by $\cX(x)$ the probability mass assigned to $x$
by $\cX$.
Two distributions $\cX$ and $\cY$ being $\eps$-close in statistical distance is denoted by
$\cX \approx_\eps \cY$. We will use $(\cX, \cY)$ for the product distribution
with the two coordinates independently sampled from $\cX$ and $\cY$.
All unsubscripted logarithms are taken to the base $2$.
Support of a discrete random variable (or distribution) $X$ is denoted by $\supp(X)$.
With a slight abuse of notation, for various bounds we condition probabilities
and expectations on random variables rather than events (e.g., $\E[X|Y]$, or
$\Pr[\cE|Y]$). In such instances, the notation means that the statement holds
for \emph{every} possible realization of the random variables that we condition on.


\subsection{Definitions}

In this section, we review the formal definition of non-malleable codes as introduced
in \cite{ref:nmc}. First, we recall the notion of \emph{coding schemes}.

\begin{defn}[Coding schemes] \label{def:scheme}
A pair of functions $\enc\colon \zo^k \to \zo^n$ and $\dec\colon \zo^n \to \zo^k \cup \{\perp\}$
where $k \leq n$ 
is said to be a coding scheme with block length $n$ and message length $k$
if the following conditions hold.
\begin{enumerate}
\itemsep=0ex
\vspace{-1ex}
\item The encoder $\enc$ is a randomized function; i.e., at each call it receives a 
uniformly random sequence of coin flips that the output may depend on. This random input
is usually omitted from the notation and taken to be implicit. Thus for any 
$s \in \zo^k$, $\enc(s)$ is a random variable over $\zo^n$. The decoder $\dec$ is;
however, deterministic.

\item For every $s \in \zo^k$, we have $\dec(\enc(s)) = s$ with probability $1$.
\end{enumerate}

The \emph{rate} of the coding scheme is the ratio $k/n$.
A coding scheme is said to have relative distance $\delta$, for some $\delta \in [0,1)$,
if for every $s \in \zo^k$ the following holds. Let $X := \enc(s)$. Then,
for any $\Delta \in \zo^n$ of Hamming weight at most $\delta n$,
$\dec(X + \Delta) = \perp$ with probability $1$. \qed
\end{defn}

\noindent Before defining non-malleable coding schemes, we find it convenient to define the following notation.

\begin{defn}
For a finite set $\Gamma$, the function $\Copy\colon (\Gamma \cup \{\same\}) \times \Gamma \to 
\Gamma$ is defined as follows:
\[
\Copy(x, y) := \begin{cases}
x & x \neq \same, \\
y & x = \same.
\end{cases} \qquad\qquad\qquad\qquad\qed
\]
\end{defn}

\noindent
The notion of non-malleable coding schemes from \cite{ref:nmc}
can now be rephrased as follows.

\begin{defn}[Non-malleability] \label{def:nmCode}
A coding scheme $(\enc, \dec)$ with message length $k$ and block length $n$
is said to be non-malleable with error $\eps$ (also called \emph{exact security})
with respect to a family
$\cF$ of tampering functions acting on $\zo^n$ (i.e., each $f \in \cF$ maps
$\zo^n$ to $\zo^n$) if for every $f \in \cF$ there is a distribution
$\cD_f$ over $\zo^k \cup \{\perp, \same\}$ such that the following holds.
Let $s \in \zo^k$ and define the random variable $S := \dec(f(\enc(s)))$. 
Let $S'$ be independently sampled from $\cD_f$. Then,
\[
\distr(S) \approx_\eps \distr(\Copy(S', s)). \qquad\qquad\qed
\]
\end{defn}

\begin{remark} \label{rem:perp}
The above definition allows the decoder to output a special symbol
$\perp$ that corresponds to error detection. It is easy to note that
any such code can be transformed to one where the decoder never outputs
$\perp$ without affecting the parameters (e.g., the new decoder may simply
output $0^k$ whenever the original decoder outputs $\perp$).
\end{remark}

Dziembowski et al.~\cite{ref:nmc} also consider the following stronger
variation of non-malleable codes.

\begin{defn}[Strong non-malleability] \label{def:nmCode:strong}
A pair of functions as in Definition~\ref{def:nmCode} is 
said to be a \emph{strong} non-malleable coding scheme with error $\eps$ with respect to a family
$\cF$ of tampering functions acting on $\zo^n$ if the conditions 
$(1)$ and $(2)$ of Definition~\ref{def:nmCode} is satisfied, and additionally,
the following holds. For any message $s \in \zo^k$,
let $E_s := \enc(s)$, consider the random variable
\[
D_s := \begin{cases}
\same & \text{if $f(E_s) = E_s$,} \\
\dec(f(E_s)) & \text{otherwise,}
\end{cases}
\] and let $\cD_{f,s} := \distr(D_s)$.
It must be the case that for every pair of distinct messages $s_1, s_2 \in \zo^k$,
$\cD_{f,s_1} \approx_\eps \cD_{f, s_2}$. \qed
\end{defn}

\begin{remark}[Computational security] \label{rem:computational}
Dziembowski et al. also consider the case where statistical distance is replaced
with computational indistinguishability with respect to a bounded computational model.
As our goal is to understand information-theoretic limitations of
non-malleable codes, we do not consider this variation in this work.
It is clear, however, that our negative results in Section~\ref{sec:lower}
apply to this model as well.
A related (but incomparable) model that we consider in Section~\ref{sec:MC} is when the
distinguishability criterion is still statistical; however the adversary
is computationally bounded (e.g., one may consider the family of polynomial sized Boolean
circuits). For this case, we construct an efficient Monte Carlo coding scheme 
that achieves any rate arbitrarily close to $1$.
\end{remark}

\begin{remark}[Efficiency of sampling $\cD_f$] \label{rem:efficiency}
The original definition of non-malleable codes in \cite{ref:nmc} also requires
the distribution $\cD_f$ to be efficiently samplable given oracle access
to the tampering function $f$. We find it more natural to remove this 
requirement from the definition since even combinatorial non-malleable codes
that are not necessarily equipped with efficient components (such as
the encoder, decoder, and sampler for $\cD_f$) are interesting and
highly non-trivial to construct. It should be noted; however, that
for any non-malleable coding scheme equipped with an efficient encoder
and decoder, the following is a valid and efficiently samplable 
choice for the distribution $\cD_f$
(possibly incurring a constant factor increase in the error parameter):  
\begin{enumerate}
\itemsep=0ex
\vspace{-1ex}
\item Let $S \sim \U_k$, and $X := \enc(S)$.
\item If $\dec(X) = S$, output $\same$. Otherwise, output $\dec(X)$.
\end{enumerate}
Our Monte Carlo construction in Section~\ref{sec:MC} is 
equipped with a polynomial-time encoder and decoder.
So is the case for our probabilistic construction in 
Section~\ref{sec:prob} in the random oracle model.
\end{remark}

\section{Probabilistic construction of non-malleable codes}
\label{sec:prob}

In this section, we introduce our probabilistic construction of non-malleable codes.
Contrary to the original construction of Dziembowski et al.~\cite{ref:nmc}, 
where they pick a uniformly random truth table for the decoder and 
do not allow the $\perp$ symbol. Our code, on the other hand, is quite sparse.
In fact, in our construction $\dec(U_n) = \perp$ with high probability.
As we observe in Section~\ref{sec:barrier}, this is the key to our improvement,
since uniformly random decoders cannot achieve non-malleability even against
extremely simple adversaries at rates better than $1/2$. Moreover,
our sparse construction offers the added feature of having a large minimum
distance in the standard coding sense; any tampering scheme that perturbs
the codeword in a fraction of the positions bounded by a prescribed limit
will be detected by the decoder with probability $1$.
Another advantage of sparsity is allowing a compact representation for
the code. We exploit this feature in our Monte Carlo construction of
Section~\ref{sec:MC}.
Our probabilistic coding scheme is described in Construction~\ref{constr:prob}.

We remark that Construction~\ref{constr:prob} can be efficiently implemented
in the ideal-cipher model, which in turn implies an efficient approximate
implementation in the random oracle model (see the discussion 
following the proof of Theorem~\ref{thm:upperBound}
in Section~\ref{sec:upper:efficiency}). In turn, this implies that the
distribution $\cD_f$ in Definition~\ref{def:nmCode} for this construction
can be efficiently sampled in both models (see Remark~\ref{rem:efficiency}).

\begin{constr} 
  \caption{Probabilistic construction of non-malleable codes.}

  \begin{itemize}
  \item {\it Given: } Integer parameters $0 < k \leq n$ and integer $t > 0$
  such that $t 2^k \leq 2^n$, and a relative distance parameter $\delta$, $0 \le \delta < 1/2$.

  \item {\it Output: } A pair of functions $\enc\colon \zo^k \times \zo^n$
  and $\dec\colon \zo^n \to \zo^k$, where $\enc$ may also use a uniformly random
  seed which is hidden from that notation, but $\dec$ is deterministic.

  \item {\it Construction: } 
   \begin{enumerate}
   \itemsep=0ex
  \item Let $\cN := \zo^n$. 
  \item For each $s \in \zo^k$, in an arbitrary order, 
  \begin{itemize}
  \item Let $E(s) := \emptyset$.
  \item For $i \in \{1, \ldots, t\}$:
  \begin{enumerate}
  \item Pick a uniformly random vector $w \in \cN$.
  \item Add $w$ to $E(s)$.
  \item Let $\Gamma(w)$ be the Hamming ball of radius $\delta n$ centered at $w$.
  Remove $\Gamma(w)$ from $\cN$ (note that when $\delta = 0$, we have $\Gamma(w) = \{w\}$).
  \end{enumerate}
  \end{itemize}
  \item Given $s \in \zo^k$, $\enc(s)$ outputs an element of $E(s)$ uniformly
  at random.
  
  \item Given $w \in \zo^n$, $\dec(s)$ outputs the unique $s$ such that
  $w \in E(s)$, or $\perp$ if no such $s$ exists.
\end{enumerate}     
  \end{itemize}
  \label{constr:prob}
\end{constr}

The main theorem of this section is the result below that proves
non-malleability of the coding scheme in Construction~\ref{constr:prob}.

\begin{thm} \label{thm:upperBound}
Let $\cF\colon \zo^n \to \zo^n$ be any family of tampering functions.
For any $\eps, \eta > 0$, with probability at least $1-\eta$,
the coding scheme $(\enc, \dec)$ of Construction~\ref{constr:prob}
is a strong non-malleable code with respect to $\cF$ and with error $\eps$
and relative distance $\delta$,
provided that both of the following conditions are satisfied.
\begin{enumerate}
\item $t \geq t_0$, for some 
\begin{equation} \label{eqn:thm:upper:t}
t_0 = O\left( \frac{1}{\eps^6} \Big(\log\frac{|\cF|N}{\eta} \Big) \right).
\end{equation}
\item $k \leq k_0$, for some 
\begin{equation} \label{eqn:thm:upper:k0}
k_0 \geq n(1-h(\delta))-\log t-3\log(1/\eps)-O(1),
\end{equation}
where $h(\cdot)$ denotes the binary entropy function.
\end{enumerate}
Thus by choosing $t = t_0$ and $k=k_0$, the construction satisfies
\[
k \geq n(1-h(\delta)) - \log\log(|\cF|/\eta) - \log n - 9 \log(1/\eps) - O(1).
\]
In particular, if $|\cF| \leq 2^{2^{\alpha n}}$ for any constant $\alpha \in (0, 1)$,
the rate of the code can be made arbitrarily close to $1-h(\delta)-\alpha$ while allowing
$\eps = 2^{-\Omega(n)}$.
\end{thm}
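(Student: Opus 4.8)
The plan is to analyze Construction~\ref{constr:prob} by fixing a tampering function $f \in \cF$ and a message $s \in \zo^k$, and showing that with overwhelming probability (over the choice of the code) the conditional distribution $\cD_{f,s}$ of $D_s$ is close to a fixed ``target'' distribution $\cD_f$ that does not depend on $s$; then a union bound over all $f$ and all pairs $s_1, s_2$ (there are at most $|\cF| \cdot 2^{2k} \le |\cF| \cdot N^2$ of them, and $\log$ of this is absorbed into $t_0$) finishes the argument. The natural candidate for $\cD_f$ is the distribution of $\dec(f(U_n))$ with the $\same$-adjustment when $f(U_n) = U_n$; more precisely one analyzes, for each codeword $w \in E(s)$, whether $f(w)$ falls outside $\bigcup_{s'} E(s')$ (giving $\perp$), equals $w$ (giving $\same$), or lands in some $E(s')$ (giving $s'$). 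Since $\enc(s)$ is uniform over the $t$ codewords in $E(s)$, the distribution $D_s$ is the empirical average over these $t$ outcomes, so the whole game is to show this empirical average concentrates around its ``ideal'' value.

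The key technical device, as the proof-ideas section hints, is a \emph{deferred-decisions} exposure: rather than revealing all of $E(s) = \{w_1,\dots,w_t\}$ at once, reveal the codewords $w_1, w_2, \dots, w_t$ one at a time, and for each $w_i$ expose only the minimal information about the rest of the code needed to determine the outcome of $f(w_i)$ — namely, whether $f(w_i)$ lies in the (already-removed) region, and if it lies in some other message's blob, which message. Because the code is sparse ($t 2^k \le 2^n$, and in fact much sparser at the chosen parameters), at each step the probability that $f(w_i)$ collides with the small amount of exposed structure is tiny, so the $t$ outcomes behave almost like i.i.d. samples from $\cD_f$. One then applies a concentration inequality (Azuma/Hoeffding for the martingale formed by the partial sums, or a bounded-differences argument, since each $w_i$ affects the empirical distribution by at most $1/t$) to conclude that, except with probability $\exp(-\Omega(\eps^2 t))$, every probability mass of $\cD_{f,s}$ is within $O(\eps)$ of the corresponding mass of $\cD_f$. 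Choosing $t = t_0$ as in~\eqref{eqn:thm:upper:t} (the $1/\eps^6$ there presumably comes from combining the $\eps^2$ in the exponent with a union bound over the $\mathrm{poly}(1/\eps)$-sized support or with a second layer of the argument) makes this failure probability smaller than $\eta/(|\cF|N^2)$, so the union bound goes through.

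The main obstacle — and the reason a delicate argument is needed — is controlling the \emph{dependence} between the $t$ codewords of a single blob once we condition on partial information, and ensuring that the tiny per-step collision probabilities do not accumulate badly across the $t$ steps or across the exposure of earlier blobs. Naively the outcomes are not independent: removing Hamming balls $\Gamma(w_i)$ shrinks $\cN$, and a later $w_j$ is uniform over a conditioned set; moreover $f$ could be adversarially chosen to map many points of $E(s)$ into a single other blob. The exposure process must be set up so that the ``revealed'' set stays of size $\mathrm{poly}(t)$ (hence the sparsity requirement, which caps $k$ by $n(1-h(\delta)) - \log t - 3\log(1/\eps) - O(1)$ as in~\eqref{eqn:thm:upper:k0} — the $h(\delta)$ term accounting for the volume of the Hamming balls and the $\log t$ for the $t$ balls per message), so that the conditional distribution of each fresh $w_i$ is within statistical distance $\mathrm{poly}(t)/2^{n(1-h(\delta))}$ of uniform on all of $\zo^n$, uniformly over the conditioning. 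Once that invariant is maintained, the concentration step is routine; getting the bookkeeping of the exposure right, so that enough independence survives to the end, is where the real work lies. Finally, the relative-distance claim is immediate from the construction, since the blobs are unions of Hamming balls of radius $\delta n$ and distinct blobs are disjoint, so any perturbation of weight $\le \delta n$ keeps a codeword inside its own ball and $\dec$ can be made to reject (it already detects membership exactly), and the stated rate bound $k \ge n(1-h(\delta)) - \log\log(|\cF|/\eta) - \log n - 9\log(1/\eps) - O(1)$ follows by substituting $t = t_0$ into~\eqref{eqn:thm:upper:k0}.
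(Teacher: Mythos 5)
There is a genuine gap. Your plan recognizes the core structure — fix $f$ and $s$, expose the codewords of $E(s)$ one at a time, use sparsity to keep the exposed set small, and apply a martingale concentration bound — and this is indeed the spirit of the paper's argument. But the device you are missing is the paper's treatment of \emph{heavy elements}, and without it your argument breaks for general (non-bijective) $f$ at two points.

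First, your target distribution is not well-defined in a way the union bound can use. You propose $\cD_f$ to be ``the distribution of $\dec(f(U_n))$ with the $\same$-adjustment.'' But $\dec$ is itself the random object we are constructing, so this $\cD_f$ is a random variable that depends on the entire code, including $E(s)$ itself. The paper fixes this by first defining $H := \{x : \Pr[f(U_n)=x] > 1/r\}$ for a parameter $r = \Theta(\eps^2 t)$ (so $|H| < r$), \emph{revealing} $\dec$ on $H$ before the per-message analysis begins, and then defining $\cD_f$ so that $f(U_n)$ outside $H$ always contributes $\perp$. Only after conditioning on the $O(r)$-sized initial revelation is $\cD_f$ a fixed distribution independent of every $E(s)$, and only then does ``show $\distr(D_s) \approx_\eps \cD_f$ for each $s$ and union-bound'' make sense.

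Second, without the heavy-element split the per-step martingale increments are not bounded. Consider an $f$ that maps half of $\zo^n$ to one point $y$. If $y$ happens to be a codeword (say $\dec(y)=s'$), then $\dec(f(\enc(s)))$ concentrates $\approx 1/2$ of its mass on $s'$; a single step of your exposure process that first reveals $\dec(y) = s'$ shifts the target distribution by $\Omega(1)$, killing any Azuma-type argument whose increments need to be $o(1)$. In the paper, this can never happen during the stage-by-stage process because $y \in H$ was already revealed up front; any \emph{new} revelation $\dec(f(E_i))$ occurs only at a non-heavy point $f(E_i) \notin H$, so its contribution to the drift $\Delta_i = \dist(\distr(S'_{i+1}), \distr(S'_i))$ is at most $1/r$ (this is exactly the paper's Claim~\ref{claim:upper:Delta}). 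You also need the paper's skip set $\skp$ to handle the codewords $E_j$ whose locations are already pinned down by the heavy-element revelation or by cycles in $f$'s evaluation graph: those $E_j$ are no longer close to uniform and have to be excluded from the empirical average and then accounted for at the end (equation~\eqref{eqn:upper:skpSize}). Your proposal does not acknowledge either of these ingredients, and without them both the target distribution and the bounded-increments assumption in your concentration step fail for the very adversaries that make the general case harder than the bijective warm-up.

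One smaller calibration point: the $\eps^6$ in $t_0$ is not an artifact of ``a union bound over the $\poly(1/\eps)$-sized support'' as you guess; it comes from choosing $r = \Theta(\eps^2 t)$ and the resulting $\exp(-\Omega(\eps^2 r^2 / t)) = \exp(-\Omega(\eps^6 t))$ failure probability in Claim~\ref{claim:upper:DeltaSum}.
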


\begin{remark}(Error detection)
An added feature of our sparse coding scheme is the error-detection capability.
However, observe that any probabilistic coding scheme that is non-malleable against
all families of adversaries of bounded size over $\zo^n$ (such as Construction~\ref{constr:prob},
Construction~\ref{constr:monte}, and the probabilistic construction of \cite{ref:nmc}) 
can be turned into one having relative distance $\delta$ (and satisfying the same
non-malleability guarantees) by composing the construction
with a fixed outer code $\cC$ of block length $n$ and relative distance $\delta$. Indeed, 
any class $\cF$ of tampering functions for the composed code corresponds to a class $\cF'$
of the same size or less for the original construction. Namely, each function $f' \in \cF'$ equals
$\dec_\cC \circ f$ ($\dec_\cC$ being the decoder of $\cC$) 
for some $f \in \cF$. The caveat with this approach (rather than
directly addressing distance as in Construction~\ref{constr:prob}) is that the composition
may lose strong non-malleability even if the original code is strongly
non-malleable. Indeed, it may be the case that $f$ is 
a sophisticated tampering function whereas its projection $f'$ becomes as simple as
the identity function. If so, non-malleability may be satisfied by choosing
$\cD_f := \distr(\same)$ whereas strong non-malleability does not hold.
\end{remark}

\subsection{Proof of Theorem~\ref{thm:upperBound} for bijective adversaries}
\label{sec:upper:bijection}

We first prove the theorem for adversaries that are bijective and have
no fixed points. This case is still broad enough to contain interesting
families of adversaries such as additive error adversaries $\cF_{\mathsf{add}}$ mentioned in the introduction, for which case we reconstruct the
existence proof of AMD codes (although optimal explicit
constructions of AMD codes are already known \cite{ref:CDFPW08}).

As it turns out, the analysis for this case is quite straightforward,
and significantly simpler than the general case that we will address
in Section~\ref{sec:upper:general}.

Let $N := 2^n$, $K := 2^k$, and consider a fixed message $s \in \zo^k$
and a fixed bijective tampering function $f\colon \zo^n \to \zo^n$
such that for all $x \in \zo^n$, $f(x) \neq x$. We show that the non-malleability
requirement of Definition~\ref{def:nmCode} holds with respect to the
distribution $\cD_f$ that is entirely supported on $\{ \perp \}$.
That is, we wish to show that with high probability, the coding
scheme $(\enc, \dec)$ of Construction~\ref{constr:prob} is so that
\begin{equation}
\label{eqn:upperSimple:perp}
\Pr[\dec(f(\enc(s))) \neq \perp] \leq \eps.
\end{equation}
By taking a union bound over all choices of $f$ and $s$, this would imply
that with high probability, the code is non-malleable (in fact, strongly 
non-malleable) for the entire family $\cF$.

Let $E(s) := \supp(\enc(s))$ be the set of the $t$ codewords that are mapped to $s$
by the decoder. Let $E_1, \ldots, E_t$ be the codewords in this set in the order
they are picked by the code construction. For any $x \in \zo^n \setminus E(s)$,
we know that
\begin{equation*} 
\Pr[\dec(x) \neq \perp] \leq t(K-1)/(N-t) \leq \frac{\gamma}{1-\gamma} \ ,
\end{equation*}
where $\gamma := tK/N$.
This can be seen by observing that each codeword in $E(s')$ for $s' \neq s$
is uniformly distributed on the set $\zo^n \setminus E(s)$, and taking a union
bound.
Thus, in particular since $\{ f(E_1), \ldots, f(E_t) \}$ is a set of size $t$
outside $E(s)$, we see that 
$\Pr[\dec(f(E_1)) \neq \perp] \leq \frac{\gamma}{1-\gamma}$.
In fact, the same argument holds for $\dec(E_2)$ conditioned on any
realization of $f(E_1)$, and more generally, one can derive for each $i \in [t]$,
\begin{equation} \label{eqn:upperSimple:perpB}
\Pr[\dec(f(E_i)) \neq \perp | f(E_1), \ldots, f(E_{i-1})] \leq \frac{\gamma}{1-\gamma}.
\end{equation}
Define indicator random variables $0=X_0, X_1, \ldots, X_t \in \zo$, where
$X_i = 1$ iff $\dec(f(E_1)) \neq \perp$. From \eqref{eqn:upperSimple:perpB}
and using Proposition~\ref{prop:restriction}, we can deduce that
for all $i \in [t]$,
$\Pr[X_i = 1 | X_0, \ldots, X_{i-1}] \leq \tfrac{\gamma}{1-\gamma}$.
Now, using Proposition~\ref{prop:chernoff:dependent}, letting
$X := X_1 + \cdots + X_t$, 
\[
\Pr[X > \eps t] \leq \Big( \frac{e \gamma}{\eps(1-\gamma)} \Big)^{\eps t}.
\]
Assuming $\gamma \leq \eps/4$, the above upper bound simplifies to
$\exp(-\Omega(\eps t))$. By taking a union bound over all possible
choices of $s$ and $f$ (that we trivially upper bound by $N |\cF|$),
it can be seen that, as long as $t \geq t_0$ for some choice of
$t_0 = O\Big( \tfrac{1}{\eps} \log(\tfrac{N|\cF|}{\eta})  \Big)$,
the probability that $(\enc, \dec)$ fails to satisfy 
\eqref{eqn:upperSimple:perp} for some choice of $s$ and $f$ is
at most $\eta$.

Finally, observe that the assumption $\gamma \leq \eps/4$ can
be satisfied provided that $K \leq K_0$ for some choice of
$K_0 = \Omega(\eps N/t)$, 
or equivalently, when $k \leq k_0$ for some choice of
$k \geq n - \log t - \log(1/\eps)$.
Note that for this case the proof obtains a better dependence on $\eps$
compared to \eqref{eqn:thm:upper:t} and \eqref{eqn:thm:upper:k0}.

\subsection{Proof of Theorem~\ref{thm:upperBound} for general adversaries}
\label{sec:upper:general}
\newcommand{\reveal}{\mathsf{Reveal}}

First, we present a proof sketch describing the ideas an intuitions behind
the general proof, and then proceed with a full proof of the theorem.

\subsection*{$\bullet\ $ Proof sketch}

%
In the proof for bijective adversaries, we heavily used the fact that
the tampering of each set $E(s)$ of codewords is a disjoint set of the same size.
For general adversaries; however, this may not be true. Intuitively,
since the codewords in $E(s)$ are chosen uniformly and almost independently
at random (ignoring the distinctness dependencies), the tampered distribution
$f(E(s))$ should look similar to $f(\U_n)$ for all $s$, if $|E(s)|$ is sufficiently large.
Indeed, this is what shown in the proof. The proof also adjusts the 
probability mass of $\same$ according to the fraction of the fixed
points of $f$, but we ignore this technicality for the proof sketch.

Note that the distribution $f(\U_n)$ may be arbitrary, and may assign
a large probability mass to a small set of the probability space.
For example, $f$ may assign half of the probability mass to a single
point.
We call the points in $\zo^n$ such that receive a noticeable share
of the probability mass in $f(\U_n)$ 
the \emph{heavy elements} of $\zo^n$, and fix the randomness
of the code construction so that the decoder's values at heavy elements
are revealed before analyzing each individual message $s$.
Doing so allows us to analyze each message $s$ separately
and take a union bound on various choices of $s$ as in the case of
bijective adversaries. Contrary to the bijective case; however,
the distribution $\cD_f$ is no longer entirely supported on $\perp$;
but we show that it still can be made to have a fairly small support;
roughly $\poly(n, \log|\cF|)$. 
More precisely, the proof shows non-malleability with respect to
the choice of $\cD_f$ which is explicitly defined to be the distribution
of the following random variable:
\begin{equation*} 
D := \begin{cases}
\same & \text{if $f(U_n) = U_n$}, \\
\dec(f(U_n)) & \text{if $f(U_n) \neq U_n$ and $f(U_n) \in H$}, \\
\perp & \text{otherwise,}
\end{cases}
\end{equation*}
where $H \subseteq \zo^n$ is the set of heavy elements formally
defined as
\[
H := \{ x \in \zo^n\colon \Pr[f(U_n) = x] > 1/r \},
\]
for an appropriately chosen $r = \Theta(\eps^2 t)$.

Although the above intuition is natural, 
turning it into a rigorous proof requires substantially more work than
the bijective case, and the final proof turns out to be
rather delicate even though it only uses elementary 
probability tools. The first subtlety is that revealing the
decoder at the heavy elements creates dependencies between
various random variables used in the analysis. In order to
make the proof more intuitive, we introduce a random process,
described as an algorithm $\reveal$, that gradually reveals information about
the code as the proof considers the codewords $E_1, \ldots, E_t$
corresponding to the picked message $s$. The process outputs a
list of elements in $\zo^k$, and we show that the empirical distribution 
of this list is close to the desired $\cD_f$ for all messages $s$. 

Roughly speaking, at each step $i \in [t]$ the analysis estimates the distribution of 
$\dec(f(E_i))$ conditioned on the particular realizations 
of the previous codewords. There are three subtleties that we need
to handle to make this work:
\begin{enumerate}
\item The randomness corresponding to some of the $E_i$ is previously revealed 
by the analysis and thus such codewords cannot be assumed to be uniformly
distributed any more. This issue may arise due to the revealing of the
decoder's values at heavy elements in the beginning of analysis, or
existence of cycles in the evaluation graph of the tampering function $f$.
Fortunately, it is straightforward to show that the number of such codewords
remain much smaller than $t$ with high probability, and thus they may simply be ignored.

\item At each step of the analysis, the revealed information make
the distribution of $\dec(f(E_i))$ gradually farther from the desired
$\cD_f$. The proof ensures that the expected increase at each step
is small, and using standard Martingale concentration bounds the total deviation
from $\cD_f$ remains sufficiently small with high probability at
the end of the analysis.

\item Obtaining small upper bounds (e.g., $\exp(-c n)$ for some $c < 1$)
on the probability of various bad events in the analysis (e.g., $\dec(f(\enc(s)))$
significantly deviating from $\cD_f$) is not difficult to achieve.
However, extra care is needed to ensure that the probabilities
are much smaller than $1/(2^k |\cF|)$ (to accommodate the final union bound),
where the latter may easily be doubly-exponentially small in $n$.
An exponential upper bound of $\exp(-c n)$ does not even suffice for
moderately large families of adversaries such as bit-tampering adversaries,
for which we have $|\cF| = 4^n$.
\end{enumerate}

\subsection*{$\bullet\ $ Complete proof of Theorem~\ref{thm:upperBound}}

%
%

First, observe that by construction, the minimum distance of the final
code is always greater than $\delta n$; that is, 
whenever $\dec(w_1) \neq \perp$ and $\dec(w_2) \neq \perp$ for any
pair of vectors $w_1 \neq w_2$, we have
\[
\distH(w_1, w_2) > \delta n,
\]
where $\distH(\cdot)$ denotes the Hamming distance. This is because whenever a codeword
is picked, its $\delta n$ neighborhood is removed from the sample space for the future
codewords.
Let $V$ denote the volume of a Hamming ball of radius $\delta n$. It is
well known that $V \leq 2^{nh(\delta)}$, where $h(\cdot)$ is the binary entropy function.

Fix an adversary $f \in \cF$.
We wish to show that the coding scheme $(\enc, \dec)$ defined by Construction~\ref{constr:prob}
is non-malleable with high probability for
the chosen $f$. 

Define $p_0 := \Pr[f(U_n) = U_n]$.
In the sequel, assume that $p_0 < 1$ (otherwise, there
is nothing to prove). 
For every $x \in \zo^n$, define $p(x) := \Pr[f(U_n) = x \land x \neq U_n]$. 
Observe that
\[
\sum_x p(x) = 1 - p_0.
\]
We say that a string $x \in \zo^n$ is \emph{heavy} if 
\[
p(x) > 1/r,
\]
for a parameter $r \leq t$ to be determined later.  Note that
the number of heavy strings must be less than $r$. Define
\begin{align*}
H &:= \{ x \in \zo^n\colon p(x) > 1/r \}, \\
\gamma &:= t/N, \\
\gamma' &:= tK/N.
\end{align*}
Fix the randomness of the code construction so that $\dec(x)$
is revealed for every heavy $x$. 
We will argue that no matter how
the decoder's outcome on heavy elements is decided by the randomness of the code
construction, the construction is non-malleable for every message $s$
and the chosen function $f$ with overwhelming probability.
We will then finish the proof with a union bound over all choices of $s$ and $f$.

Consider a
random variable $D$ defined over $\zo^k \cup \{\perp, \same\}$ in the following way:
\begin{equation} \label{eqn:Df}
D := \begin{cases}
\same & \text{if $f(U_n) = U_n$}, \\
\dec(f(U_n)) & \text{if $f(U_n) \neq U_n$ and $f(U_n) \in H$}, \\
\perp & \text{otherwise.}
\end{cases}
\end{equation}
For the chosen $f$, we explicitly define the distribution $\cD_f$ 
as $\cD_f := \distr(D)$.

Now, consider a fixed message $s \in \zo^k$, and
define the random variable $E_s := \enc(s)$. 
That is, $E_s$ is uniformly supported 
on the set $E(s)$ (this holds by the way that the encoder is defined).
Observe that the marginal distribution of each individual set $E(s)$
(with respect to the randomness of the code construction) is the same
for all choices of $s$, regardless of the ordering assumed by
Construction~\ref{constr:prob} on the message space $\zo^k$.

Furthermore, define the random variable $D_s$ as follows.
\begin{equation}
\label{eqn:Ds}
D_s := \begin{cases}
\same & \text{if $f(E_s) = E_s$}, \\
\dec(f(E_s)) & \text{otherwise.}
\end{cases}
\end{equation}

Our goal is to show that the distribution of $D_s$ (for the final realization
of the code) is $\eps$-close to $\cD_f$
with high probability over the randomness of the code construction.
Such assertion is quite intuitive by comparing the way the two distributions $D_s$ and
$\cD_f$ are defined. In fact, it is not hard to show that the assertion holds
with probability $1-\exp(-\Omega(n))$. However, such a bound would be insufficient to
accommodate a union bound of even moderate sizes such as $2^n$, which is needed
for relatively simple classes such as bit-tampering adversaries. More work needs
to be done to ensure that it is possible to achieve a high probability statement
with failure probability much smaller than $1/|\cF|$, which may in general 
be doubly exponentially small in $n$.

The claim below shows that closeness of $\distr(D_s)$ to $\cD_f$ would imply
non-malleability of the code.

\begin{claim}
Suppose that for every $s \in \zo^k$, we have $\distr(D_s) \approx_\eps \cD_f$ for
the choice of $\cD_f$ defined in \eqref{eqn:Df}. Then, $(\enc, \dec)$ is a non-malleable
coding scheme with error $\eps$ and a strong non-malleable coding scheme with error $2\eps$.
\end{claim}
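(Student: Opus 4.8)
The statement to prove is the Claim that if $\distr(D_s) \approx_\eps \cD_f$ for every message $s$, then $(\enc,\dec)$ is non-malleable with error $\eps$ and strongly non-malleable with error $2\eps$. The natural route is to directly unpack the two definitions (Definition~\ref{def:nmCode} and Definition~\ref{def:nmCode:strong}) and see that the hypothesis essentially hands us what we need, modulo the bookkeeping around the $\same$ symbol and the $\Copy$ operation.

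\textbf{Non-malleability with error $\eps$.} Fix a message $s$. By Definition~\ref{def:nmCode} we must exhibit a distribution over $\zo^k \cup \{\perp,\same\}$ — we take it to be exactly $\cD_f = \distr(D)$ from \eqref{eqn:Df}, which indeed depends only on $f$ — such that $\distr(\dec(f(\enc(s)))) \approx_\eps \distr(\Copy(S',s))$ where $S' \sim \cD_f$. The key observation is that $D_s$ as defined in \eqref{eqn:Ds} is precisely the ``$\same$-collapsed'' version of $\dec(f(E_s))$: whenever $f$ fixes the codeword, $D_s = \same$, and otherwise $D_s = \dec(f(E_s))$. So I would first check the identity of distributions $\distr(\Copy(D_s, s)) = \distr(\dec(f(\enc(s))))$ — this is immediate since applying $\Copy(\cdot, s)$ replaces $\same$ by $s$, exactly undoing the collapse (here one uses that $f(E_s) = E_s$ implies $\dec(f(E_s)) = \dec(E_s) = s$, by the decoding property in Definition~\ref{def:scheme}). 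Since applying a fixed (deterministic) function to two $\eps$-close distributions keeps them $\eps$-close, the hypothesis $\distr(D_s) \approx_\eps \cD_f$ gives $\distr(\Copy(D_s,s)) \approx_\eps \distr(\Copy(S',s))$, and combining with the identity yields exactly what Definition~\ref{def:nmCode} asks.

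\textbf{Strong non-malleability with error $2\eps$.} Here Definition~\ref{def:nmCode:strong} requires that for every pair of distinct messages $s_1, s_2$, the distributions $\cD_{f,s_1} = \distr(D_{s_1})$ and $\cD_{f,s_2} = \distr(D_{s_2})$ are $2\eps$-close — and note that $D_{s_i}$ here matches exactly the $D_s$ of \eqref{eqn:Ds}. This is an immediate triangle inequality: $\distr(D_{s_1}) \approx_\eps \cD_f \approx_\eps \distr(D_{s_2})$, so $\distr(D_{s_1}) \approx_{2\eps} \distr(D_{s_2})$.

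\textbf{Main obstacle.} There is no real obstacle here — this Claim is a definitional reconciliation rather than a substantive argument, and the proof is a few lines. The only thing to be careful about is the interaction between the $\same$ symbol, the $\Copy$ operator, and the decoding-correctness property: one must verify that collapsing to $\same$ and then expanding via $\Copy(\cdot, s)$ is genuinely inverse on distributions, which hinges on the fact that a fixed point of $f$ on the codeword still decodes to $s$. Once that identity is nailed down, both conclusions drop out by applying a deterministic map to $\eps$-close distributions and by the triangle inequality, respectively.
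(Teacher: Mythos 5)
Your proof is correct and follows essentially the same route as the paper's: you establish non-malleability by observing that $\Copy(\cdot,s)$ undoes the collapse to $\same$ (so $\distr(\Copy(D_s,s)) = \distr(\dec(f(\enc(s))))$, using $f(E_s)=E_s \Rightarrow \dec(f(E_s))=s$) and then apply the data-processing fact that a deterministic map preserves $\eps$-closeness, and you get strong non-malleability from the triangle inequality. This is exactly the paper's argument.
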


\begin{proof}
In order to verify Definition~\ref{def:nmCode:strong}, we need to verify that
for every distinct pair of messages $s_1, s_2 \in \zo^k$, $\distr(D_{s_1}) \approx_{2\eps}
\distr(D_{s_2})$. But from the assumption, we know that $\distr(D_{s_1})$ and 
$\distr(D_{s_2})$ are both $\eps$-close to $\cD_f$. Thus the result follows by the
triangle inequality.

It is of course possible now to use \cite[Theorem~3.1]{ref:nmc} to deduce that
Definition~\ref{def:nmCode} is also satisfied. However, for the clarity of presentation,
here we give a direct argument
that shows that non-malleability is satisfied with the precise choice of $\cD_f$
defined in $\eqref{eqn:Df}$ and error $\eps$.
Let $s \in \zo^k$, and let $E_s := \enc(s)$ and $S := \dec(f(E_s))$. 
Let $S' \sim \cD_f$ and $S'' \sim \distr(D_s)$ be sampled independently. We need to show that
\begin{equation}
\label{eqn:upper:nmc:verify}
\distr(S) \approx_{\eps} \distr(\Copy(S', s)).
\end{equation}
From the definition of $D_s$ in \eqref{eqn:Ds}, since $\dec(f(E_s)) = s$
when $f(E_s) = E_s$, we see that $\distr(\Copy(S'', s)) = \distr(\dec(f(E_s))) = \distr(S)$.
Now, since by assumption $\distr(S') \approx_\eps \distr(S'')$, it follows that
$\distr(\Copy(S', s)) \approx_\eps \distr(\Copy(S'', s))$ which proves \eqref{eqn:upper:nmc:verify}.
\end{proof}

Let the random variables $E_1, \ldots, E_t$ be the elements of $E(s)$, in the order they are sampled
by Construction~\ref{constr:prob}.

Define, for $i \in [t]$,
\[
S_i := \begin{cases}
\same & \text{if $f(E_i) = E_i$}, \\
\dec(f(E_i)) & \text{otherwise.} \\
\end{cases}
\]
We note that, no matter how the final code is realized by the randomness of the
construction, the distribution $D_s$ is precisely the empirical 
distribution of $S_1, \ldots, S_t$ as determined by the code construction.

In the sequel, for each $i \in [t]$, we analyze the distribution of
the variable $S_i$ conditioned on the values of $S_1, \ldots S_{i-1}$ and
use this analysis to prove that the empirical distribution of the
sequence $(S_1, \ldots, S_t)$ is close to $\cD_f$.

In order to understand the empirical distribution of the $S_i$, we consider
the following process $\reveal$ that considers the picked codewords
$E_1, \ldots, E_t$ in order, gradually reveals information about the code construction,
and outputs a subset of the $S_i$. We will ensure that
\begin{enumerate}
\item The process outputs a large subset of $\{S_1, \ldots, S_t\}$, and, 
\item The empirical distribution of the sequence output by the
process is close to $\cD_f$ with high probability.
\end{enumerate}
The above guarantees would in turn imply that the empirical distribution of
the entire sequence $S_i$ is also close to $\cD_f$ with high probability.
We define the process as follows.

\newcommand{\skp}{\mathsf{Skip}}
\newcommand{\lineSkpA}{\mbox{2}}
\newcommand{\lineSkpB}{\mbox{3.2.2}}
\newcommand{\lineUnveil}{\mbox{3.3.3}}

\subsubsection*{Process $\reveal$:}
\begin{description}
\item[1.] Initialize the set $\skp \subseteq [t]$ with the empty set.
Recall that the values of $\dec(w)$ for all $w \in H$ are already
revealed in the analysis, as well as $\dec(\Gamma(w))$ for those
for which $\dec(w) \neq \perp$.

\item[\lineSkpA.] For each heavy element $w \in H$, if $\dec(w) = s$,
consider the unique $j \in [t]$ such that $E_j = w$. Reveal\footnote{%
In a rigorous sense, by revealing a random variable we mean that we condition the
probability space on the event that a particular value is assumed by
the variable. For example, revealing $E_i$ means that the analysis branches to
a conditional world where the value of $E_i$ is fixed to the revealed value.
In an intuitive way, one may think of a reveal as writing constraints on the
realization of the code construction on a blackboard, which is subsequently
consulted by the analysis (in form of the random variable $\reveal_i$ that
the analysis defines to denote the information revealed by the process before stage $i$).
} $j$
and $E_j$, and add $j$ to $\skp$. 

\item[3.] For $i$ from $1$ to $t$, define \emph{the $i$th stage} as follows:

\begin{description}
\item[3.1.] If $i \in \skp$, declare a \emph{skip} and continue the loop with the next $i$. Otherwise,
follow the remaining steps.

\item[3.2.] Reveal $\Gamma(E_i)$. Note that revealing $E_i$ implies that $\dec(E_i)$ is revealed
as well, since $\dec(E_i) = s$. Moreover, recall that for any $x \in \Gamma(E_i) \setminus E_i$,
$\dec(x) = \perp$ by the code construction.

\item[3.3.] If $\dec(f(E_i))$ is not already revealed:
\begin{description}
\item[3.3.1.] Reveal $\dec(f(E_i))$.

\item[\lineSkpB.] If $\dec(f(E_i)) = s$, consider the unique $j \in [t]$ such that
$E_j = f(E_i)$. It must be that $j > i$, since $\dec(E_j)$ has not been revealed before.
Reveal $j$ and add it to $\skp$.

\item[\lineUnveil.] Declare that an \emph{unveil} has happened if
$\dec(f(E_i)) \neq \perp$. If so, reveal $\dec(f(x))$ for all $x \in \Gamma(f(E_i)) \setminus E_i$
to equal $\perp$.
\end{description}
\item[3.4.] Reveal and output $S_i$.
\end{description}
\end{description}

\newcommand{\nxt}{\mathsf{Next}}
For $i \in [t]$, we use the notation $\reveal_i$ to refer to all the information revealed
from the beginning of the process up to the time the $i$th stage begins.
We also denote by $\nxt(i)$ the least $j > i$ such that a skip does not occur
at stage $j$; define $\nxt(i) := t+1$ if no such $j$ exists, and
define $\nxt(0)$ to be the index of the first stage that is not skipped.
Moreover, for $w \in \zo^n$, we use the notation $w \in \reveal_i$ as
a shorthand to denote the event that the process $\reveal$ has revealed the
value of $\dec(w)$ at the time the $i$th stage begins.

By the way the code is constructed, the decoder's value at each given point
is most likely $\perp$. We make this intuition more rigorous and show that
the same holds even conditioned on the information reveal by the process
$\reveal$.

\begin{claim} \label{claim:upper:bottom}
For all $i \in [t]$ and any $a \in \supp(\reveal_i)$,
\[
\Pr[\dec(x) \neq \perp | (\reveal_i = a) \land (x \notin \reveal_i)] \leq \gamma'/(1-3 \gamma V).
\]
\end{claim}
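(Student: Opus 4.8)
The plan is to bound the conditional probability that $\dec(x) \neq \perp$ at a point $x$ whose decoder value has not yet been revealed by $\reveal_i$. The only way $\dec(x) \neq \perp$ is if $x$ lies in some set $E(s')$, i.e. $x$ was one of the $t$ codewords picked for some message $s'$ (possibly $s' = s$ or $s' \neq s$). So I would union-bound over all $K$ messages and all $t$ indices: the event $\dec(x) \neq \perp$ is contained in $\bigcup_{s'} \bigcup_{j \in [t]} \{E_j^{(s')} = x\}$, where $E_j^{(s')}$ is the $j$-th codeword picked for $s'$. There are at most $tK$ such events, so it suffices to show each individual event $\{E_j^{(s')} = x\}$ has conditional probability at most roughly $1/N$, and then multiply by $tK = \gamma' N$.

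**Controlling each term under conditioning.** The heart of the matter is that $\reveal_i$ has conditioned the probability space, so $E_j^{(s')}$ is no longer uniform on all of $\cN$ at the time it is picked. However, the information revealed by $\reveal_i$ is limited: it consists of the values of $\dec$ on (i) the heavy elements $H$, (ii) the codewords $E_1, \dots, E_{i-1}$ already processed for $s$ together with their radius-$\delta n$ balls, and (iii) a bounded number of additional $\delta n$-balls around tampered images $f(E_\ell)$ where an unveil occurred. Crucially, the event $x \notin \reveal_i$ means $x$ is not among any of these revealed points. When codeword $E_j^{(s')}$ is picked, it is uniform over the current residual set $\cN$, which is $\zo^n$ minus at most (number of previously picked codewords across all messages) $\cdot$ $V$ points minus the points removed by the $\reveal$-conditioning. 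The previously picked codewords number at most $tK$, contributing at most $tK V = \gamma' N V$; the $\reveal$-conditioning removes at most $|H| + t + (\text{unveils}) \le r + t + t$ balls of volume $V$, i.e. at most $3 t V = 3 \gamma N V$ points (using $r \le t$). So $|\cN| \ge N(1 - \gamma' V - 3\gamma V) \ge N(1 - 3\gamma V)$ up to constants, giving $\Pr[E_j^{(s')} = x \mid \reveal_i = a, x \notin \reveal_i] \le 1/|\cN| \le \frac{1}{N(1 - 3\gamma V)}$.

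**Assembling.** Multiplying the per-event bound by the number $tK$ of events yields
\[
\Pr[\dec(x) \neq \perp \mid (\reveal_i = a) \land (x \notin \reveal_i)] \le \frac{tK}{N(1 - 3\gamma V)} = \frac{\gamma'}{1 - 3\gamma V},
\]
which is exactly the claim. I would present the argument sequentially: first isolate the event as a union of $tK$ "collision" events, then argue that conditioning on $\reveal_i = a$ with $x \notin \reveal_i$ only shrinks the sample space by a bounded amount while keeping $x$ inside it, so each codeword-picking step lands on $x$ with probability at most $1/|\cN|$.

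**Main obstacle.** The delicate point is handling the conditioning correctly: one must verify that conditioning on $\reveal_i = a$ does not subtly bias the later codeword picks toward $x$. This is fine because, given the event $x \notin \reveal_i$, the value $\dec(x)$ is genuinely undetermined — the $\reveal$ process never consulted or constrained whether $x$ is a codeword — so each subsequent uniform pick over the residual $\cN$ (which still contains $x$) hits $x$ with probability exactly $1/|\cN|$, independently of $a$. Care is also needed to bound the number of $\delta n$-balls removed by the process (heavy elements plus processed codewords plus unveils), which is where the factor $3\gamma V$ comes from; since $r \le t$ each of these three contributions is at most $\gamma N V$. I expect the bookkeeping of exactly which points $\reveal$ removes to be the part requiring the most care, but it is routine given the explicit description of the process.
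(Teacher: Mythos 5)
Your union-bound framework --- at most $tK$ codeword-collision events $\{E_j^{(s')} = x\}$, each bounded individually and summed --- is exactly the paper's approach, but your justification of the per-event bound is incorrect. You analyze $E_j^{(s')}$ by conditioning on the picking history, i.e., you view it as uniform over the residual set $\cN$ at the moment it is sampled in the construction. That residual set has lost the $\delta n$-balls of \emph{all} codewords picked earlier in construction order --- up to $tKV = \gamma' NV$ points --- not merely the $\le 3\gamma NV$ points whose decoder value $\reveal_i$ has disclosed. Your step $N(1 - \gamma' V - 3\gamma V) \ge N(1-3\gamma V)$ ``up to constants'' is therefore false: $\gamma' V = K \cdot \gamma V$, so for any constant-rate code the term $\gamma' V$ dwarfs $3\gamma V$ by a factor of $K$ and can approach $1$. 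Carried through, your reasoning would give a bound of order $\gamma'/(1-\gamma' V)$, which does not yield the claim when $K$ is large (which is precisely the regime of interest).

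The fix --- and what the paper actually does --- is to never invoke the picking order at all. Conditioning only on $\reveal_i = a$ with $x \notin \reveal_i$: the decoder has been revealed at at most $(|H|+2(i-1))V \le (2t+r)V \le 3\gamma NV$ points. By exchangeability of sampling without replacement, the marginal distribution of any single \emph{unrevealed} codeword of $\cC$, given $\reveal_i = a$, is uniform on the $\ge N(1-3\gamma V)$ non-revealed points --- regardless of where in the picking order that codeword occurs, and with no $\gamma' V$ loss. The union bound over the at most $tK$ codewords then gives $\gamma'/(1-3\gamma V)$ directly.
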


\begin{proof}
Suppose $x \notin \reveal_i$, and 
observe that $\reveal_i$ at each step reveals at most the values
of the decoder at $2V$ points; namely, $\Gamma(E_i)$ and $\Gamma(f(E_i))$. 
Moreover, before the first stage, decoder's value is revealed at up to
$r$ heavy points and its Hamming neighborhood at radius $\delta n$.
In total, the total number of points at which decoder's value is revealed
by the information in $\reveal_i$ is at most
\[
(|H| + 2(i-1)) |V| \leq (2t+r)V \leq 3 \gamma V N.
\]
Let
\[\cC := \bigcup_s E(s) \]
be the set of all codewords of the coding scheme. Some of the elements of $\cC$
are already included in $\reveal_i$, and by assumption we know that none of
these is equal to $x$. 

The distribution of each unrevealed codeword, seen in isolation,
is uniform over the $N(1-3 \gamma V)$ remaining vectors in $\zo^n$.
Thus by taking a union bound on the probability of each such codeword
hitting the point $x$ (which is the only way to make $\dec(x) \neq \perp$, 
we deduce that
\[
\Pr[\dec(x) \neq \perp | \reveal_i = a] \leq \frac{tK}{N(1-3 \gamma V)} = \gamma'/(1-3 \gamma V). \qedhere
\]
\end{proof}

Ideally, for each $i \in [t]$ we desire to have $E_i$ almost uniformly distributed,
conditioned on the revealed information, so that the distribution of $\dec(f(E_i))$
(which is described by $S_i$ when $E_i$ does not hit a fixed point of $f$)
becomes close to $\dec(f(U_n))$. However, this is not necessarily true; for example,
when the process $\reveal$ determines the decoder's value on the heavy elements, 
the value of, say, $E_1$ may be revealed, at which point there is no hope to 
ensure that $E_1$ is nearly uniform. This is exactly what the set $\skp$ is designed for,
to isolate the instances when the value of $E_i$ is already determined by the 
prior information. More precisely, we have the following.

\begin{claim} \label{claim:upper:EiUniform}
Suppose that $i \notin \skp$ when the $i$th stage of $\reveal$ begins. Then,
for any $a \in \supp(\reveal_i)$,
\[
\distr(E_i | \reveal_i = a) \approx_{\nu} \U_n,
\]
where $\nu := (3\gamma V)/(1-3\gamma V)$.
\end{claim}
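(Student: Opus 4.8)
The plan is to show that at the start of stage $i$, if $i \notin \skp$, then $E_i$ is very close to uniform given any realization $\reveal_i = a$ of the revealed information. The key observation is that $E_i$ has \emph{not} been revealed yet: the only ways a codeword $E_j$ gets revealed by the process are (i) in step~2, when a heavy element $w$ with $\dec(w)=s$ forces us to reveal the index $j$ with $E_j = w$ and add $j$ to $\skp$; and (ii) in step~3.2.2, when $\dec(f(E_{i'})) = s$ for some earlier stage $i' < i$ forces us to reveal the $j$ with $E_j = f(E_{i'})$ and add $j$ to $\skp$. In both cases $j$ is placed in $\skp$. Hence if $i \notin \skp$, then $E_i$ has not been directly revealed, and the only constraint the analysis has imposed on $E_i$ is that it must avoid the set of points whose decoder value has already been pinned down (those points cannot be $E_i$, since revealing $E_i$ would reveal $\dec(E_i) = s \neq \perp$, contradicting that those points were revealed to decode to something already fixed — more simply, the first time each codeword of $s$ is ``touched'' is exactly the stage that is not skipped).

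First I would make precise the count of revealed points, reusing the bookkeeping from the proof of Claim~\ref{claim:upper:bottom}: the information in $\reveal_i$ pins down $\dec(\cdot)$ on at most $(|H| + 2(i-1))V \le (r + 2t)V \le 3\gamma V N$ points of $\zo^n$. Conditioned on $\reveal_i = a$ and on the fact that $i \notin \skp$ (so $E_i$ is among the unrevealed codewords), the marginal distribution of $E_i$ in isolation is uniform over the $N - (\text{number of revealed points}) \ge N(1 - 3\gamma V)$ points of $\zo^n$ that remain available. That is, $\distr(E_i \mid \reveal_i = a)$ is uniform on a subset of $\zo^n$ of size at least $N(1-3\gamma V)$. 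A uniform distribution on a set of size $m \ge N(1-\beta)$ is $\bigl(\beta/(1-\beta)\bigr)$-close to $\U_n$ in statistical distance (the missing mass is $1 - m/N \le \beta$, spread over at most $N$ points, giving total variation at most $\tfrac{\beta}{2} + \tfrac{\beta}{2(1-\beta)} \le \tfrac{\beta}{1-\beta}$); plugging in $\beta = 3\gamma V$ yields exactly $\nu = (3\gamma V)/(1-3\gamma V)$, as claimed.

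The one point requiring care — and what I expect to be the main obstacle — is justifying that ``the marginal of each unrevealed codeword is uniform over the remaining vectors'' is legitimate \emph{conditioned on the structured event} $\reveal_i = a$, rather than just on a plain prefix of the construction's randomness. The subtlety is that $\reveal$ reveals information in a data-dependent order (heavy elements first, then following the evaluation graph of $f$), so $\reveal_i = a$ is not simply ``the first few coins of Construction~\ref{constr:prob}.'' The resolution is the footnote's viewpoint: each ``reveal'' conditions the probability space on a specific value of a specific codeword or decoder-entry, and by the construction's design (codewords picked uniformly from the shrinking pool $\cN$, with the $\delta n$-balls removed) the conditional law of any not-yet-revealed codeword is still uniform over the current residual pool — and the residual pool has size at least $N$ minus the number of removed points, which we have just bounded by $3\gamma V N$. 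Thus conditioning on $\reveal_i = a$ together with $i\notin\skp$ leaves $E_i$ uniform on a large residual set, and the statistical-distance estimate above closes the argument. I would keep this step short, appealing to the already-established counting bound and the exchangeability built into the construction, rather than re-deriving it from scratch.
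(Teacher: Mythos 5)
Your proof is correct and follows essentially the same path as the paper's: observe that lines~2 and~3.3.2 of $\reveal$ are the only places a codeword can be revealed before its own stage (and both add its index to $\skp$), reuse the counting from Claim~\ref{claim:upper:bottom} to bound the revealed points by $3\gamma VN$, conclude that $E_i$ is uniform on the residual set of size at least $N(1-3\gamma V)$, and apply the uniformity-on-a-large-subset bound (Proposition~\ref{prop:uniformity}). Your extra discussion of why conditioning on the structured event $\reveal_i=a$ is legitimate is sound and simply makes explicit what the paper leaves implicit.
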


\begin{proof}
Note that, without any conditioning, the distribution of $E_i$ is exactly
uniform on $\zo^n$. If at any point prior to reaching the $i$th stage
it is revealed that $\dec(E_i) = s$, either line~\lineSkpA\ or line~\lineSkpB\ of
process $\reveal$ ensures that $i$ is added to the set $\skp$.

If, on the other hand, the fact that $\dec(E_i) = s$ has not been revealed
when the $i$th stage begins, the distribution of $E_i$ becomes uniform on
the points in $\zo^n$ that have not been revealed yet.
As in Claim~\ref{claim:upper:bottom}, the number of revealed points is
at most $(2t+r)V \leq 3 \gamma V N$. Thus, the conditional distribution $E_i$ remains
$((3\gamma V)/(1-3\gamma V))$-close to uniform by Proposition~\ref{prop:uniformity}.
\end{proof}

For each $i \in [t]$, define a random variable $S'_i \in \zo^k \cup \{\same, \perp\}$
as follows (where $U_n$ is independently sampled from $\cU_n$):
\begin{equation}
\label{eqn:Sprime}
S'_i := \begin{cases}
\same & \text{if $f(U_n) = U_n$}, \\
\dec(f(U_n)) & \text{if $f(U_n) \neq U_n \land f(U_n) \in \reveal_i$}, \\
\perp & \text{otherwise.}
\end{cases}
\end{equation}
Note that $\distr(S'_1) = \cD_f$.

Intuitively, $S'_i$ is the ``cleaned up'' version of the random variable $S_i$
\mchOK{Rephrase and comment on claim 7}
that we are interested in. As defined, $S'_i$ is an independent random variable, and
as such we are more interested in its \emph{distribution} than
value.
Observe that the distribution of $S'_i$ is randomly determined
according to the randomness of the code construction (in particular,
the knowledge of $\reveal_i$ completely determines $\distr(S'_i)$).
The variable $S'_i$ is defined so that its distribution approximates
the distribution of the actual $S_i$ conditioned on the revealed information
before stage $i$.
Formally, we can show that conditional distributions
of these two variables are (typically) similar. Namely, 

\begin{claim} \label{claim:upper:SvsSp}
Suppose that $i \notin \skp$ when the $i$th stage of $\reveal$ begins. Then,
for any $a \in \supp(\reveal_i)$,
\[
\distr(S_i | \reveal_i = a) \approx_\nu \distr(S'_i | \reveal_i = a),
\]
where $\nu := (3\gamma V+\gamma')/(1-3\gamma V)$.
\end{claim}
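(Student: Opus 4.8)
The plan is to compare $\distr(S_i \mid \reveal_i = a)$ and $\distr(S'_i \mid \reveal_i = a)$ by noting that both are (functions of) the random variable $\dec(f(\cdot))$ applied to a near-uniform input, and that the only differences come from (i) $E_i$ being near-uniform rather than exactly uniform, and (ii) the handling of the ``middle regime'' where $f(E_i)$ or $f(U_n)$ lands outside the revealed set. First I would fix $a \in \supp(\reveal_i)$ and condition everything on $\reveal_i = a$; since $i \notin \skp$, Claim~\ref{claim:upper:EiUniform} gives $\distr(E_i \mid \reveal_i = a) \approx_{\nu_0} \U_n$ with $\nu_0 := (3\gamma V)/(1-3\gamma V)$. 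Because statistical distance does not increase under (randomized) post-processing, I can apply the same map to both sides. The natural map to apply is: on input $y \in \zo^n$, output $\same$ if $f(y) = y$; output $\dec(f(y))$ if $f(y) \neq y$ and $f(y) \in \reveal_i$ (i.e. $\dec(f(y))$ has been revealed, so this is a deterministic function of $a$); and in the remaining case output ``$\dec(f(y))$'' — but here the two variables differ, since $S'_i$ outputs $\perp$ while $S_i$ outputs the actual (as yet unrevealed) value $\dec(f(y))$. So after post-processing $E_i$ through the ``$S'$-style'' map I get exactly $\distr(S'_i \mid \reveal_i=a)$ up to $\nu_0$, and I need to bound how far $\distr(S_i \mid \reveal_i=a)$ is from this.

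The key point is that the event causing a discrepancy — that $f(E_i) \notin \reveal_i$ \emph{and} $\dec(f(E_i)) \neq \perp$ — has small probability. Conditioned on $\reveal_i = a$, if $x := f(E_i) \notin \reveal_i$ then Claim~\ref{claim:upper:bottom} says $\Pr[\dec(x) \neq \perp \mid \reveal_i = a,\ x \notin \reveal_i] \leq \gamma'/(1-3\gamma V)$; one has to be slightly careful that this bound holds after also conditioning on the value of $E_i$ (equivalently $x$), but Claim~\ref{claim:upper:bottom} is stated for every realization in $\supp(\reveal_i)$ and $x$ is determined by the extra conditioning, so the bound survives — indeed the probability over the remaining code randomness that some codeword hits the fixed point $x$ is still at most $tK/(N(1-3\gamma V))$. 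Hence the total probability (over $E_i$ near-uniform and the code randomness) that $S_i$ and the $S'_i$-style map of $E_i$ disagree is at most $\nu_0 + \gamma'/(1-3\gamma V) \le (3\gamma V + \gamma')/(1-3\gamma V) = \nu$, which bounds the statistical distance by the coupling/identical-coupling argument: couple $E_i$ with a uniform $U_n$ achieving distance $\nu_0$, observe that whenever $E_i = U_n$ and $f(E_i) \in \reveal_i$ (or $f(E_i)=E_i$) the two maps agree, so the disagreement probability is at most $\nu_0$ plus the conditional probability of $\dec(f(E_i)) \neq \perp$ in the bad regime.

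I expect the main obstacle to be the bookkeeping around conditioning: verifying that Claim~\ref{claim:upper:bottom}'s bound may legitimately be applied \emph{after} further conditioning on $E_i$ (so that ``$x \notin \reveal_i$'' is an event about the specific point $x = f(E_i)$, and the residual randomness is exactly the placement of the other codewords), and making sure the near-uniformity of $E_i$ from Claim~\ref{claim:upper:EiUniform} is combined with this additive-error term correctly rather than multiplied or double-counted. Everything else is a routine data-processing / triangle-inequality argument. I would therefore (1) condition on $\reveal_i = a$; (2) invoke Claim~\ref{claim:upper:EiUniform} to replace $E_i$ by $U_n$ at cost $\nu_0$; (3) identify the unique regime where the $S_i$- and $S'_i$-maps differ and invoke Claim~\ref{claim:upper:bottom} to bound its probability by $\gamma'/(1-3\gamma V)$; (4) add the two error terms to get $\nu$; and (5) conclude by the triangle inequality for statistical distance.
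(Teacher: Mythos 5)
Your proof matches the paper's essentially step for step: both first invoke Claim~\ref{claim:upper:EiUniform} to replace $E_i$ by a uniform input at cost $(3\gamma V)/(1-3\gamma V)$, then observe that the $S_i$- and $S'_i$-maps agree except when $f(E_i)\notin\reveal_i$ and $\dec(f(E_i))\neq\perp$, and bound that regime by $\gamma'/(1-3\gamma V)$ via Claim~\ref{claim:upper:bottom}. The coupling phrasing and the paper's ``shifts at most $\gamma'/(1-3\gamma V)$ mass off $\perp$'' phrasing are two views of the same argument, and your flagged worry about conditioning on $E_i$ is exactly what the paper handles by sampling $\dec(f(E_i))$ from its conditional law and then applying Claim~\ref{claim:upper:bottom}.
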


\begin{proof}
First, we apply Claim~\ref{claim:upper:EiUniform} to ensure that
\[
\distr(E_i | \reveal_i = a) \approx_{\nu'} \U_n,
\]
where $\nu' = (3\gamma V)/(1-3\gamma V)$.
Thus we can assume that the conditional distribution of $E_i$ is exactly
uniform at cost of a $\nu'$ increase in the final estimate.

Now, observe that, conditioned on the revealed information, the way $S_i$
is sampled at stage $i$ of $\reveal$ can be rewritten as follows:
\begin{enumerate}
\item Sample $E_i \sim \U_n$.

\item If $f(E_i) = E_i$, set $S_i \leftarrow \same$.

\item Otherwise, if $f(E_i) \in \reveal_i$, set $S_i$ 
to $\dec(f(E_i))$ as determined by the revealed information.

\item Otherwise, reveal $\dec(f(E_i))$ (according to its
conditional distribution on the knowledge of $\reveal_i$) and set $S$ accordingly.
\end{enumerate}

This procedure is exactly the same as how $S'_i$ is sampled by
\eqref{eqn:Sprime}; with the difference that at the third step, 
$S'_i$ is set to $\perp$ whereas $S_i$ is sampled according to the
conditional distribution of $\dec(f(E_i))$.
However, we know by Claim~\ref{claim:upper:bottom} that
in this case, 
\[
\Pr[\dec(f(E_i)) \neq \perp | \reveal_i = a] \leq \gamma'/(1-3 \gamma V).
\]
Thus we see that $S_i$ changes the probability mass of $\perp$
in $\distr(S'_i)$ by at most $\gamma'/(1-3 \gamma V)$. The claim
follows.
\end{proof}

Recall that the distribution of $S'_1$ is the same as $\cD_f$.
However, for subsequent stages this distribution may deviate from
$\cD_f$. We wish to ensure that by the end of process $\reveal$,
the deviation remains sufficiently small.

For $i \in [t-1]$, define $\Delta_i$ as 
\[
\Delta_i := \dist( \distr(S'_{i+1}), \distr(S'_i)).
\]
where $\dist(\cdot)$ denotes statistical distance.
Note that $\Delta_i$ is a random variable that is
determined by the knowledge of $\reveal_{i+1}$ 
(recall that $\reveal_i$ determines the exact distribution of $S'_i$).
We show that the conditional values attained by this random
variable are small in expectation.

\begin{claim} \label{claim:upper:Delta}
For each $i \in [t-1]$, and all $a \in \supp(\reveal_i)$,
\begin{equation} \label{eqn:upper:Deltai}
\E[\Delta_i | \reveal_i = a] \leq \frac{2 \gamma'}{r(1-3\gamma V)}.
\end{equation}
Moreover, $\Pr[\Delta_i \leq 2/r \mid \reveal_i = a] = 1$.
\end{claim}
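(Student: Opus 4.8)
The plan is to show that passing from $\reveal_i$ to $\reveal_{i+1}$ can shift the probability mass of $\distr(S')$ at only two atoms, and then bound the two relevant probabilities. First I would dispose of the trivial case: if $a$ is such that $i\in\skp$ when stage $i$ begins (an event determined by $\reveal_i$), stage $i$ of $\reveal$ reveals nothing new, so $\reveal_{i+1}=\reveal_i$, hence $\distr(S'_{i+1})=\distr(S'_i)$ and $\Delta_i=0$; both assertions hold vacuously. So assume $i\notin\skp$ under the conditioning $\reveal_i=a$. Inspecting stage $i$, the only decoder values revealed beyond those in $\reveal_i$ sit on $\Gamma(E_i)$, on $f(E_i)$, and, when an \emph{unveil} is declared, on $\Gamma(f(E_i))$; and among all of these, only the center $E_i$ (decoding to $s$) and, in the unveil case, the center $f(E_i)$ (decoding to $\dec(f(E_i))\neq\perp$) decode to anything other than $\perp$, since every other point of $\Gamma(E_i)$ or $\Gamma(f(E_i))$ lies in the removed $\delta n$-ball of an existing codeword. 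Coupling $S'_i$ and $S'_{i+1}$ through the common independent $U_n$ --- they differ only when $f(U_n)\neq U_n$, $f(U_n)\in\reveal_{i+1}\setminus\reveal_i$, and the decoder value at $f(U_n)$ is not $\perp$ --- then gives
\[
\Delta_i \;\le\; \Pr_{U_n}[\,f(U_n)=E_i\wedge U_n\neq E_i\,] + \Pr_{U_n}[\,f(U_n)=f(E_i)\wedge U_n\neq f(E_i)\,]\cdot\mathds{1}[\text{unveil at }i] \;=\; p(E_i)+p(f(E_i))\cdot\mathds{1}[\text{unveil at }i].
\]

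For the almost-sure bound: since $i\notin\skp$, $E_i$ is not revealed before stage $i$, and since every heavy string is revealed before the first stage we get $E_i\notin H$, so $p(E_i)\le 1/r$; an unveil requires $f(E_i)\notin\reveal_i$ (the guard that $\dec(f(E_i))$ is not already revealed), so likewise $f(E_i)\notin H$ and $p(f(E_i))\le 1/r$. Hence $\Delta_i\le 2/r$ with probability $1$ under the conditioning. For the expectation I would bound the two terms separately over the refinement $\reveal_{i+1}$ given $\reveal_i=a$. Reusing the counting from the proof of Claim~\ref{claim:upper:bottom} (each unrevealed codeword, in particular $E_i$, hits a fixed vector with probability at most $1/(N(1-3\gamma V))$),
\[
\E[\,p(E_i)\mid\reveal_i=a\,]\;\le\;\frac{1}{N(1-3\gamma V)}\sum_x p(x)\;=\;\frac{1-p_0}{N(1-3\gamma V)}\;\le\;\frac{\gamma'}{r(1-3\gamma V)},
\]
the last step using $r\le t\le tK=\gamma'N$. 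For the unveil term, $\E[\,p(f(E_i))\,\mathds{1}[\text{unveil at }i]\mid\reveal_i=a\,]\le\frac1r\Pr[\text{unveil at }i\mid\reveal_i=a]$, and an unveil forces $\dec(f(E_i))\neq\perp$ with $f(E_i)\notin\reveal_i$; I cannot feed the random point $f(E_i)$ into Claim~\ref{claim:upper:bottom} directly, so I would first condition on $E_i=e$, making $f(e)$ a fixed vector, and then apply Claim~\ref{claim:upper:bottom} with $E_i=e$ additionally revealed (this enlarges the revealed set by at most $|\Gamma(e)|\le V$ points, affecting the bound only in a lower-order term), obtaining $\Pr[\dec(f(e))\neq\perp\mid\reveal_i=a,E_i=e]\le\gamma'/(1-3\gamma V)$; averaging over $e$ gives $\Pr[\text{unveil at }i\mid\reveal_i=a]\le\gamma'/(1-3\gamma V)$. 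Adding the two contributions yields $\E[\Delta_i\mid\reveal_i=a]\le\frac{2\gamma'}{r(1-3\gamma V)}$.

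The step I expect to take the most care is the identification of $\reveal_{i+1}\setminus\reveal_i$, together with the observation that revealing the Hamming balls $\Gamma(E_i)$ and $\Gamma(f(E_i))$ is ``free'' --- those points decode to $\perp$, which is exactly what $S'$ outputs at an unrevealed point --- so that only the two centers $E_i$ and $f(E_i)$ can move any mass of $\distr(S')$. The other point that needs care is conditioning on $E_i$ before invoking Claim~\ref{claim:upper:bottom} for the unveil probability: a direct union bound on $\Pr[\dec(f(E_i))\neq\perp]$ over the randomness of $E_i$ would be too weak, by roughly a factor $N/r$, because $f$ may concentrate its image onto few vectors.
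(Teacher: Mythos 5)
Your proof is correct and follows essentially the same route as the paper: you identify that skipping makes $\Delta_i=0$, that revealing $\perp$-decoding neighborhood points is statistically inert so only the centers $E_i$ and (upon unveil) $f(E_i)$ can move mass, you bound the almost-sure change by $2/r$ using $E_i,f(E_i)\notin H$, and you bound $\E[p(E_i)]$ via near-uniformity of $E_i$ and the unveil probability via Claim~\ref{claim:upper:bottom}, using $r\le tK=\gamma' N$ exactly as the paper does. The one place where you are slightly more careful than the paper is in first conditioning on $E_i=e$ before invoking Claim~\ref{claim:upper:bottom} for the unveil probability (so that $f(e)$ is a fixed point rather than random); the paper's write-up glosses over this but it is the right thing to do and does not change the bound.
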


\begin{proof}
Recall that the distribution of $S'_{i+1}$ is different from
$S'_i$ depending on the points at which the decoder's value
is revealed during stage $i$ of $\reveal$. If a skip is declared
at stage $i$, we have $\reveal_{i+1} = \reveal_i$ and thus,
$\Delta_i = 0$. Thus in the following we may assume that this is not the case.

However, observe that whenever for some $x \in \zo^n$, the decoder's
value $\dec(x)$ is revealed at stage $i$, the new information
affects the probability distribution of $S'_i$ only if $\dec(x) \neq \perp$.
This is because when $\dec(x) = \perp$, some of the probability mass
assigned by $S_i$ to $\perp$ in \eqref{eqn:Sprime} is removed and reassigned
by $S'_{i+1}$ to $\dec(x)$, which is still equal to $\perp$. Thus, changes of this type
can have no effect on the distribution of $S'_i$. We conclude that only
revealing the value of $\E_i$ and an unveil (as defined in line~\lineUnveil\ of
process $\reveal$) can contribute to
the statistical distance between $S'_i$ and $S'_{i+1}$.

Whenever an unveil occurs at stage $i$, say at point $x \in \zo^n$,
some of the probability mass assigned to $\perp$ by $S'_i$ is
moved to $\dec(x)$ in the distribution of $S'_{i+1}$. 
Since we know that $x \notin H$,
the resulting change in the distance between the two distributions
is bounded by $1/r$, \emph{no matter} what the realization of $x$ and
$\dec(x)$ are. Overall, using Claim~\ref{claim:upper:bottom},
the expected change between the two distributions contributed
by the occurrence of an unveil is upper bounded by the probability
of an unveil occurring times $1/r$, which is at most
\begin{equation}
\label{eqn:upper:perturb:a}
\frac{\gamma'/r}{1-3 \gamma V}.
\end{equation}

The only remaining factor that may contribute to an increase
in the distance between distribution of $S'_i$ and $S'_{i+1}$ 
is the revealing of $E_i$ at stage $i$.
The effect of this reveal in the statistical distance between
the two distributions is $p(E_i)$, since according to 
\eqref{eqn:Sprime} the value of $S'_{i+1}$ is determined by the
outcome of $f(U_n)$, and thus the probability mass assigned to
$\dec(E_i)$ by $S'_{i+1}$ is indeed $\Pr[f(U_n) = E_i]$.
Let $\cD_E$ be the distribution of $E_i$ conditioned on the
knowledge of $\reveal_i$. Observe that, since the values 
$\{ p(x)\colon x \in \zo^n \}$ 
defines a probability distribution
on $N$ points, we clearly have 
\begin{equation} \label{eqn:upper:pxForEi}
\sum_{x \in \supp(\cD_E)} p(x) \leq 1.
\end{equation}
On the other hand, by the assumption that 
a skip has not occurred at stage $i$,
we can deduce using the argument in Claim~\ref{claim:upper:EiUniform} that
$\cD_E$ is uniformly supported on a support of size at least $N(1-3\gamma V)$.
Therefore, using \eqref{eqn:upper:pxForEi}, 
the expected contribution to $\Delta_i$ by the
revealing of $E_i$ is (which is the expected value of $p(E_i)$) is 
at most 
\begin{equation}
\label{eqn:upper:perturb:b}
\frac{1}{N(1-3\gamma V)} \leq \frac{\gamma'/r}{(1-3\gamma V)},
\end{equation}
where the inequality uses $r \leq \gamma' N = tK$. The desired bound follows
by adding up the two perturbations 
\eqref{eqn:upper:perturb:a} and \eqref{eqn:upper:perturb:b}
considered.

Finally, observe that each of the perturbations considered above
cannot be more than $1/r$, since stage $i$ never
reveals the decoder's value on a heavy element (recall that
all heavy elements are revealed before the first stage begins
and the choices of $E_i$ that correspond to heavy elements
are added to $\skp$ when $\reveal$ begins). Thus, the conditional
value of $\Delta_i$ is never more than $2/r$.
\end{proof}

Using the above result, we can deduce a concentration bound on the
summation of the differences $\Delta_i$.

\begin{claim} \label{claim:upper:DeltaSum}
Let $\Delta := \Delta_1 + \cdots + \Delta_{t-1}$, and suppose
\begin{equation} \label{eqn:upper:assumption:a}
\frac{\gamma'}{1-3\gamma V} \leq \frac{\eps r}{32 t}.
\end{equation}
Then,
\begin{equation} \label{eqn:upper:eta0}
\Pr[ \Delta \geq \eps/8 ] \leq \exp(-\eps^2 r^2 /(2048 t)) =: \eta_0.
\end{equation}
\end{claim}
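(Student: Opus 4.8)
The plan is to treat $(\Delta_1,\dots,\Delta_{t-1})$ as a sequence of uniformly bounded random variables whose conditional means (given the past) are tiny, and apply a Hoeffding--Azuma martingale tail bound centered at the mean. First I would rewrite the two parts of Claim~\ref{claim:upper:Delta} in a convenient form: the second part says $0\le\Delta_i\le 2/r$ always, while substituting the standing hypothesis~\eqref{eqn:upper:assumption:a} into the first part gives $\E[\Delta_i\mid\reveal_i=a]\le \tfrac{2}{r}\cdot\tfrac{\eps r}{32t}=\tfrac{\eps}{16t}$ for every $i$ and every realization $a$ of $\reveal_i$. Summing over $i\le t-1$, the random quantity $M:=\sum_i\E[\Delta_i\mid\reveal_i]$ is then at most $(t-1)\tfrac{\eps}{16t}\le\tfrac{\eps}{16}$ with probability $1$ --- this is where it matters that Claim~\ref{claim:upper:Delta} holds for \emph{every} realization, not just in expectation.

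Next I would center: set $D_i:=\Delta_i-\E[\Delta_i\mid\reveal_i]$. Because $\Delta_i$ is $\reveal_{i+1}$-measurable while $\E[\Delta_i\mid\reveal_i]$ is $\reveal_i$-measurable, the partial sums of the $D_i$ form a martingale with respect to $(\reveal_{i+1})_i$, with $\E[D_i\mid\reveal_i]=0$ and $|D_i|\le 2/r$ (both $\Delta_i$ and its conditional mean lie in $[0,2/r]$). A standard Azuma--Hoeffding estimate (or, to stay self-contained, the moment-generating-function argument using $\E[e^{\theta D_i}\mid\reveal_i]\le e^{\theta^2(2/r)^2/2}$, multiplying these bounds along the filtration and optimizing over $\theta$) then yields
\[
\Pr\!\Big[\textstyle\sum_{i=1}^{t-1}D_i\ge\tau\Big]\le\exp\!\Big(-\frac{\tau^2}{2(t-1)(2/r)^2}\Big)=\exp\!\Big(-\frac{\tau^2 r^2}{8(t-1)}\Big)
\]
for every $\tau>0$.

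Finally I would combine the pieces: on the event $\{\Delta\ge\eps/8\}$ we have $\sum_i D_i=\Delta-M\ge\eps/8-\eps/16=\eps/16$ since $M\le\eps/16$ surely, so taking $\tau=\eps/16$ above and using $t-1<t$ gives $\Pr[\Delta\ge\eps/8]\le\exp(-\eps^2 r^2/(8\cdot 256\,(t-1)))\le\exp(-\eps^2 r^2/(2048\,t))=\eta_0$, as desired (the case $t=1$ being trivial since the sum is then empty). I do not expect any genuine obstacle here --- the substantive work is already in Claim~\ref{claim:upper:Delta}; the only thing to be a little careful about is keeping every bound from that claim as a sure (per-realization) inequality, so that both $|D_i|\le 2/r$ and $M\le\eps/16$ hold pointwise and the martingale concentration bound applies verbatim.
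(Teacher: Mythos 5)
Your proof is correct and follows essentially the same route as the paper's: both treat $\{\Delta_i\}$ as a supermartingale-type sequence with conditional means controlled by Claim~\ref{claim:upper:Delta}, center against those conditional means, and apply an Azuma--Hoeffding tail bound to get exactly the constant $2048$. The paper packages this by rescaling to $\Delta'_i := \Delta_i r/2 \in [0,1]$ and invoking Proposition~\ref{prop:simpleAzuma}, which internally performs the same centering-then-Azuma computation you carry out by hand.
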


\begin{proof}
For $i \in [t-1]$, define $\Delta'_i := \Delta_i r/2$, $\Delta'_0 := 0$, 
and $\Delta' := \Delta'_1 + \cdots + \Delta'_{t-1}$.
Since $\reveal_i$ determines $\Delta'_{i-1}$, by Claim~\ref{claim:upper:Delta}
we know that
\[
\E[\Delta'_i | \Delta'_0, \ldots, \Delta'_{i-1} ] \leq \nu,
\]
where $\nu := \frac{\gamma'}{1-3\gamma V} \leq \eps r/(32 t)$
In the above, conditioning on $\Delta'_0, \ldots, \Delta'_{i-1}$ instead of
$\reveal_i$ (for which Claim~\ref{claim:upper:Delta} applies), is valid in light of Proposition~\ref{prop:restriction}, since the knowledge 
of $\reveal_i$ determines $\Delta'_0, \ldots, \Delta'_{i-1}$.

Moreover, again by the Claim~\ref{claim:upper:Delta}, we know that
the $\Delta'_i$ are between $0$ and $1$.
Using Proposition~\ref{prop:simpleAzuma}, it follows that
\[
\Pr[ \Delta \geq \eps/8 ] = \Pr[ \Delta' \geq \frac{\eps r}{16 t}\cdot t] 
\leq \eta_0. \qedhere
\]
\end{proof}

Next, we prove a concentration bound for the total number of unveils
that can occur in line~\lineUnveil\ of process $\reveal$.

\begin{claim} \label{claim:totalUnvel}
Let $u$ be the total number of unveils that occur in process $\reveal$.
Assuming
$
\gamma'/(1-3 \gamma V) \leq \eps/8
$ (which is implied by \eqref{eqn:upper:assumption:a}), we have
\[
\Pr[ u \geq \eps t/4 ] \leq \exp(-\eps^2 t/128) \leq \eta_0.
\]
\end{claim}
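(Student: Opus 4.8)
We want a concentration bound: the total number $u$ of unveils is at most $\eps t/4$ except with probability $\exp(-\eps^2 t/128) \le \eta_0$.

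The plan is to set up a martingale-type argument analogous to the one already used in Claim~\ref{claim:upper:DeltaSum}, since an ``unveil'' at stage $i$ is exactly the event $\dec(f(E_i)) \neq \perp$ within line~\lineUnveil, and Claim~\ref{claim:upper:bottom} already gives us a uniform upper bound of $\gamma'/(1-3\gamma V)$ on the conditional probability of this event given any realization of $\reveal_i$. Concretely, first I would define indicator random variables $Y_1, \ldots, Y_t \in \zo$, where $Y_i = 1$ iff an unveil is declared at stage $i$ (so $Y_i = 0$ automatically whenever $i \in \skp$, i.e.\ when stage $i$ is skipped, or whenever $\dec(f(E_i))$ was already revealed). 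Then $u = Y_1 + \cdots + Y_t$. Next I would observe that for each $i$, conditioned on any realization of $\reveal_i$ (which in particular records whether $i \in \skp$ and whether $\dec(f(E_i))$ is already revealed), we have $\Pr[Y_i = 1 \mid \reveal_i = a] \le \gamma'/(1-3\gamma V) \le \eps/8$: if stage $i$ is skipped or $\dec(f(E_i))$ is already in $\reveal_i$ the probability is $0$, and otherwise Claim~\ref{claim:upper:bottom} applied to the point $x = f(E_i)$ (which at that moment satisfies $x \notin \reveal_i$) gives the bound, using that the event ``an unveil happens'' is precisely $\dec(f(E_i)) \neq \perp$.

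Then, exactly as in Claim~\ref{claim:upper:DeltaSum}, I would pass from conditioning on $\reveal_i$ to conditioning on $Y_1, \ldots, Y_{i-1}$ via Proposition~\ref{prop:restriction} (since $\reveal_i$ determines $Y_1, \ldots, Y_{i-1}$), obtaining $\Pr[Y_i = 1 \mid Y_1, \ldots, Y_{i-1}] \le \eps/8$ for every realization. Finally I would apply the multiplicative Chernoff bound for dependent variables (Proposition~\ref{prop:chernoff:dependent}, already invoked in the bijective case) with $X = u = \sum_i Y_i$ and threshold $\eps t/4 = 2 \cdot (\eps t/8)$, i.e.\ twice the upper bound on the expectation, which yields a bound of the form $\exp(-c \eps t)$; a routine calculation with the constants gives $\Pr[u \ge \eps t/4] \le \exp(-\eps^2 t/128)$. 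Comparing this with the definition $\eta_0 = \exp(-\eps^2 r^2/(2048 t))$ from \eqref{eqn:upper:eta0} and using $r \le t$, one checks $\eps^2 t/128 \ge \eps^2 r^2/(2048 t)$, so $\exp(-\eps^2 t/128) \le \eta_0$, completing the proof.

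The only mildly delicate point — and the step I would be most careful about — is the bookkeeping of \emph{when} $Y_i$ is forced to be $0$ versus when Claim~\ref{claim:upper:bottom} genuinely applies, so that the conditional-expectation bound $\eps/8$ holds uniformly over all realizations of $\reveal_i$; everything else is a mechanical repetition of the Chernoff/Azuma machinery already deployed earlier in the section, together with a constant-chasing verification that $\exp(-\eps^2 t/128) \le \eta_0$.
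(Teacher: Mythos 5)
Your overall strategy is the right one and matches the paper's: introduce indicators $Y_i$ (the paper calls them $X_i$) for ``unveil at stage $i$,'' use Claim~\ref{claim:upper:bottom} to bound $\Pr[Y_i=1\mid \reveal_i]\le \gamma'/(1-3\gamma V)\le\eps/8$, transfer this to conditioning on $Y_0,\ldots,Y_{i-1}$ via Proposition~\ref{prop:restriction}, and then apply a concentration inequality for a sub-martingale of indicators. The bookkeeping you flag as delicate (skipped stages and already-revealed $\dec(f(E_i))$ forcing $Y_i=0$) is exactly the point the paper also emphasizes, and you handle it correctly.

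The one real gap is the choice of concentration inequality in the last step. You invoke Proposition~\ref{prop:chernoff:dependent} with $c=2$ (threshold $\eps t/4$, expectation bound $\eps t/8$). That proposition gives $\Pr[X>cnp]\le (e/c)^{cnp}$, and with $c=2$ the base is $e/2>1$, so the bound is $\ge 1$ and says nothing; it only begins to decay once $c>e$, which the hypothesis $\gamma'/(1-3\gamma V)\le\eps/8$ does not guarantee. The proposition you want is Proposition~\ref{prop:simpleAzuma}: with $\gamma=\eps/8$ and $\delta=\eps/4$ it gives
\[
\Pr\Bigl[\textstyle\sum_i Y_i \ge \eps t/4\Bigr] \le \exp\bigl(-t(\eps/4-\eps/8)^2/2\bigr) = \exp(-\eps^2 t/128),
\]
which is precisely the claimed bound, and this is what the paper uses. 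Your final comparison $\exp(-\eps^2 t/128)\le\eta_0$ (via $r\le t$, hence $16t^2\ge r^2$) is correct.
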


\begin{proof}
Let $X_1, \ldots, X_t$ be indicator random variable such that $X_i = 1$
iff an unveil occurs at stage $i$, and let $X_0 := 0$.
Recall that an unveil can only occur at a stage that is not skipped.
Thus, if $i \in [t]$ when the $i$th stage begins, we can deduce that
$X_i = 0$.

Consider $i \in [t]$ such that $i \notin \skp$ when the $i$th stage begins.
An unveil occurs when $\dec(f(E_i)) \notin \reveal_i$. In this case,
by Claim~\ref{claim:upper:bottom}, we get that
\[
\Pr[\dec(f(E_i)) \neq \perp | \reveal_i] \leq \gamma'/(1-3 \gamma V).
\]
Since $\reveal_i$ determines all the revealed information in each prior
stage, and in particular the values of $X_0, \ldots, X_{i-1}$, we can
use Proposition~\ref{prop:restriction} to deduce that
\[
\Pr[X_i = 1 | X_0, \ldots, X_{i-1}] \leq \gamma'/(1-3 \gamma V).
\]
Finally, Proposition~\ref{prop:simpleAzuma} derives the desired
concentration bound on the number of unveils, which is $X_1 + \cdots + X_t$.
\end{proof}

We are now ready to wrap up the proof and show that with overwhelming probability,
the empirical distribution of $S_1, \ldots, S_t$ is $\eps$-close to $\cD_f$.

Suppose that process $\reveal$ outputs a subset of the $S_i$. Let $T \subseteq [t]$
be the set of indices $i$ such $\reveal$ outputs $S_i$ in the end of the $i$th stage.
Note that $T = [t] \setminus \skp$, where $\skp$ denotes the skip set when $\reveal$ terminates.
Observe that $|\skp|$ is at most the total number of unveils occurring at line
\lineUnveil\ of $\reveal$ plus $r$ (which upper bounds the number of heavy elements in $H$). 
Thus, using Claim~\ref{claim:totalUnvel} we see that, assuming \eqref{eqn:upper:assumption:a},
\begin{equation}
\label{eqn:upper:skpSize}
\Pr[t-|T| \geq r+\eps t/4] \leq \eta_0.
\end{equation}
Let $\delta_i$ for $i \in [t]$ denote the statistical distance between
$S'_i$ and $\cD_f$. We know that $\delta_i$ is a random variable depending
on $\reveal_i$. Thus, the value of $\delta_i$ becomes known to a particular fixed
value conditioned on the outcome of every $\reveal_j$, $j \geq i$.
Define $\delta_0 := \max_i \delta_i$, which is a random variable that becomes
revealed by the knowledge of $\reveal_t$ in the end of the process.

Using Claim~\ref{claim:upper:SvsSp}, we thus know that for any $a \in \supp(\reveal_i)$
and $i \in T$,
\[
\distr(S_i | \reveal_i = a) \approx_{\nu_0+\delta_0} \cD_f,
\]
where \[\nu_0 := (3\gamma V+\gamma')/(1-3\gamma V).\]

Let $\cS$ denote the empirical distribution of $\{S_i\colon i \in T\}$,
and define $S_0 := \perp$.
From the above conclusion, using Proposition~\ref{prop:restriction} we can now write,
for $i \in T$,
\[
\distr(S_i | (S_j\colon j \in T \cap \{ 1, \ldots, i-1 \}) \approx_{\nu_0+\delta_0} \cD_f.
\]
Recall that $|\supp(\cD_f)| \leq r+2$. 
Assuming that
\begin{equation} \label{eqn:upper:assumption:b}
\nu_0 + \delta_0 \leq \eps/4,
\end{equation}
Proposition~\ref{lem:distrLearning:dependent} implies (after simple manipulations) that
with probability $1-\eta_1$, where
\begin{equation}
\label{eqn:upper:eta1}
\eta_1 \leq 2^{r+4-\Omega(\eps^2 |T|)},
\end{equation}
$\cS$ is $(\eps/2)$-close to $\cD_f$.

Recall that
$
\distr(S'_1) = \cD_f.
$
Using the triangle inequality for statistical distance, for every $i \in [t]$ we can write
\[
\dist(S'_i, \cD_f) = \dist(S'_i, S'_1) \leq \Delta_1 + \cdots + \Delta_{i-1} \leq \Delta,
\]
and thus deduce that
$
\delta_0 \leq \Delta.
$
Recall that by Claim~\ref{claim:upper:DeltaSum}, we can ensure that, assuming
\eqref{eqn:upper:assumption:a},
$\Delta \leq \eps/8$ (and thus, $\delta_0 \leq \eps/8$) with probability at least $1-\eta_0$. 
Thus under the assumption that
\begin{equation} \label{eqn:upper:assumption:c}
\nu_0 \leq \eps/8,
\end{equation}
and \eqref{eqn:upper:assumption:a}, which we recall below
\[
\frac{\gamma'}{1-3\gamma V} \leq \frac{\eps r}{32 t},
\]
we can ensure that $\nu_0 + \delta_0 \leq \eps/4$ with probability at least $1-\eta_0$.
Moreover, conditioned on the event $\nu_0 + \delta_0 \leq \eps/4$ (recall that $\delta_0$ 
is a random variable), we have already demonstrated that
with probability at least $1-\eta_1$, $\cS$ is $(\eps/2)$-close to $\cD_f$.
After removing conditioning on the bound on $\delta_0$, we may deduce that overall (under the assumed
inequalities \eqref{eqn:upper:assumption:a} and \eqref{eqn:upper:assumption:c}), with probability
at least $1-O(\eta_0 + \eta_1)$, 
\[
\cS \approx_{\eps/2} \cD_f,
\]
which in turn, implies that the empirical distribution of $S_1, \ldots, S_t$
becomes $\eps'$-close to uniform, where
\[
\eps' := \eps/2 + (1-|T|/t).
\]
Finally, we can use \eqref{eqn:upper:skpSize} to ensure that
(assuming \eqref{eqn:upper:assumption:a}), $\eps' \leq \eps$ and $|T|/t \geq 1-\eps/2$ with probability
at least $1-O(\eta_0+\eta_1)$ as long as
\begin{equation} \label{eqn:upper:assumption:d}
r \leq \eps t/4.
\end{equation}
By comparing \eqref{eqn:upper:eta1} with \eqref{eqn:upper:eta0}, we also deduce that
$\eta_1 = O(\eta_0)$ (and also that \eqref{eqn:upper:assumption:d} holds) as long as $r \leq r_0$ for some
\begin{equation} \label{eqn:upper:r0}
r_0 = \Omega(\eps^2 t).
\end{equation}

Altogether, we arrive at the conclusion that under assumptions 
\eqref{eqn:upper:assumption:a}, 
\eqref{eqn:upper:assumption:c}, and by taking
$r := r_0$, with probability at least $1-O(\eta_0)$, 
\[
(\text{empirical distribution of $(S_1, \ldots, S_t)$}) \approx_\eps \cD_f,
\]
which ensures the required non-malleability condition for message $s$
and tampering function $f$. By taking a union bound over all possible choices
of $s$ and $f$, the probability of failure becomes bounded by
\[
O(\eta_0 K |\cF|) =: \eta_2.
\]

We can now ensure that $\eta_2 \leq \eta$ for the chosen value for $r$ by taking
$t \geq t_0$ for some
\begin{equation} \label{eqn:T0}
t_0 = O\left( \frac{1}{\eps^6} \Big(\log\frac{|\cF|N}{\eta} \Big) \right).
\end{equation}

Furthermore, in order to satisfy assumptions \eqref{eqn:upper:assumption:a}, 
\eqref{eqn:upper:assumption:c}, and the requirement $tKV \leq 1$ which is
needed to make the construction possible,
it suffices to have $K \leq K_0$ for some
\[
K_0 = \Omega(\eps^3 N/(tV)).
\]
Using the bound $V \leq 2^{n h(\delta)}$, where $h(\cdot)$ is the binary
entropy function, and taking the logarithm of both sides, we see that
it suffices to have $k \leq k_0$ for some
\[
k_0 \geq n(1-h(\delta)) - \log t - 3 \log(1/\eps) - O(1).
\]

This concludes the proof of Theorem~\ref{thm:upperBound}.

\subsection{Efficiency in the random oracle model}
\label{sec:upper:efficiency}

One of the main motivations of the notion of non-malleable codes proposed
in \cite{ref:nmc} is the application for \emph{tamper-resilient security}.
In this application, a stateful consists of a public functionality and a
private state $s \in \zo^k$. The state is stored in form of its non-malleable
encoding, which is prone to tampering by a family of adversaries. It is shown
in \cite{ref:nmc} that the security of the system with encoded private state 
can be guaranteed (in a naturally defined sense) provided that the distribution
$\cD_f$ related to the non-malleable code is efficiently samplable. In light
of Remark~\ref{rem:efficiency}, efficient sampling of $\cD_f$ can be assured
if the non-malleable code is equipped with an efficient encoder and decoder.

Although the code described by Construction~\ref{constr:prob} may require
exponential time to even describe, it makes sense to consider efficiency
of the encoder and the decoder in the \emph{random oracle model}, where
all involved parties have oracle access to a shared, exponentially long,
random string. The uniform decoder construction of \cite{ref:nmc} is shown to be
efficiently implementable in the random oracle model in an 
\emph{approximate} sense (as long as all
involved parties query the random oracle a polynomial number of times),
assuming existence of an efficient algorithm implementing a
uniformly random permutation $\Pi$ and its inverse $\Pi^{-1}$.

We observe that Construction~\ref{constr:prob}, for the distance parameter
$\delta = 0$ (which is what needed for strong non-malleability as originally defined
in \cite{ref:nmc}) can be \emph{exactly} implemented efficiently 
(without any further assumptions on boundedness of the access to the random
oracle) assuming access to a uniformly random permutation and its
inverse (i.e., the so-called ideal-cipher model). 
This is because our code is designed so that the codewords are picked
uniformly at random and without replacement. More precisely,
the encoder, given message $s \in \zo^k$, can sample a uniformly
random $i \in [t]$, and output $\Pi(s, i)$, where $(s, i)$ is
interpreted as an element of $\zo^n$ (possibly after padding).

As noted in \cite{ref:nmc}, efficient approximate implementations of 
uniformly random permutations exist in the random oracle model. In particular,
\cite{ref:CPS08} show such an approximation with security $\poly(q)/2^n$,
where $q$ is the number of queries to the random oracle.

\section{A Monte Carlo construction for computationally bounded adversaries}
\label{sec:MC}

An important feature of Construction~\ref{constr:prob} is that the proof
of non-malleability, Theorem~\ref{thm:upperBound}, only uses limited independence
of the permutation defining the codewords $E(s)$ corresponding to each
message. This is because the proof analyzes the distribution of
$\dec(f(\enc(s)))$ for each individual message separately, and then
takes a union bound on all choice of $s$.

More formally, below we show that Theorem~\ref{thm:upperBound} holds for a 
broader range of code constructions than the exact Construction~\ref{constr:prob}.

\begin{defn}[$\ell$-wise independent schemes]
Let $(\enc, \dec)$ be any randomized construction of a coding scheme with block length
$n$ and message length $k$. For each $s \in \zo^k$, define $E(s) := \supp(\enc(s))$
and let $t_s := |\supp(\enc(s))|$.
We say that the construction is $\ell$-wise independent if the following are satisfied.
\begin{enumerate}
\itemsep=0ex
\item For any realization of $(\enc, \dec)$, the distribution of $\enc(s)$ (with respect 
to the internal randomness of $\enc$)
is uniform on $\supp(\enc(s))$.

\item The distribution of the codewords defined by the construction is $\ell$-wise independent.
Formally, we require the following.
Let $\cC := \bigcup_{s \in \{0,1\}^k} \supp(\enc(s))$. 
Suppose the construction can be described
by a deterministic function\footnote{As an example, in Construction~\ref{constr:prob},
all the values $t_s$ are equal to the chosen $t$, and moreover, one can take
$E(s, i, \mathcal{O}) = \Pi(s, i)$, where $\Pi\colon \zo^k \times [2^{n-k}] \to \zo^n$
is a uniformly random bijection defined by the randomness of $\mathcal{O}$.
}
$E\colon \zo^k \times \mathds{N} \times \mathds{N} \to \zo^n$ such that
for a bounded random oracle $\mathcal{O}$ over $\mathds{N}$ (describing the random bits used
by the construction), the sequence
\[
(E(s, i, \mathcal{O}))_{s \in \zo^k, i \in [t_s]}
\]
enumerates the set $\cC$. Moreover, for any set of $t$ indices $S = \{ (s_j, i_j)\colon j \in [\ell], 
s_j \in \zo^k, i_j \in [t_s]\}$,
we have 
\[ \distr(E(s_1, i_1, \mathcal{O}), \ldots, E(s_{\ell}, i_{\ell}, \mathcal{O})) = \distr(\Pi(1), \ldots, \Pi(\ell)) \]
for a uniformly random bijection $\Pi\colon [2^n] \to \zo^n$.
\end{enumerate}
\end{defn}
\begin{lem} \label{lem:MC:twise}
Let $(\enc, \dec)$ be any randomized construction of a coding scheme with block length
$n$ and message length $k$. For each $s \in \zo^k$, define $E(s) := \supp(\enc(s))$.
Suppose that for any realization of $(\enc, \dec)$, and for every $s_1, s_2 \in \zo^k$,
we have
\begin{enumerate}
\item $|E(s_1)| \geq t_0$, where $t_0$ is the parameter defined in Theorem~\ref{thm:upperBound}.

\item $|E(s_2)| = O(|E(s_1)|)$.
\end{enumerate}
Moreover, suppose that $k \leq k_0$, for $k_0$ as in Theorem~\ref{thm:upperBound}.
Let $t := \max_s |E(s)|$. 
Then, assuming that the construction is $(3t)$-wise independent, the conclusion of
Theorem~\ref{thm:upperBound} for distance parameter $\delta = 0$ 
holds for the coding scheme $(\enc, \dec)$.
\end{lem}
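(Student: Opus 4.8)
The plan is to show that the proof of Theorem~\ref{thm:upperBound} for $\delta = 0$ never uses the full strength of Construction~\ref{constr:prob}, only two structural features: that each individual codeword is marginally uniform on $\zo^n$ and that every bounded collection of at most $3t$ codewords is jointly distributed exactly like $3t$ distinct uniform samples drawn without replacement. I would begin by auditing where randomness of the code enters the argument in Section~\ref{sec:upper:general}. The key observations used there are: (i) the marginal of each $E(s)$ is the same across messages (used to define $\cD_f$ via $S'_1$); (ii) Claim~\ref{claim:upper:bottom}, which bounds $\Pr[\dec(x)\neq\perp\mid\reveal_i{=}a, x\notin\reveal_i]$ by $\gamma'/(1-3\gamma V)$; and (iii) Claim~\ref{claim:upper:EiUniform}, which says $E_i$ conditioned on $\reveal_i$ is $\nu$-close to uniform when $i\notin\skp$. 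For $\delta=0$ we have $V=1$, so $\Gamma(w)=\{w\}$ and the only revealed points at each stage are $E_i$ and $f(E_i)$, meaning at most $2(i-1)+|H|+ (\text{at most }t \text{ codewords on heavy elements}) \le 2t+r$ codewords-and-points are ever constrained before stage $i$.

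The main step is to verify that $(3t)$-wise independence suffices to reprove Claims~\ref{claim:upper:bottom} and~\ref{claim:upper:EiUniform} in this more general setting. For Claim~\ref{claim:upper:bottom}: the event $\dec(x)\neq\perp$ for an unrevealed $x$ happens only if some codeword in $\cC$ equals $x$; conditioning on $\reveal_i=a$ fixes the locations of at most $2t+r\le 3t$ codewords, and for any further single codeword $E'$ not among them, $(3t)$-wise independence guarantees that $(\text{the revealed codewords}, E')$ has exactly the distribution of $3t$-plus-one distinct uniform samples without replacement, so $E'$ conditioned on the revealed ones is uniform on the $N - (\text{number of revealed points})\ge N(1-3\gamma V)$ remaining vectors; a union bound over the at most $tK$ codewords yields the same estimate $\gamma'/(1-3\gamma V)$. (Here one must check $2t+r+1\le 3t$, which follows from $r\le r_0 = \Omega(\eps^2 t)$ for $t$ large enough, i.e. $t\ge t_0$.) For Claim~\ref{claim:upper:EiUniform}: again conditioning on $\reveal_i$ constrains fewer than $3t$ codewords; if $\dec(E_i)=s$ had been revealed then $i\in\skp$ by lines~\lineSkpA\ or~\lineSkpB; otherwise $E_i$ is one more codeword in a $(3t)$-wise independent family conditioned on at most $3t-1$ others, hence uniform on the unrevealed vectors, giving the same $(3\gamma V)/(1-3\gamma V)$ bound via Proposition~\ref{prop:uniformity}. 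Property~(i) of the lemma's hypothesis (a uniform conditional distribution of $\enc(s)$ on $\supp(\enc(s))$) is exactly what lets us keep the definition $E_s := \enc(s)$ uniform on $E(s)$, and the exact $(3t)$-wise joint distribution being that of distinct uniform samples means all the marginal statements (same $\cD_f$ across messages, etc.) carry over verbatim.

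Once these two claims are reestablished, the remainder of Section~\ref{sec:upper:general} --- the definitions of $S_i$, $S'_i$, $\Delta_i$, Claims~\ref{claim:upper:SvsSp}, \ref{claim:upper:Delta}, \ref{claim:upper:DeltaSum}, \ref{claim:totalUnvel}, and the final concentration-and-union-bound wrap-up --- goes through unchanged, since those steps manipulate only the random variables $S_i, S'_i$ and the revealed information, invoking Claims~\ref{claim:upper:bottom} and~\ref{claim:upper:EiUniform} as black boxes. I would also note the bookkeeping needed to handle unequal $|E(s)|$: the analysis is run for each message $s$ separately with $t_s := |E(s)|$ codewords; hypothesis~1 ensures $t_s\ge t_0$ so all the concentration bounds have enough samples, and hypothesis~2 ($t_s = O(t_{s'})$ for all pairs, equivalently $t = \max_s t_s = O(\min_s t_s)$) ensures that the parameters $\gamma = t/N$, $\gamma' = tK/N$, and $r = \Theta(\eps^2 t)$ computed with the global $t$ are still compatible (up to constants absorbed in the $O(\cdot)$ of $t_0$) with the per-message analysis run at scale $t_s$; the constraint $k\le k_0$ with the global $t$ then makes $\gamma', \gamma V$ small enough uniformly. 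The final union bound is again over all $K|\cF|$ pairs $(s,f)$.

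The main obstacle I expect is not any single calculation but the careful accounting of exactly which collections of codewords the process $\reveal$ constrains and confirming the bound stays below $3t$ throughout --- in particular making sure that the heavy-element preprocessing (which can reveal up to $|H|<r$ codewords) plus the at most $2(t-1)$ codewords touched in the main loop, plus the one fresh codeword being analyzed, never exceeds $3t$, and that this is the right independence parameter rather than, say, $2t$ or $(2+o(1))t$. This is a matter of tracking that $r\le r_0 = \Omega(\eps^2 t) \le t$ and $2t + r + 1 \le 3t$ for the relevant regime of $t$; it is routine but is the place where an off-by-a-constant-factor slip would invalidate the stated $(3t)$-wise bound, so it warrants explicit verification.
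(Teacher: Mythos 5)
Your proposal is correct and follows essentially the same strategy as the paper's proof: argue that the analysis of Theorem~\ref{thm:upperBound} treats each message $s$ in isolation, count the codeword locations constrained by the process $\reveal$ (at most two per stage over $\le t$ stages, at most $r < t$ from the heavy-element preprocessing, plus one fresh codeword for the union bound in Claim~\ref{claim:upper:bottom}), verify this stays at or below $3t$, and invoke hypotheses 1 and 2 to absorb the variation in $|E(s)|$ into the constants of $t_0$, $r_0$, and assumptions \eqref{eqn:upper:assumption:a} and \eqref{eqn:upper:assumption:c}. Your audit is in fact somewhat more explicit than the paper's (which largely asserts the carry-over without re-proving Claims~\ref{claim:upper:bottom} and~\ref{claim:upper:EiUniform}), and the only small slip is notational: writing ``$r \le r_0 = \Omega(\eps^2 t) \le t$'' conflates a lower bound with an upper bound --- the actual justification is that one chooses $r := r_0 = \Theta(\eps^2 t)$, which is then $\le t$ for the regime of interest.
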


\begin{proof}
We argue that the proof of Theorem~\ref{thm:upperBound} holds without any
technical change if 
\begin{enumerate}
\item The codewords in $\supp(\enc(\cU_k))$ are chosen not fully independently but
$(3t)$-wise independently, and 

\item Each set $E(s)$ is not necessarily of exact size $t$ but of size
at least $t_0$ and $\Theta(t)$. 
\end{enumerate}
The key observation to be made is that the proof analyzes each individual
message $s \in \zo^k$ separately, and then applies a union bound on all choices
of $s$. Thus we only need sufficient independence to ensure that the
view of the analysis on each individual choice of the message is
statistically the same as the case where the codewords are chosen fully independently.

Observe that the bulk of the information about the code looked up by the analysis
for analyzing each individual message is contained in the random
variable $\reveal_{t+1}$ defined in the proof of Theorem~\ref{thm:upperBound}, 
that is defined according to how the process $\reveal$ evolves.
Namely, $\reveal_{t+1}$ summarizes all the information revealed about the
code by the end of the process $\reveal$.

For a fixed message $s \in \zo^n$ the process $\reveal$ iterates for
$|E(s)| \leq t$ step.
At each step, the location of at most two codewords in $\supp(\enc(\U_k))$
is revealed. Moreover, before the process starts, the values of the
decoder on the heavy elements in $H$, which can correspond to less than
$t$ codewords, are revealed by the process. The only other place
in the proof where an independent codeword is required is the union bound
in the proof of Claim~\ref{claim:upper:bottom}, which needs another
degree of independence. Altogether, we conclude that the proof of
Theorem~\ref{thm:upperBound} only uses at most $3t$ degrees of independence
in the distribution of the codewords picked by the construction. 

Moreover, for each message $s$, the analysis uses the fact that 
$|E(s)| \geq t_0$ to ensure that the code does not satisfy
non-malleability for the given choice of $s$ and tampering function
remains below the desired level. Since $|E(s)|$ for different values of $s$
are assumed to be within a constant factor of each other, the requirement
\eqref{eqn:upper:r0} may also be satisfied by an appropriate
choice of the hidden constant.
Finally, using the fact that $\max_s |E(s)| = O(\min_s |E(s)|)$,
we can also ensure that assumptions \eqref{eqn:upper:assumption:a}, and
\eqref{eqn:upper:assumption:c} can be satisfied for appropriate choices of
the hidden constants in asymptotic bounds.
\end{proof}

In order to implement an efficient $\ell$-wise independent coding scheme, we use
the bounded independence property of polynomial evaluations over finite fields. 
More precisely, we consider the coding scheme given in Construction~\ref{constr:monte}.

The advantage of using the derandomized Monte Carlo construction is that the number of
random bits required to describe the code is dramatically reduced from
$O(tnK)$ bits (which can be exponentially large if the rate of the code
is $\Omega(1)$) to only $O(tn)$ bits, which is only polynomially large
if $t = \poly(n)$. In order to efficiently implement the derandomized construction,
we use bounded independence properties of polynomial evaluation. Using known
algorithms for finite field operations and root finding, the implementation
can be done in polynomial time.

\newcommand{\encmc}{{\mathsf{EncMC}}}
\newcommand{\decmc}{{\mathsf{DecMC}}}
\begin{constr} 
\caption{The Monte Carlo Construction.}
  \begin{itemize}
  \item {\it Given: } Integer parameters $0 < k \leq n$ and integer $t > 1$
  which is a power of two. Let $b := \log(2t)$ and $m := n-k-b$.  

  \item {\it Output: } A coding scheme $(\encmc, \decmc)$ of block length $n$
  and message length $k$.

  \item {\it Randomness of the construction: } A uniformly random polynomial
  $P \in \F_{2^n}[9t-1]$.
  
  \item {\it Construction of $\encmc$: } 
  Given $s \in \zo^k$, 
  \begin{enumerate}
  \item Initialize a set $E \subseteq \zo^n$ to the empty set.
  
  \item For every $z \in \zo^b$,
  \begin{enumerate}
  \item Construct a vector $y := (s, 0^m, z) \in \zo^n$ and regard
  it as an element of $\F_{2^n}$. 
  
  \item Solve $P(X) = y$, and add the set
  of solutions (which is of size at most $9t-1$) to $E$.
  \end{enumerate}
  
  \item Output a uniformly random element of $E$.
\end{enumerate}   

  \item {\it Construction of $\decmc$: } 
  Given $x \in \zo^n$, interpret $x$ as an element of $\F_{2^n}$, and let
  $y := P(x)$, interpreted as a vector $(y_1, \ldots, y_n) \in \zo^n$. 
  If $(y_{k+1}, y_{k+2}, \ldots, y_{k+m}) = 0^m$, output $(y_1, \ldots, y_k)$.
  Otherwise, output $\perp$.
    
  \end{itemize}
\label{constr:monte}
\end{constr}

\begin{lem} \label{lem:MC}
Consider the pair $(\encmc, \decmc)$ defined in Construction~\ref{constr:monte}.
For every $\eta > 0$, there is a $t_0 = O(n+\log(1/\eta))$ such that for every $t \geq t_0$
(where $t$ is a power of two),
with probability at least $1-\eta$ the following hold.
\begin{enumerate}
\itemsep=0ex
\item $(\encmc, \decmc)$ is a $(9t)$-wise independent coding scheme.
\item For all $s \in \zo^k$, $|\supp(\encmc(s))| \in [t, 3t]$.
\end{enumerate}
\end{lem}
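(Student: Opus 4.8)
The plan is to derive both parts from the single fact that, for a uniformly random polynomial $P$ of degree at most $9t-1$ over $\F_{2^n}$, the evaluations $P(x_1),\dots,P(x_\ell)$ at any $\ell\le 9t$ distinct points are mutually independent and uniform over $\F_{2^n}$. For $s\in\zo^k$ and $z\in\zo^b$ write $T_{s,z}:=(s,0^m,z)$, regarded as an element of $\F_{2^n}$, and put $T_s:=\{T_{s,z}:z\in\zo^b\}$ and $T:=\bigcup_s T_s$; then $|T_s|=2t$ and $|T|=2^k\cdot 2t$. By the definition of $\decmc$ the codeword set is $\cC=P^{-1}(T)$ and $\supp(\encmc(s))=E(s):=P^{-1}(T_s)$. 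Since the $T_{s,z}$ are pairwise distinct, $E(s)$ is a disjoint union over $z$, the $E(s)$ are pairwise disjoint across $s$, and $P(w)=T_{s,z}$ forces $\decmc(w)=s$ for $w\in E(s)$; hence decoder correctness holds whenever $E(s)\neq\emptyset$, and $\encmc(s)$ is by construction uniform on $E(s)$. Finally $\Pr[P(x)\in T_s]=2t/2^n$ for every fixed $x$, so $\E[|E(s)|]=2t$.

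For part~2 I would express $|E(s)|=\sum_{x\in\F_{2^n}}\mathbf{1}[P(x)\in T_s]$ as a sum of $2^n$ indicators that are $(9t)$-wise independent (each depends only on the evaluation at a single point) with mean exactly $2t$, and apply the standard moment-based tail bound for sums of $\ell$-wise independent indicators with $\ell'=\Theta(n+\log(1/\eta))$ degrees of independence, noting $\ell'\le 9t$ once $t\ge t_0$. This gives $\Pr\bigl[\,\bigl||E(s)|-2t\bigr|>t\,\bigr]\le 2^{-\Omega(\ell')}$, which falls below $\eta\,2^{-k-1}$ once $t_0$ is a sufficiently large constant multiple of $n+\log(1/\eta)$ — this is exactly what makes $\ell'/t$, and hence the base of the exponential, a small constant. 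A union bound over the at most $2^k\le 2^n$ messages then yields $|E(s)|\in[t,3t]$ for all $s$ simultaneously with probability $\ge 1-\eta$; on this event every $E(s)$ is nonempty, so $(\encmc,\decmc)$ is an honest coding scheme and part~2 holds.

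For part~1 I would verify the two requirements in the definition of an $\ell$-wise independent scheme with $\ell=9t$. Requirement~(1) is precisely the observation above that $\encmc(s)$ is uniform on its support $E(s)$. For requirement~(2) I would let $\mathcal{O}$ encode $P$ (plus a fixed tie-breaking rule), let $E(s,i,\mathcal{O})$ be the $i$-th element of $E(s)$ in the induced order so that $(E(s,i,\mathcal{O}))_{s,i}$ enumerates $\cC$, and argue that any $\ell\le 9t$ of these codewords are distributed as $\ell$ uniformly random distinct elements of $\zo^n$, i.e. as $(\Pi(1),\dots,\Pi(\ell))$ for a uniform bijection $\Pi\colon[2^n]\to\zo^n$. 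The mechanism is that, because the evaluations of $P$ are $(9t)$-wise independent and uniform, once one conditions on $P$'s effect at any $\le 9t$ points (which decides, for each such point, whether it is a codeword and, if so, its label) no further information about the remaining evaluations leaks out; the only discrepancy from an exact match with a random bijection comes from the randomness of the per-message counts $t_s$ and is of order $2^{-\Omega(n)}$, hence harmless — it is absorbed, exactly as in the proof of Theorem~\ref{thm:upperBound}, into the error parameter when that proof is re-run via Lemma~\ref{lem:MC:twise}. Together with part~2 this establishes the lemma.

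I expect the main obstacle to be making requirement~(2) precise. Unlike Construction~\ref{constr:prob}, here the counts $t_s=|E(s)|$ are themselves random, so ``an arbitrary set of $\ell$ indices'' and the demanded match with a random bijection hold only up to the negligible error mentioned above, and one must argue carefully that conditioning on the membership-and-label pattern at the $\ell$ chosen points truly severs all residual dependence of the enumeration on $P$ — which is what $(9t)$-wise independence provides, since re-running Theorem~\ref{thm:upperBound} through Lemma~\ref{lem:MC:twise} never inspects more than $O(t)$ codewords per message and part~2 keeps those counts inside $[t,3t]$. The only other place needing a little care is the tail bound in part~2, where $t_0$ must be a large enough constant times $n+\log(1/\eta)$ so that the base of the exponential drops below $1$ and the $2^n$-fold union bound still survives.
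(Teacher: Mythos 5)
Your argument for part~2 is the same as the paper's: write $|E(s)|=\sum_{x\in\F_{2^n}}\mathbf 1[P(x)\in T_s]$ as a sum of $(9t)$-wise independent indicators with mean exactly $2t$, apply a moment-based tail bound for $\ell$-wise independent sums, and union-bound over $s$. The paper instantiates this with the Bellare--Rogaway inequality at $\ell=t/4$, giving $\Pr\bigl[\,|\,|E(s)|-2t\,|\geq t\,\bigr]\leq 8(3/4)^{t/4}$ and hence a $t_0=O(n+\log(1/\eta))$ threshold; your choice $\ell'=\Theta(n+\log(1/\eta))$ with $t/\ell'$ a large constant is a valid alternate parameterization that yields the same conclusion.

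On part~1, you have actually engaged with the definition more than the paper's own proof of the lemma does. The paper simply records that the evaluation vector $(P(x))_{x\in\F_{2^n}}$ is $(9t)$-wise independent and declares the scheme a $(9t)$-wise independent coding scheme, never verifying requirement~(2) of the definition (that any $\ell\leq 9t$ enumerated codewords match $(\Pi(1),\dots,\Pi(\ell))$ exactly). You are right to flag that this is not an immediate consequence: the supports $E(s)=P^{-1}(T_s)$ have random sizes and are preimages under a bounded-independence map, not a fragment of a genuinely random bijection. However, your proposed resolution --- that the discrepancy is an unquantified ``$2^{-\Omega(n)}$ error'' that somehow gets absorbed when Theorem~\ref{thm:upperBound} is re-run via Lemma~\ref{lem:MC:twise} --- is not argued and does not reconcile with the exact-equality wording of the definition or of the lemma statement itself. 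In the paper's development, this tension is effectively quarantined inside the proof of Lemma~\ref{lem:MC:twise}, which inspects Theorem~\ref{thm:upperBound}'s argument and observes that only the $(9t)$-wise independence of decoder evaluations (i.e., polynomial evaluations) and the $\Theta(t)$ bound on the $|E(s)|$ from part~2 are ever used. So: your concern is genuine and more careful than the source, but if you intend to keep part~1 as stated you need either to actually prove the bijection-matching claim or to weaken it to a statement about decoder evaluations, and then check that Lemma~\ref{lem:MC:twise} only needs the weaker version; the hand-wave as written leaves a real gap.
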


\begin{proof}
Let $N := 2^n$ and $K := 2^k$. 
Consider the vector $X := (X_1, \ldots, X_N) \in \F_{2^n}^N$, where $X_i := P(i)$ and
each $i$ is interpreted as an element of $\F_{2^n}$. Since the polynomial $P$ is of degree $9t-1$, 
the distribution of $X_1, \ldots, X_N$ over the randomness of the polynomial $P$ is
$(9t)$-wise independent with each individual $X_i$ being uniformly distributed on $\F_{2^n}$.
This standard linear-algebraic fact easily follows from 
invertibility of square Vandermonde matrices. 

Note that the decoder function $\decmc$ in Construction~\ref{constr:monte} is defined so that
\begin{equation} \label{eqn:mc:Puniform}
\decmc(U_n) = \begin{cases} 
\perp & \text{with probability $1-2tK/N$} \\
s \in \zo^k & \text{with probability $2t/N$}.
\end{cases}
\end{equation}

For $s \in \zo^k$, let $E(s) := \supp(\encmc(s))$. Note that the encoder, given $s$,
is designed to output a uniformly random element of $E(s)$.
Since the definition of the $\encmc(s)$ is so that it exhausts the list of all possible
words in $\zo^n$ that can lie in $\decmc^{-1}(s)$, it trivially follows that
$(\encmc, \decmc)$ is always a valid coding scheme; that is, for any realization of the
code and for all $s \in \zo^n$, we have $\decmc(\encmc(s)) = s$
subject to the guarantee that $|E(s)| > 0$.

Fix some $s \in \zo^k$.
Let $Z_1, \ldots, Z_N \in \zo$ be indicator random variable such that
$Z_i = 1$ iff $\decmc(i) = s$ (when $i$ is interpreted as an $n$-bit string). 
Recall that $(Z_1, \ldots, Z_N)$ is a
$(9t)$-wise independent random vector with respect to the randomness of the
code construction. Let $Z := Z_1 + \cdots + Z_N$, and note that
$Z = |E(s)|$.
From \eqref{eqn:mc:Puniform}, we see that
\[ \E[Z] = \E[|E(s)|] = 2t \ .\]
Using Theorem~\ref{thm:BR94} with $\ell := t/4$ and $A := \E[Z]/2 = t$, we see that
\[
\Pr[ |Z-2t| \geq t ] \leq 8 (3/4)^{t/4}.
\]
By taking a union bound over all choices of $s \in \zo^k$, we conclude that
with probability at least $1-\eta_0$, where we define
$\eta_0 := 8 N (3/4)^{t/4}$,
the realization of $(\encmc, \decmc)$ is so that
\[
(\forall s \in \zo^k)\colon |E(s)| \in [t, 3t].
\]
This bound suffices to show the desired conclusion.
\end{proof}

By combining the above tools with Theorem~\ref{thm:upperBound}, we can derive the
following result on the performance of Construction~\ref{constr:monte}.

\begin{thm} \label{thm:MC}
Let $\cF\colon \zo^n \to \zo^n$ be any family of tampering functions.
For any $\eps, \eta > 0$, 
with probability at least $1-\eta$, the pair $(\encmc, \decmc)$ in
Construction~\ref{constr:monte} can be set up so achieve a
non-malleable coding scheme with respect to $\cF$ 
and with error $\eps$. Moreover, the scheme satisfies the following.
\begin{enumerate}
\item The code achieves 
$k \geq n - \log\log (|\cF|/\eta) - \log n - 9 \log(1/\eps) - O(1)$.

\item The number of random bits needed to specify the code is $O\Big( (n+\log(|\cF|/\eta)) n/\eps^6 \Big)$.

\item The encoder and the decoder run in worst case time $\poly(\log(|\cF|/\eta)n/\eps)$.

\end{enumerate}
\end{thm}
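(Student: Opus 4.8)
The plan is to combine the three ingredients already assembled: Lemma~\ref{lem:MC} (which says Construction~\ref{constr:monte} is w.h.p.\ a $(9t)$-wise independent coding scheme with all blob sizes in $[t,3t]$), Lemma~\ref{lem:MC:twise} (which says that $\ell$-wise independence with $\ell \ge 3\max_s|E(s)|$ suffices to push the analysis of Theorem~\ref{thm:upperBound} through for $\delta=0$), and Theorem~\ref{thm:upperBound} itself (the non-malleability guarantee and the resulting rate). So the proof is essentially a bookkeeping composition, and the theorem splits cleanly into the three enumerated conclusions.

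First I would fix the parameters. Let $\eps,\eta>0$ be given. Set $t := t_0$ where $t_0 = O(\eps^{-6}\log(|\cF|N/\eta))$ is the quantity from \eqref{eqn:thm:upper:t} in Theorem~\ref{thm:upperBound} (rounding up to a power of two, which only costs a constant factor), and take $k := k_0$ as in \eqref{eqn:thm:upper:k0} with $\delta=0$, i.e.\ $k_0 \ge n - \log t - 3\log(1/\eps) - O(1)$. One should check this $t$ also exceeds the $t_0 = O(n+\log(1/\eta))$ demanded by Lemma~\ref{lem:MC}, which it does since $\log(|\cF|N/\eta) \ge \log N = n$. Apply Lemma~\ref{lem:MC} with error parameter $\eta/2$: with probability $\ge 1-\eta/2$ the realized code $(\encmc,\decmc)$ is $(9t)$-wise independent and has $|E(s)| \in [t,3t]$ for every $s$. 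On that event, $\max_s|E(s)| \le 3t$, so the code is $(9t)$-wise independent $\ge (3\cdot 3t)$-wise\ldots wait --- here one must be slightly careful: Lemma~\ref{lem:MC:twise} needs $(3T)$-wise independence where $T := \max_s|E(s)| \le 3t$, hence $9t$-wise independence is exactly what is needed, and the hypotheses $|E(s_1)| \ge t_0$ and $|E(s_2)| = O(|E(s_1)|)$ of Lemma~\ref{lem:MC:twise} hold since all blob sizes lie in $[t,3t]$. Therefore Lemma~\ref{lem:MC:twise} applies and the conclusion of Theorem~\ref{thm:upperBound} (for $\delta=0$) holds: conditioned on the good event from Lemma~\ref{lem:MC}, with probability $\ge 1-\eta/2$ over the remaining randomness the code is non-malleable against $\cF$ with error $\eps$. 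A union bound over the two failure events gives overall success probability $\ge 1-\eta$, and the rate bound $k \ge n - \log\log(|\cF|/\eta) - \log n - 9\log(1/\eps) - O(1)$ is exactly the displayed bound in Theorem~\ref{thm:upperBound} with $h(0)=0$, which establishes conclusion~(1).

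For conclusion~(2), the code is specified entirely by the random polynomial $P \in \F_{2^n}[X]$ of degree $9t-1$, which takes $9t \cdot n = O(tn)$ bits; substituting $t = t_0 = O(\eps^{-6}\log(|\cF|N/\eta)) = O(\eps^{-6}(n+\log(|\cF|/\eta)))$ gives $O((n+\log(|\cF|/\eta))n/\eps^6)$ bits as claimed. For conclusion~(3), the decoder evaluates a degree-$O(t)$ polynomial over $\F_{2^n}$ at one point, which is $\poly(t,n)$ field operations, each $\poly(n)$ time; the encoder, for each of the $2t$ values $z \in \zo^b$, solves $P(X)=y$ for a degree-$O(t)$ polynomial (a root-finding problem over $\F_{2^n}$, doable in $\poly(t,n)$ randomized or deterministic time by standard algorithms, e.g.\ Berlekamp / Cantor--Zassenhaus) and then samples uniformly from the accumulated solution set --- total $\poly(t,n) = \poly(\log(|\cF|/\eta)\, n/\eps)$ time, using $t = \poly(n,\log(|\cF|/\eta),1/\eps)$. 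The main obstacle --- really the only non-mechanical point --- is making sure the degree of independence lines up: Construction~\ref{constr:monte} is built to give $9t$-wise independence precisely because Lemma~\ref{lem:MC} produces blobs of size up to $3t$ and Lemma~\ref{lem:MC:twise} then asks for $3\times$ that, so I would state the parameter chase explicitly to confirm $9t \ge 3\max_s|E(s)|$ holds on the good event, and double-check that the ``$O(1)$'' slack in $k_0$ and the $[t,3t]$ window are compatible with the hidden constants in \eqref{eqn:upper:r0}, \eqref{eqn:upper:assumption:a}, and \eqref{eqn:upper:assumption:c}, exactly as flagged in the proof of Lemma~\ref{lem:MC:twise}.
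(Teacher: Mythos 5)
Your proof follows essentially the same route as the paper: set $t := t_0$ and $k := k_0$ from Theorem~\ref{thm:upperBound}, invoke Lemma~\ref{lem:MC} to get $(9t)$-wise independence with blob sizes in $[t,3t]$, apply Lemma~\ref{lem:MC:twise}, and then read off the random-bit count from the polynomial degree and the running times from field arithmetic and root finding over $\F_{2^n}$. The parameter chase ($9t \ge 3\max_s|E(s)|$, the $[t,3t]$ window, the $\eta/2$ split) and the algorithmic remarks match the paper's proof; this is correct.
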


\begin{proof}
Let $t_0$ and $k_0$ be the parameters promised by Theorem~\ref{thm:upperBound}.
We instantiate Construction~\ref{constr:monte} with parameter $t := t_0$ and $k := k_0$.
Observe that this choice of $t$ is large enough to allow Lemma~\ref{lem:MC}
to hold. Thus we can ensure that, with probability at least $1-\eta$,
$(\encmc, \decmc)$ is a $(9t)$-wise independent coding scheme where,
for every $s \in \zo^k$, $|E(s)| \in [t_0, 3t_0]$.
Thus we can now apply Lemma~\ref{lem:MC:twise} to conclude that with
probability at least $1-2\eta$, $(\encmc, \decmc)$ is a strong non-malleable
code with the desired parameters.

The number of random bits required to represent the code is
the bit length of the polynomial $P(X)$ in Construction~\ref{constr:monte},
which is $9tn$. Plugging in the value of $t$ from \eqref{eqn:T0} gives the desired estimate.

The running time of the decoder is dominated by evaluation of the polynomial
$P(X)$ at a given point. Since the underlying field is of characteristic two,
a representation of the field as well as basic field operations can be computed 
in deterministic polynomial time in the degree $n$ of the extension using
Shoup's algorithm \cite{ref:Shoup}.

The encoder is, however, slightly more complicated as it needs to iterate
through $O(t)$ steps, and at each iteration compute
all roots of a given degree $9t-1$ polynomial. Again, since characteristic of the 
underlying field is small, this task can be performed in deterministic polynomial time
in the degree $9t-1$ of the polynomial and the degree $n$ of the extension
(e.g., using \cite{ref:VS92}). After plugging in the bound on $t$ from \eqref{eqn:T0},
we obtain the desired bound on the running time.
\end{proof}

As a corollary, we observe that the rate of the Monte Carlo construction can be
made arbitrarily close to $1$ while keeping the bit-representation of the code
as well as the running time of the encoder and decoder at $\poly(n)$ provided
that $\eps = 1/\poly(n)$ and $|\cF| = 2^{\poly(n)}$. In particular, we see that
the Monte Carlo construction achieves strong non-malleability even with respect
to such powerful classes of adversaries as polynomial-sized Boolean circuits (with
$n$ outputs bits) and virtually any interesting computationally bounded model.

\begin{remark} \label{rem:inverseEps}
Since in this construction the error $\eps$ is only polynomially small, for
cryptographic applications such as tamper-resilient security it is important
to set up the code so as to ensure that $1/\eps$ is significantly larger
than the total number of tampering attempts made by the adversary.
\end{remark}

\begin{caveat*}
We point out that any explicit coding scheme for computationally bounded models
(such as polynomial-sized Boolean circuits) necessarily implies an explicit lower bound for the
respective computational model. This is because a function in the restricted
model cannot be powerful enough to compute the decoder function, as otherwise,
the following adversary would violate non-malleability: 
\begin{quote}
Consider fixed tuples $(s_1, x_1), (s_2, x_2) \in \zo^k \times \zo^n$,
where $s_1 \neq s_2$, $\dec(x_1) = s_1$ and $\dec(x_2) = s_2$.
Given a codeword $x \in \zo^n$, compute $s := \dec(x)$.
If $s = s_1$, output $x_2$. If $s = s_2$, output $x_1$. Otherwise,
output $x$.
\end{quote} 
\end{caveat*}

\begin{remark}(Alternative Monte Carlo construction)
In addition to Construction~\ref{constr:monte}, it is possible to
consider a related Monte Carlo construction when polynomial evaluation
is performed at the encoder and root finding is done by the encoder.
More precisely, the encoder, given $s \in \zo^k$, may sample
$i \in [t]$ uniformly at random, and output $P(s, i)$ where
$(s, i)$ is interpreted as an element of $\F_{2^n}$ (possibly
after padding). The drawback with this approach is that
the rate of the code would be limited by $1/2$, since for larger rates
there is a noticeable chance that the encoder maps different messages 
to the same codeword.
\end{remark}

\section{Impossibility bounds} \label{sec:lower}

In this section, we show that the bounds obtained by Theorem~\ref{thm:upperBound}
are essentially optimal.
In order to do so, we consider three families of adversaries.
Throughout the section, we use $k$ and $n$ for the message length and
block length of coding schemes and define $N := 2^n$ and $K := 2^k$.

\subsection{General adversaries} \label{sec:lower:general}

The first hope is to demonstrate that Theorem~\ref{thm:upperBound} is the best possible
for \emph{every} family of the tampering functions of a prescribed size. We rule out
this possibility and demonstrate a family $\cF$ of tampering functions achieving
$\log\log |\cF| \approx n$ for which there is a non-malleable code achieving rate
$1-\gamma$ for arbitrarily small $\gamma > 0$.

Let $S \subseteq \zo^n$ be any set of size at least $N^{1-\alpha}$ and
at most $N/2$.
Consider the family $\cF$ of functions satisfying the property that
\[
(\forall f \in \cF) (\forall x \in S)\colon f(x) = x.
\]
We can take the union of such families over all choices of $S$; however,
for our purposes it suffices to define $\cF$ with respect to a single
choice of $S$.
Observe that
\[
|\cF| N^{N-|S|} \geq N^{N/2},
\]
which implies
\[
\log\log |F| \geq n-1.
\]
However, there is a trivial coding scheme that is non-malleable with zero
error for all functions in $\cF$. Namely, the encoder $\enc$ is a deterministic
function that maps messages to distinct elements of $S$, whereas the decoder $\dec$
inverts the encoder and furthermore, maps any string outside $S$ to $\perp$.
In this construction, we see that
\[
(\forall f \in \cF) (\forall x \in \zo^k)\colon \dec(f(\enc(x))) = x,
\]
since $f$ necessarily fixes all the points in $S$ (in particular,
in Definition \ref{def:nmCode} one can take $\cD_f := \distr(\same)$).
Finally, observe that the rate of this coding scheme is at least
$1-\gamma$. In fact, this result holds for any $\gamma \geq 1/n$, 
implying that the rate of the code can be made $1-o(1)$.

\subsection{Random adversaries} \label{sec:lower:random}

The observation in Section~\ref{sec:lower:general} rules out the hope for 
a general lower bound that only depends on the size of the adversarial family.
However, in this section we show that for ``virtually all'' families
of tampering functions of a certain size, Theorem~\ref{thm:upperBound}
gives the best possible bound. More precisely, 
we construct a family $\cF$ of a designed size $M$
as follows: For each $i \in [M]$, sample a
uniformly random function $f_i\colon \zo^n \to \zo^n$ and add
$f_i$ to the family. Since some of the $f_i$ may turn out to be
the same (albeit with negligible probability), $|\cF|$ may in general
be lower than $M$ (which can only make a lower bound stronger).

We prove the following.

\begin{thm}
For any $\alpha > 0$, there is an $M_0$ satisfying
\[
\log \log M_0 \leq \alpha n + O(\log n)
\]
such that with probability $1-\exp(-n)$, a random family $\cF$ with designed
size $M \geq M_0$ satisfies the following: There is no coding scheme
achieving rate at least $1-\alpha$ and error $\eps < 1$ that is non-malleable with respect
to the tampering family $\cF$.
\end{thm}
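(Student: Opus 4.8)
The strategy is a probabilistic argument: fix an arbitrary coding scheme $(\enc,\dec)$ of rate at least $1-\alpha$ and error $\eps < 1$, and show that a random family of modest size contains, with overwhelming probability, a tampering function witnessing a violation of non-malleability for that scheme. A union bound over all coding schemes of block length $n$ and the stipulated rate then finishes the argument, provided $M_0$ is chosen so that the per-scheme failure probability beats the (doubly-exponential in $n$) count of schemes. So the first step is to count schemes: a coding scheme with block length $n$ is specified by the decoder $\dec\colon\zo^n\to\zo^k\cup\{\perp\}$ together with, for each message, the support of its encoding distribution inside $\dec^{-1}(s)$; crudely this is at most $2^{O(n 2^n)}$ choices, so $\log\log(\#\text{schemes}) \le n + O(\log n)$.

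The core is the following ``bad event for a single scheme.'' Since the rate is at least $1-\alpha$, we have $K = 2^k \ge 2^{(1-\alpha)n} = N^{1-\alpha}$, so the average size of a decoding class $\dec^{-1}(s)$ is at most $N/K \le N^{\alpha}$; hence a constant fraction of messages $s$ have $|\dec^{-1}(s)| \le 2 N^{\alpha}$, say. Pick two such messages $s_0 \ne s_1$. A \emph{random} function $f\colon\zo^n\to\zo^n$ has the property that, with probability bounded below by a constant (independent of $n$), it maps the whole of $\dec^{-1}(s_1)$ into the single ``dummy'' fibre $\dec^{-1}(s_0)$ — wait, that probability is $(|\dec^{-1}(s_0)|/N)^{|\dec^{-1}(s_1)|}$, which is tiny. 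The right move instead is: fix a single target codeword $x^\ast$ with $\dec(x^\ast) = s_0$; a random $f$ sends a \emph{given} point to $x^\ast$ with probability $1/N$, so on the roughly $N^{1-\alpha}/2$ messages whose fibre is a singleton one expects about $N^{-\alpha}/2 \cdot$ (number of such messages) of their codewords to land on $x^\ast$. More carefully, define $f$ to be the identity except that we \emph{hope} $f$ maps all of $\enc(s_1)$'s support to the fixed codeword $x^\ast$ of $s_0$ while leaving $\enc(s_0)$'s support untouched (fixed points). Then $\dec(f(\enc(s_1))) = s_0$ always, while $\dec(f(\enc(s_0))) = s_0$ always (identity), so $D_{f,s_0}$ is concentrated on $\same$ and $D_{f,s_1}$ is concentrated on $s_0$; these have statistical distance $1$, contradicting error $\eps<1$ even in the weak (non-strong) sense, because one checks $\Copy$-distributions differ by $1-o(1)$. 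The probability a single random $f$ realizes this pattern is at least $N^{-|\supp(\enc(s_0))|-|\supp(\enc(s_1))|}$, and by choosing $s_0,s_1$ among messages with $O(N^\alpha)$-size fibres (so supports of size $O(N^\alpha)$) this is at least $2^{-O(N^\alpha \cdot n)} = 2^{-n^{O(1)}2^{\alpha n}}$.

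Consequently, the probability that \emph{none} of the $M$ independently sampled functions realizes the pattern for a \emph{fixed} scheme is at most $(1 - 2^{-n^{O(1)}2^{\alpha n}})^M \le \exp(-M\cdot 2^{-n^{O(1)}2^{\alpha n}})$. To drive this below $2^{-(n+O(\log n))}\cdot 2^{-\exp(n)}$ (enough to union-bound over all $\exp(2^{O(n)})$ schemes and still have $1-\exp(-n)$ slack) it suffices to take $M \ge M_0$ with $M_0 = 2^{n^{O(1)} 2^{\alpha n}}$, i.e. $\log\log M_0 \le \alpha n + O(\log n)$, matching the claim. The main obstacle is technical bookkeeping in the single-scheme analysis: one must (i) argue there really exist two distinct messages with small fibres and with the target codeword $x^\ast$ distinct from every codeword of $s_1$ (so that the ``identity on $\enc(s_0)$, constant $x^\ast$ on $\enc(s_1)$'' function is well-defined and consistent), handling the edge case $\supp(\enc(s_0))\ni x^\ast$ versus the rest; and (ii) verify that this forces statistical distance close to $1$ between $\distr(\Copy(S',s_0))$ and $\distr(S)$ for \emph{every} candidate $\cD_f$, not just the natural one — but since $\dec(f(\enc(s_1)))$ is the constant $s_0$ while $\dec(f(\enc(s_0)))$ is the constant $s_0$ as well, any $\cD_f$ that is $\eps$-close for $s_1$ must put mass $\ge 1-\eps$ on the symbol $s_0$, whereas for $s_0$ the observed distribution $\Copy(S',s_0)$ is the point mass on $s_0$ regardless, so one instead exploits a \emph{third} message — or more simply picks $s_1$ so that $s_0\notin\{s_1\}$ and uses that $\cD_f$ cannot simultaneously be consistent when we swap which message plays the ``dummy'' role; the cleanest route is to observe $D_{f,s_1}$ forces $\cD_f(s_0)\ge 1-\eps$ but then for $s_0$ itself, $D_{f,s_0}=\same$ is $\eps$-far from $\Copy(S',s_0)$ unless $\cD_f(\same)\ge 1-O(\eps)$, and since $s_0\ne\same$ these are incompatible for $\eps<1/2$; a little care (or replacing $\eps<1$ by $\eps<1/2$ with a rate loss of $O(1/n)$, which is absorbed) closes the gap. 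I expect verifying this incompatibility cleanly, and the associated fibre-existence combinatorics, to be where the real work lies; the counting and the concentration bound are routine.
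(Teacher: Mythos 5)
Your proof takes a genuinely different route from the paper. The paper first establishes a \emph{deterministic} universality property of a random family of size $M_0$ (Claim~5.2): with probability $1-\exp(-n)$, for every set $S\subseteq\zo^n$ of size at most $2N^\alpha$ and every map $f_S\colon S\to\zo^n$, some $f\in\cF$ agrees with $f_S$ on $S$. Once this holds, no union bound over coding schemes is needed --- for an \emph{arbitrary} scheme one simply exhibits a bad $f\in\cF$. Your route instead fixes a scheme, bounds the probability that one random $f$ witnesses a violation, and union-bounds over the (at most $2^{O(n2^n)}$) schemes. That extra union bound is harmless, since the per-scheme success probability $p\geq 2^{-O(n2^{\alpha n})}$ still forces only $\log M_0 = O(n2^n) + O(n2^{\alpha n}) = O(n2^{\alpha n})$, which yields the claimed $\log\log M_0\le\alpha n + O(\log n)$; so the counting is fine. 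The union-bound route is slightly less clean (it needs the decoder/support count to be tabulated) but is valid.

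The genuine gap is in your adversary and your ``cleanest route'' fix. The tampering function that is the identity on $\supp(\enc(s_0))$ and sends $\supp(\enc(s_1))$ to a fixed $x^\ast\in\dec^{-1}(s_0)$ does \emph{not} violate Definition~\ref{def:nmCode}. Take $\cD_f=\delta_{s_0}$ (point mass on $s_0$). For $s_1$, $\distr(\Copy(S',s_1))=\delta_{s_0}$ and $\dec(f(\enc(s_1)))\equiv s_0$, so the distance is $0$. For $s_0$, $\distr(\Copy(S',s_0))=\delta_{s_0}$ (because $\Copy(s_0,s_0)=s_0$) and $\dec(f(\enc(s_0)))\equiv s_0$, so the distance is again $0$. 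Your claimed repair --- that the $s_0$-condition forces $\cD_f(\same)\ge 1-O(\eps)$ --- is wrong: the $\Copy$ map sends \emph{both} $\same$ and $s_0$ to $s_0$, so the true requirement is $\cD_f(\same)+\cD_f(s_0)\ge 1-\eps$, which is already satisfied by $\delta_{s_0}$. Your identity-based adversary only separates $D_{f,s_0}$ from $D_{f,s_1}$, which refutes \emph{strong} non-malleability but not the basic Definition~\ref{def:nmCode} the theorem is about. (Also, your aside about absorbing a ``rate loss $O(1/n)$'' when tightening $\eps<1$ to $\eps<1/2$ doesn't parse --- no rate is at stake in that move.)

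The correct fix is your briefly-mentioned ``swap'' variant, which is precisely what the paper does: pick two small-fibre messages $s_1\ne s_2$, let $c_1\in\supp(\enc(s_1))$ and $c_2\in\supp(\enc(s_2))$, and take the tamper function that maps all of $\supp(\enc(s_1))$ to $c_2$ and all of $\supp(\enc(s_2))$ to $c_1$. Then $\dec(f(\enc(s_1)))\equiv s_2$ forces $\cD_f(s_2)\ge 1-\eps$, and $\dec(f(\enc(s_2)))\equiv s_1$ forces $\cD_f(s_1)\ge 1-\eps$, and these two are incompatible whenever $\eps<1/2$. If you replace your detailed identity/constant adversary by this swap adversary (whose realization probability for a single random $f$ is still $N^{-O(N^\alpha)}$), your union-bound argument goes through.
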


\begin{proof}
We begin with the following simple probabilistic argument:

\begin{claim} \label{clm:allStrings}
Let $\cC \subseteq [q]^N$ be a multi-set of vectors each chosen
uniformly and independently at random. For any integer $\ell \in [N]$ and
parameter $\gamma > 0$, there is an $M_0 = O(\ell q^\ell \log (qN/\gamma))$
such that as long as $|\cC| \geq M_0$, the following holds with probability
at least $1-\gamma$: For every $S \subseteq [N]$ with $|S| \leq \ell$,
the set of vectors in $\cC$ restricted to the positions picked by $S$ is 
equal to $[q]^{|S|}$.
\end{claim}

\begin{proof}
Fix any choice of the set $S$ (where, without loss of generality,
$|S| = \ell$) and let $\cC_S$ be the set of vectors in
$\cC$ restricted to the positions in $S$. 
For any $w \in [q]^{|S}$, we have
\[
\Pr[w \notin \cC_S] = \Big( 1-\frac{1}{q^\ell} \Big)^{|\cC|} \leq \exp(-\Omega(|\cC|/q^\ell)).
\]
By taking a union bound on all the choices of $w$ and $S$, the probability
that $\cC$ does not satisfy the desired property can be seen to be at most
\[
(qN)^\ell \exp(-\Omega(|\cC|/q^\ell)),
\]
which can be made no more than $\gamma$ for some
\[
|\cC| = O\Big(q^\ell (\ell \log (qN) + \log(1/\gamma))\Big) \ . \qedhere
\]
\end{proof}
Let $\gamma > 0$ be a parameter to be determined later.
By Claim~\ref{clm:allStrings}, with probability at least $1-\gamma$
over the randomness of the family $\cF$, we can ensure that for all
sets $S \subseteq \zo^n$ of size at most $2 N^{\alpha}$, and
for all functions $f_S\colon S \to \zo^n$, there is a function $f \in \cF$
that agrees with $f_S$ on all points in $S$. This guarantee holds
if we take $\cF \geq M_0$ for some 
\[
M_0 = O\Big( N^{(4N^{\alpha})} (8 N^\alpha \log(N/\gamma)) \Big).
\]
Overestimating the above bound yields
\[
\log \log M_0 \leq \alpha n + \log\log(N/\gamma) + O(1)
\]
which is at most $\alpha n + O(\log n)$ for $\gamma = \exp(-n)$.
Assuming that the family $\cF$ attains the above-mentioned property,
we now proceed as follows. 

Consider any coding scheme $(\enc, \dec)$ with block length $n$ and
message length $k$ which is non-malleable for the family $\cF$
randomly constructed as above and achieving rate at least
$1-\alpha$ for some $\alpha > 0$ and any non-trivial error $\eps < 1$.
For any message $s \in \zo^k$, let
\[
E(s) := \supp(\enc(s)) \subseteq [N]
\]
and observe that $E(s) \cap E(s') = \emptyset$ for all $s \neq s'$.
Observe that
\[
\E[ |E(U_k)| ] \leq N^{\alpha}
\]
by the disjointness property of the $E(s)$ and the assumption
on the rate of the code. By Markov's bound,
\[
\Pr[ |E(U_k)| \geq 2 N^{\alpha} ] < 1/2
\]
implying that for at least half of the choices of $s \in \zo^k$, we can assume
$|E(s)| \leq 4 N^{\alpha}$. Take two distinct vectors
$s_1, s_2 \in \zo^k$ satisfying this bound.

Now, let $S := E(s_1) \cup E(s_2)$, where $|S| \leq 2 N^\alpha$ as above.
Consider any $c_1 \in E(s_1)$ and $c_2 \in E(s_2)$ and define
$f_S\colon S \to \zo^n$ such that
\[
(\forall x \in E(s_1))\colon\ f_S(x) = c_2 \quad \text{and} \quad 
(\forall x \in E(s_2))\colon\ f_S(x) = c_1.
\]
By the choice of $\cF$, we know that there is $f \in \cF$ that
agrees with $f_S$ on all the points in $S$. This choice of the adversary
ensures that 
\[
\Pr[ \dec(f(\enc(s_1))) = s_2 ] = 1
\quad \text{and} \quad 
\Pr[ \dec(f(\enc(s_2))) = s_1 ] = 1
\]
with respect to the randomness of the encoder. Since the
two distributions $\dec(f(\enc(s_1)))$ and $\dec(f(\enc(s_2)))$ are
maximally far from each other and moreover, the adversary $f$
always tampers codewords in $E(s_1)$ and $E(s_2)$ to a codeword
corresponding to a different message, we conclude that there is no choice of $\cD_f$
in Definition~\ref{def:nmCode} that ensures non-malleability with
any error less than~$1$.
\end{proof}

\subsection{General adversaries acting on a subset of positions}
\label{sec:lower:alpha}

An important family of adversaries is the one that is only restricted
by the subset of bits it acts upon. More precisely, let $T \subseteq [n]$
be a fixed set of size $\alpha n$, for a parameter $\alpha \in (0,1)$.
For $x \in \zo^n$, we use the notation $x_T \in \zo^{|T|}$ for the restriction of $x$
to the positions in $T$. Without loss of generality, assume that $T$
contains the first $|T|$ coordinate positions so that $x = (x_T, x_{\bar{T}})$,
where $\bar{T} := [n] \setminus T$.
We consider the family $\cF_T$ of all functions $f\colon \zo^n \to \zo^n$
such that 
\[
f(x) = (g(x_T), x_{\bar{T}})
\]
for some $g\colon \zo^{|T|} \to \zo^{|T|}$.
Observe that $|\cF_T| \leq N^{(\alpha N^{\alpha})}$ which implies
$\log \log |\cF_T| \leq \alpha n$.

We prove the following lower bound, which is a variation of the classical
Singleton bound for non-malleable codes. What makes this variation much
more challenging to prove is the fact that 1) non-malleable codes
allow a randomized encoder, and 2) non-malleability is a more relaxed
requirement than error detection, and hence the proof must rule out the
case where the decoder does not detect errors (i.e., outputs a wrong
message) while still satisfies non-malleability.

\begin{thm} \label{thm:lower}
Let $T \subseteq [n]$ be of size $\alpha n$ and consider the family 
$\cF_T$ of the tampering functions that only act on the coordinate 
positions in $T$ (as defined above). Then, there is a $\delta_0 = O((\log n)/n)$
such that the following holds.
Let $(\enc, \dec)$ be any coding scheme which is non-malleable for the
family $\cF_T$ and achieves rate $1-\alpha+\delta$,
for any $\delta \in [\delta_0, \alpha]$ and error $\eps$.
Then, $\eps \geq \delta/(16 \alpha)$. In particular, when $\alpha$ and
$\delta$ are absolute constants, $\eps = \Omega(1)$.
\end{thm}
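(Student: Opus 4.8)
The plan is to combine an information-theoretic lower bound on how much the first $\alpha n$ bits of the encoding must ``know'' about the message with a tampering attack that erases exactly those bits. Fix $T$ of size $\alpha n$, and for a message $s$ write $\enc(s)=(Y_s,Z_s)$ with $Y_s\in\zo^{\alpha n}$ the tampered part and $Z_s\in\zo^{(1-\alpha)n}$ the untouched part. Taking $S\sim\U_k$ and writing $Y:=Y_S$, $Z:=Z_S$, the decoder recovers $S$ from $(Y,Z)$, so $H(S\mid Y)=I(S;Z\mid Y)\le H(Z)\le(1-\alpha)n$, and hence $I(S;Y)=H(S)-H(S\mid Y)\ge k-(1-\alpha)n=\delta n$. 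So $\distr(Y_s)$ must vary substantially with $s$, and the goal is to turn that variation into a violation of non-malleability.

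First I would fix a ``dummy'' string $u^\ast\in\zo^{\alpha n}$. Since $\dec(u^\ast,\cdot)$ has at most $2^{(1-\alpha)n}$ distinct outputs while there are $2^k=2^{(1-\alpha+\delta)n}$ messages, an averaging argument over $u^\ast$ (and over $S,Z$) produces a $u^\ast$ with $\E_{S\sim\U_k}\Pr[\dec(u^\ast,Z_S)=S]\le 2^{-\delta n}$; by Markov, all but a $2^{-\delta n/2}$ fraction of messages $s$ have $r_s:=\Pr[\dec(u^\ast,Z_s)=s]\le 2^{-\delta n/2}$. For a set $X\subseteq\zo^{\alpha n}$ consider the attack $f_X\in\cF_T$ that replaces the first $\alpha n$ bits by $u^\ast$ whenever they lie in $X$ and does nothing otherwise. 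Then $\dec(f_X(\enc(s)))$ equals $s$ when $Y_s\notin X$ (the codeword is left intact) and equals $\dec(u^\ast,Z_s)$ otherwise, so the mass it puts on $s$ lies in $[\,1-p_s,\ 1-p_s+r_s\,]$ with $p_s:=\Pr[Y_s\in X]$. Non-malleability forces one fixed $\cD_{f_X}$ with $\distr(\dec(f_X(\enc(s))))\approx_\eps\distr(\Copy(\cD_{f_X},s))$; comparing the mass both sides put on $s$ gives $|(1-p_s)-\cD_{f_X}(\same)|\le \eps+r_s+\cD_{f_X}(s)$, and since $\cD_{f_X}$ has at most $1/\eps$ messages with $\cD_{f_X}(s)>\eps$, I conclude: for every $X$, the map $s\mapsto\Pr[Y_s\in X]$ is constant up to $O(\eps+2^{-\delta n/2})$ on all but $O(1/\eps+2^{k-\delta n/2})$ ``atypical'' messages.

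Next I would close the loop. Discarding the atypical messages (a negligible fraction, since $\delta\ge\delta_0=\Omega((\log n)/n)$ makes $2^{-\delta n/2}$ polynomially small), the previous step says all the distributions $\distr(Y_s)$ lie within total-variation distance $\tau=O(\eps+2^{-\delta n/2})$ of their mixture $\distr(Y)$. The continuity bound $|H(P)-H(Q)|\le\|P-Q\|_{\mathrm{TV}}\log|\supp|+h(\|P-Q\|_{\mathrm{TV}})$ with $|\supp|\le 2^{\alpha n}$ then gives $H(Y\mid S)\ge H(Y)-\tau\alpha n-1$, i.e.\ $I(S;Y)\le\tau\alpha n+1$. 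Against the bound $I(S;Y)\ge\delta n$ this yields $\tau\alpha n\ge\delta n-1$, hence $\eps=\Omega(\delta/\alpha)-2^{-\Omega(\delta n)}$; tracking the constants through the three error budgets (the $\eps$'s from the two closeness uses, the $r_s$ term, the entropy-continuity slack) and using $\delta\ge\delta_0$ to swallow the exponentially small term gives $\eps\ge\delta/(16\alpha)$. Equivalently, for constant $\alpha,\delta$ no non-malleable code against $\cF_T$ can have $o(1)$ error above rate $1-\alpha$.

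The main obstacle is the bookkeeping around ``atypical'' messages in the middle step. The clean bound ``$s\mapsto\Pr[Y_s\in X]$ is nearly constant'' comes with a per-$X$ exclusion of the $\le 1/\eps$ messages that are $\cD_{f_X}$-heavy, and passing from ``nearly constant for each $X$ separately'' to ``all $\distr(Y_s)$ pairwise close'' cannot be done by a union bound over the $2^{2^{\alpha n}}$ choices of $X$; an $L_1$-in-$s$ averaged statement (which does avoid the heavy-message exclusion) is too lossy because converting it to an $L_\infty$-in-$X$ statement costs a factor $2^{\alpha n}$. The resolution is to apply the constraint only to the single set $X^\ast(s_0,s_1)$ witnessing $\|\distr(Y_{s_0})-\distr(Y_{s_1})\|_{\mathrm{TV}}$ for each relevant pair, and to argue that only a negligible set of pairs can have an endpoint heavy for $f_{X^\ast(s_0,s_1)}$; getting this count right, and calibrating the heaviness/atypicality threshold (which is what pins down $\delta_0=O((\log n)/n)$), is the delicate part. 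The averaging for $u^\ast$, the entropy inequality $I(S;Y)\ge\delta n$, and the continuity estimate are routine.
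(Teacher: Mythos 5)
Your approach is genuinely different from the paper's. The paper's proof is \emph{local}: using $H(X_1\mid S=s_0)\le H(X_1)-\delta n/2$ for a single well-chosen message $s_0$, it constructs (via an entropy concentration lemma) one specific set $X_\eta$ of size $\le 2^{H(X_1)-\delta n/2}$ that the encoding of $s_0$ hits with probability $\ge\eta$, then shows by a separate entropy inequality that $\Pr[X_1\in X_\eta]$ is exponentially small, and by Markov finds a second message $s_1$ that rarely hits $X_\eta$. A dummy string $w$ avoiding $E(s_0)\cup E(s_1)$ is found by counting. One single tampering function then separates $s_0$ and $s_1$. Your plan is \emph{global}: non-malleability against $f_X$ for all $X$ would force $s\mapsto\Pr[Y_s\in X]$ to be nearly constant for each $X$, whence (almost) all $\distr(Y_s)$ are pairwise TV-close, and continuity of entropy bounds $I(S;Y)$ from above, contradicting the $\ge\delta n$ lower bound. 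If it can be closed, the global route gives a stronger qualitative picture, at the cost of being harder to execute.

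The gap you flag is real, and as written your proposal does not close it. The per-$X$ exclusion set (the $\le 1/\eps$ messages that are $\cD_{f_X}$-heavy) depends on $X$, and union-bounding over the $2^{2^{\alpha n}}$ sets is out. Your suggested fix --- apply the closeness bound only at the witness set $X^\ast(s_0,s_1)$ for each pair and ``count'' how many pairs have a heavy endpoint --- is stated without the argument that makes such a count possible. In fact the count \emph{does} go through, but it needs a further observation you have not written down: if $\cD_{f_X}(s) > 2\eps$, then because the same $\cD_{f_X}$ must approximate $\dec(f_X(\enc(s')))$ for \emph{every} $s'\ne s$, and a tampered codeword from $s'$ can only decode to $s$ when $Y_{s'}\in X$ and $\dec(u^\ast,Z_{s'})=s$, one gets $\Pr[\dec(u^\ast,Z_{s'})=s] > \eps$ for all $s'\ne s$; averaging over $s'$ shows $q_s := \E_{S'}\Pr[\dec(u^\ast,Z_{S'})=s] > \eps/2$, and since $\sum_s q_s \le 1$ there are at most $2/\eps$ such $s$ \emph{independently of $X$}. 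This universal bad set is the missing ingredient: outside it, any pair's witness set yields the desired TV bound. Without this step the proposal does not constitute a proof; with it, your strategy works and trades one analytic lemma (the paper's Claim for finding $X_\eta$) for a more structural argument about the family $\{f_X\}$. One small inaccuracy: your averaging over $u^\ast$ is unnecessary, since for \emph{every} $u^\ast$ the image of $\dec(u^\ast,\cdot)$ has size $\le 2^{(1-\alpha)n}$, so $\E_S\Pr[\dec(u^\ast,Z_S)=S]\le\Pr[S\in\mathrm{Im}(\dec(u^\ast,\cdot))]\le 2^{-\delta n}$ holds pointwise.
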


Before proving the theorem, we state the following immediate corollary.

\begin{coro} \label{coro:lower}
Let $\cF$ be the family of split-state adversaries acting on $n$ bits. 
That is, each $f \in \cF$ interprets the input as a pair $(x_1, x_2)$
where $x_2 \in \zo^{\lfloor n/2 \rfloor}$ and $x_2 \in \zo^{\lceil n/2 \rceil}$,
and outputs $(f_1(x_1), f_2(x_2))$ for arbitrary tampering functions
$f_1$ and $f_2$ (acting on their respective input lengths).

Moreover, for a fixed constant $\delta \in (0,1)$, let $\cF_\delta$ be the class of tampering
functions where $f \in \cF_\delta$ iff every bit of $f(x)$ depends on 
at most $\lfloor \delta n \rfloor$ of ths bits of $x$.

Let $(\enc_1, \dec_1)$ (resp., $(\enc_\delta, \dec_\delta)$ be any coding scheme which is non-malleable for
the class $\cF$ (resp., $\cF_\delta$) achieving error at most $\eps$ and rate $R$ (resp., $R_\delta$).
Then, for every fixed constant $\gamma > 0$, there is a fixed constant
$\eps_0 > 0$ such that
if $\eps \leq \eps_0$, the following bounds hold.
\begin{enumerate}
\item[(i)] $R \leq 1/2 - \gamma$,

\item[(ii)] $R_\delta \leq 1 - \delta - \gamma$.
\end{enumerate}
\end{coro}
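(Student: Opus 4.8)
The plan is to derive both bounds directly from Theorem~\ref{thm:lower} by exhibiting, in each case, a subfamily of the stated class that coincides with a family $\cF_T$ of tampering functions acting only on a coordinate subset $T$ of the appropriate size, and then noting that non-malleability for the larger class implies non-malleability for the subfamily. For part (i), observe that a split-state adversary is free to act as the identity on the second block $x_2$ while tampering $x_1$ arbitrarily; taking $T$ to be the first $\lfloor n/2\rfloor$ coordinates, the family $\cF$ contains all of $\cF_T$ with $|T| = \lfloor n/2\rfloor$, i.e.\ with $\alpha = 1/2 - O(1/n)$. For part (ii), a function in $\cF_\delta$ may leave the last $(1-\delta)n$ bits untouched (each such output bit depends on exactly one input bit, well within the budget of $\lfloor\delta n\rfloor$) while letting the first $\lfloor\delta n\rfloor$ output bits depend arbitrarily on the first $\lfloor\delta n\rfloor$ input bits; hence $\cF_\delta \supseteq \cF_T$ for $T$ the first $\lfloor\delta n\rfloor$ coordinates, i.e.\ $\alpha = \delta + O(1/n)$.

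The key steps, in order, are: (1) fix $\gamma > 0$ and record the containments $\cF \supseteq \cF_T$ (with $|T|=\lfloor n/2\rfloor$) and $\cF_\delta \supseteq \cF_{T'}$ (with $|T'|=\lfloor\delta n\rfloor$); (2) observe that any coding scheme non-malleable with error $\eps$ against a family is also non-malleable with error $\eps$ against any subfamily, so $(\enc_1,\dec_1)$ is non-malleable against $\cF_T$ and $(\enc_\delta,\dec_\delta)$ against $\cF_{T'}$; (3) apply Theorem~\ref{thm:lower} contrapositively: if the rate exceeds $1-\alpha+\delta_0$ where $\alpha$ is the relevant fraction and $\delta_0 = O((\log n)/n)$, then the error must be at least $(\text{rate} - (1-\alpha))/(16\alpha)$, which is bounded below by an absolute constant whenever the rate exceeds $1-\alpha$ by a constant; (4) choose $\eps_0$ small enough (depending only on $\gamma$) so that $\eps \le \eps_0$ forces the rate to satisfy $R \le 1 - \alpha + \gamma \le 1/2 + O(1/n) + \gamma$ in case (i) and $R_\delta \le 1 - \delta + O(1/n) + \gamma$ in case (ii); (5) absorb the $O(1/n)$ slack and the $\delta_0$ threshold into a relabeling of the constant (replace $\gamma$ by $\gamma/2$ at the outset), yielding the clean bounds $R \le 1/2 - \gamma$ and $R_\delta \le 1 - \delta - \gamma$ as stated.

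The main obstacle, such as it is, is purely bookkeeping: one must verify that Theorem~\ref{thm:lower} is applied with a \emph{fixed} $T$ (its statement fixes $T$ at the outset, so this is fine) and that the floor/ceiling discrepancies between $n/2$ and $\lfloor n/2\rfloor$, and between $\delta n$ and $\lfloor\delta n\rfloor$, contribute only an $O(1/n)$ loss to the rate bound; since these are lower-order terms dominated by the $\gamma$ margin for all sufficiently large $n$, and Theorem~\ref{thm:lower} already permits $\delta \in [\delta_0,\alpha]$ with $\delta_0 = O((\log n)/n)$, there is ample room. For the split-state case one should also check that $T$ being the first half (rather than interleaved) is without loss of generality, which it is since the corollary's statement of $\cF$ takes $x_1$ to be a contiguous prefix. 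The existence-side tightness (that these rates \emph{are} achievable, so the corollary is not vacuous) follows from Theorem~\ref{thm:upperBound} applied to these families, since $\log\log|\cF| \le n/2 + O(\log n)$ and $\log\log|\cF_\delta| \le n + O(\log n)$ give matching rate lower bounds of $1/2 - o(1)$ and $1 - \delta - o(1)$ respectively; this need not be reproved here but is worth a one-line remark.
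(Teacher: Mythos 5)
Your overall approach — exhibiting subfamilies $\cF_T \subseteq \cF$ and $\cF_{T'} \subseteq \cF_\delta$, invoking closure of non-malleability under passage to a subfamily, and then applying Theorem~\ref{thm:lower} contrapositively with $\eps_0 = \Theta(\gamma)$ — is exactly what the paper intends, and steps (1)--(4) are correct. In particular, step (4) correctly derives $R \le 1 - \alpha + \gamma = \lceil n/2\rceil/n + \gamma = 1/2 + O(1/n) + \gamma$ for part (i), and $R_\delta \le 1 - \lfloor\delta n\rfloor/n + \gamma = 1 - \delta + O(1/n) + \gamma$ for part (ii).

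The problem is step (5). You claim to "absorb the $O(1/n)$ slack" and arrive at $R \le 1/2 - \gamma$. That is not possible: you are starting from an upper bound of the form $1/2 + O(1/n) + \gamma/2$ and the quantity $O(1/n)$ is nonnegative, so no relabeling of $\gamma$ turns a $+\gamma$ on the right into a $-\gamma$. What the argument actually delivers is $R \le 1/2 + \gamma$ and $R_\delta \le 1 - \delta + \gamma$ (for all sufficiently large $n$, after absorbing $O(1/n)$ into $\gamma$). Note that this is also the only bound that can possibly be true: Theorem~\ref{thm:upperBound} produces split-state non-malleable codes of rate $1/2 - o(1)$ with error $2^{-\Omega(n)}$, which for every constant $\gamma$ violates "$R \le 1/2 - \gamma$" at sufficiently large $n$ no matter how small $\eps_0$ is chosen. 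The signs on $\gamma$ in the paper's corollary statement appear to be typographical; the intended claim (consistent with the abstract's assertion that the split-state capacity \emph{equals} $1/2$) is $R \le 1/2 + \gamma$ and $R_\delta \le 1 - \delta + \gamma$, which is precisely what your step (4) establishes. You should stop at step (4) and flag the discrepancy rather than force the stated sign.
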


The proof of Theorem~\ref{thm:lower} uses basic tools from information
theory, and the core ideas can be described as follows.
Assume that the codeword is $(X_1, X_2)$ where the adversary
acts on $X_1$, which is of length $\alpha n$. We show that
for any coding scheme with rate slightly larger than $(1-\alpha)n$,
there is a set $X_\eta \subseteq \zo^{\alpha n}$ such that
\begin{enumerate}
\item For some message $s_0$, $X_1$ lies in $X_\eta$ with
noticeable probability.

\item For a ``typical'' message $s_1$, $X_1$ is unlikely to
land in $X_\eta$.

\item There is a vector $w \in \zo^{\alpha n}$ that cannot be
extended to a codeword $(w, w')$ that maps to either $s_0$
or $s_1$ by the decoder.
\end{enumerate}
We then use the above properties to design the following 
strategy that violates non-malleability of the code: Given
$(X_1, X_2)$, if $X_1 \in X_\eta$, the adversary tampers
the codeword to $(w, X_2)$, which decodes to a message
outside $\{ s_0, s_1 \}$. This ensures that $\dec(f(\enc(s_0)))$
has a noticeable chance of being tampered to an incorrect message.
Otherwise, the adversary leaves the codeword unchanged, 
ensuring that $\dec(f(\enc(s_1)))$ has little chance of 
being tampered at all. Thus there is no choice for a
distribution $\cD_f$ that sufficiently matches both
$\dec(f(\enc(s_0)))$ and $\dec(f(\enc(s_1)))$.

\subsubsection*{Proof of Theorem~\ref{thm:lower}}

Throughout the proof, we use standard information theoretic 
tools, such as the notation $H(X)$ for the Shannon entropy of a
discrete random variable $X$ and $I(X; Y)$ for the mutual information between
discrete random variables $X$ and $Y$. 
We will need the following standard information-theoretic fact.

\begin{claim} \label{clm:entropy}
Suppose $H(X) \leq r$ and let $p(x) := \Pr[X=x]$. For any $\eta > 0$,
and define 
\[ X_\eta := \{ x\in \supp(X)\colon p(x) > \frac{1}{2^{r/(1-\eta)}} \}. \]
Then, $\Pr[X \in X_\eta] \geq \eta$ and $|X| < 2^{r/(1-\eta)}$.
\end{claim}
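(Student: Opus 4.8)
The plan is to prove the two assertions of the claim separately, both by elementary manipulations of the Shannon entropy bound $H(X)\le r$. Write $p(x):=\Pr[X=x]$ and set the threshold $\theta:=2^{-r/(1-\eta)}$, so that $X_\eta=\{x\in\supp(X)\colon p(x)>\theta\}$. For the cardinality bound, I would observe that each element of $X_\eta$ contributes mass strictly greater than $\theta$, while the total mass is at most $1$; hence $|X_\eta|<1/\theta=2^{r/(1-\eta)}$, which is the stated bound (and in particular $|X_\eta|\le|\supp(X)|$ is consistent). Note that the claim as stated writes $|X|<2^{r/(1-\eta)}$, which I read as $|X_\eta|$, since that is the only quantity the argument controls.

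For the probability bound, the key step is to split the entropy sum according to whether $x\in X_\eta$ or not, and to lower bound the contribution of the ``light'' elements. Let $q:=\Pr[X\in X_\eta]=\sum_{x\in X_\eta}p(x)$. For every $x\notin X_\eta$ we have $p(x)\le\theta$, hence $\log(1/p(x))\ge\log(1/\theta)=r/(1-\eta)$. Therefore
\[
r \;\ge\; H(X) \;=\; \sum_{x} p(x)\log\frac{1}{p(x)} \;\ge\; \sum_{x\notin X_\eta} p(x)\log\frac{1}{p(x)} \;\ge\; (1-q)\cdot\frac{r}{1-\eta}.
\]
Rearranging gives $1-q\le 1-\eta$, i.e.\ $q\ge\eta$, which is exactly $\Pr[X\in X_\eta]\ge\eta$. (Here I have used that the terms $p(x)\log(1/p(x))$ are nonnegative, so dropping the $x\in X_\eta$ part of the sum only decreases it; the case $r=0$ is degenerate and handled by noting $X$ is then a point mass, so $X_\eta=\supp(X)$ and $q=1\ge\eta$.)

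I do not expect any real obstacle here — the statement is a standard ``entropy implies a heavy element'' fact, and both halves follow from one-line counting/averaging arguments. The only point requiring a little care is the direction of the inequality $\log(1/p(x))\ge r/(1-\eta)$ for light elements and making sure the threshold $\theta$ is chosen so that this is exactly what is needed; this is precisely why the exponent $r/(1-\eta)$ rather than $r$ appears in the definition of $X_\eta$. The downstream use of the claim in the proof of Theorem~\ref{thm:lower} only needs that a $\ge\eta$ fraction of the mass sits on a set of size $<2^{r/(1-\eta)}$, so this formulation suffices.
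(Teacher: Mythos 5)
Your proof is correct and takes essentially the same route as the paper: the cardinality bound is immediate from the threshold, and the probability bound comes from lower-bounding the entropy contribution of the light elements by $(1-q)\cdot r/(1-\eta)$. The paper phrases this step as a proof by contradiction while you rearrange directly, and you additionally flag the degenerate $r=0$ case, but the underlying estimate is identical.
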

\begin{proof}
The upper bound on $|X_\eta|$ is immediate from the definition of $X_\eta$.
Let $\bar{X_\eta} := \supp(X) \setminus X_\eta$. We need to show that
$\Pr[X \in \bar{X_\eta}] \leq 1-\eta$. If this is not the case, we can write
\begin{align*}
H(X) &\geq
\sum_{x \in \bar{X_\eta}} p(x) \log(1/p(x)) \\
&\geq \sum_{x \in \bar{X_\eta}} \frac{r p(x)}{1-\eta} \\
&= \Pr[x \in \bar{X_\eta}] r/(1-\eta) > r,
\end{align*}
a contradiction.
\end{proof}

Suppose there is a coding scheme $(\enc, \dec)$ that is non-malleable
for the family $\cF_T$ and achieving rate at least $1-\alpha+\delta$,
for an arbitrarily small parameter $\delta \in (0, \alpha]$.
Let $S \sim \cU_k$, $X := \enc(S)$ and suppose $X = (X_1, X_2)$ where
$X_1 := X_T$ and $X_2 := X_{\bar{T}}$.

For any $s \in \zo^k$, define $E(s) := \supp(\enc(s))$. Observe that
\[
\E_S |E(S)| \leq N/N^{1-\alpha+\delta} = N^{\alpha - \delta}
\]
By Markov's bound, for any $\gamma \in (0,1]$,
\begin{equation} \label{eqn:entropy:markov}
\Pr[|E(S)| > N^{\alpha - \delta}/\gamma] <\gamma.
\end{equation}

By the assumption on rate, $H(S) \geq n(1-\alpha+\delta)$. Also,
$H(X_2|S) \leq H(X_2) \leq n-|T| = n(1-\alpha)$. Thus,
\[
I(X_2; S) = H(S) - H(S|X_2)
\]
Using the chain rule for mutual information,
\begin{align}
I(X_1; S) &= I(X_1, X_2 ; S) - I(X_2 ; S | X_1) \nonumber \\
&= (H(S) - H(S|X_1, X_2)) - (H(X_2|X_1) - H(X_2| S, X_1)) \nonumber \\
&\geq H(S) - H(X_2|X_1) \label{eqn:entropy:rel:a} \\
&\geq H(S) - H(X_2) \label{eqn:entropy:rel:b} \\
&\geq (1-\alpha+\delta)n - (1-\alpha)n = \delta n, \label{eqn:entropy:rel:c}
\end{align}
where \eqref{eqn:entropy:rel:a} holds because $S = \dec(X_1, X_2)$ and thus 
$H(S|X_1, X_2) = 0$, in addition to non-negativity of entropy;
\eqref{eqn:entropy:rel:b} uses the fact that conditioning does not increase entropy;
and \eqref{eqn:entropy:rel:c} holds because of the assumption on the rate of the
code and the length of $X_2$. From this, we can deduce that
\begin{equation*} 
H(X_1|S) = H(X_1) - I(X_1; S) \leq H(X_1) - \delta n.
\end{equation*}
Note that the latter inequality in particular implies that $H(X_1) \geq \delta n$, 
and that $\supp(X_1) \geq 2^{\delta n}$.
By Markov's bound,
\begin{equation} \label{eqn:entropy:markov:b}
|\{ s \in \zo^k\colon H(X_1|S=s) > (H(X_1) - \delta n)(1+4\gamma) \}| < \frac{2^k}{1+4\gamma} 
\leq (1-2\gamma)2^k.
\end{equation}
By combining \eqref{eqn:entropy:markov} and \eqref{eqn:entropy:markov:b} using a union bound,
there is a choice of $s_0 \in \zo^k$ such that
\begin{equation*}
|E(s_0)| \leq N^{\alpha - \delta}/\gamma,\text{ and, } H(X_1|S=s_0) \leq (H(X_1) - \delta n)(1+4\gamma).
\end{equation*}
We can take $\gamma := \delta/(8\alpha)$ so that the above becomes
\begin{equation} \label{eqn:entropy:X1}
|E(s_0)| \leq 8 \alpha N^{\alpha - \delta}/\delta,\text{ and, } 
H(X_1|S=s_0) \leq H(X_1) - \delta n/2.
\end{equation}

For a parameter $\eta > 0$, to be determined later, we can now apply
Claim~\ref{clm:entropy} to the conditional distribution of
$X_1$ subject to $S = s_0$ and construct a set $X_\eta \subseteq \zo^{\alpha n}$
such that 
\begin{align}
&\Pr[X_1 \in X_\eta | S = s_0] \geq \eta, \label{eqn:entropy:s0} \\
&|X_\eta| \leq 2^{(H(X_1) - \delta n/2)/(1-\eta)}. \nonumber
\end{align}

Let $\eta' := \Pr[X_1 \in X_\eta]$, and let $h(\cdot)$ denote the binary entropy
function. Using a simple information-theoretic
rule that follows from the definition of Shannon entropy, we can write
\begin{align}
H(X_1) &= h(\eta') + \eta' H(X_1 | X_1 \in X_\eta) + (1-\eta') H(X_1 | X_1 \notin X_\eta) \nonumber \\
&\leq h(\eta') + \eta' \cdot \frac{H(X_1) - (\delta/2) n}{1-\eta} + (1-\eta') H(X_1 | X_1 \notin X_\eta)  \label{eqn:entropy:rel:d} \\
&\leq h(\eta') + \eta' \cdot \frac{H(X_1) - (\delta/2) n}{1-\eta} + (1-\eta') H(X_1), \label{eqn:entropy:rel:e}
\end{align}
where \eqref{eqn:entropy:rel:d} is due to the upper bound on the support size of $X_\eta$
and \eqref{eqn:entropy:rel:e} holds since conditioning does not increase entropy.
After simple manipulations, \eqref{eqn:entropy:rel:e} simplifies to
\begin{equation}
\eta' \leq \frac{2 h(\eta') (1-\eta)}{\delta n - 2\eta H(X_1)} \leq \frac{2 h(\eta')}{n(\delta - 2\eta \alpha)}.
\end{equation}
Now, we take $\eta := \delta/(4\alpha)$, so that the above inequalities, combined
with the estimate $h(\eta') = O(\eta' \log(1/\eta'))$ yields
\begin{equation*}
h(\eta')/\eta' \geq \delta n/4 \Rightarrow \log(1/\eta') = \Omega(\delta n)
\Rightarrow \eta' \leq \exp(-\Omega(\delta n)).
\end{equation*}
From the above inequality, straightforward calculations ensure that
\begin{equation} \label{eqn:entropy:etaP}
\eta' \leq \eta/4 = \delta/(16 \alpha),
\end{equation}
as long as $\delta \geq \delta_0 = O((\log n)/n)$.

From \eqref{eqn:entropy:etaP}, recalling that $\eta' = \Pr[X_1 \in X_\eta]$ and using Markov's bound, 
\[
| \{ s\colon \Pr[X_1 \in X_\eta | S = s] > \eta/2 \} |/2^{k} < 1/2.
\]
Combined with \eqref{eqn:entropy:markov} and a union bound, 
there is a fixed $s_1 \in \zo^k$ such that
\begin{equation} \label{eqn:entropy:s1}
|E(s_1)| \leq 8 \alpha N^{\alpha - \delta}/\delta,\text{ and, } 
\Pr[X_1 \in X_\eta | S = s_1] \leq \eta/2.
\end{equation}

Assuming the chosen lower bound for $\delta$, we can also ensure that, using
\eqref{eqn:entropy:X1}, that
$|E(s_0) \cup E(s_1)| < N^\alpha$. Thus, there is a fixed string $w \in \zo^{\alpha n}$
that cannot be extended to any codeword in $E(s_0)$ or in $E(s_1)$; i.e., 
\[
\Pr[X_1 = w | (S = s_0) \lor (S = s_1)] = 0,
\]
which in turn implies
\begin{equation} \label{eqn:entropy:w}
(\forall x_2 \in \zo^{n(1-\alpha)})\colon \dec(w, x_2) \notin \{s_0, s_1\}.
\end{equation}
Now, we consider the following tampering strategy 
$f\colon \zo^{|T|} \times \zo^{n-|T|} \to \zo^{|T|} \times \zo^{n-|T|}$ 
acting on the coordinate
positions in $T$:
\begin{itemize}
\item Given $(x_1, x_2) \in \zo^{|T|} \times \zo^{n-|T|}$, 
if $x_1 \in X_\eta$, output $(w, x_2)$.
\item Otherwise, output $(x_1, x_2)$.
\end{itemize}

Suppose the coding scheme $(\enc, \dec)$ satisfied Definition~\ref{def:nmCode}
for a particular distribution $\cD_f$ over $\zo^n \cup \{\same, \perp\}$ for the
tampering function $f$.

Since $f$ does not alter any string with the first component outside $X_\eta$,
\eqref{eqn:entropy:s1} implies that
\begin{equation} \label{eqn:entropy:fSame}
\Pr[f(X_1, X_2) = (X_1, X_2) | S = s_1] \geq 1-\eta/2.
\end{equation}
On the other hand, by \eqref{eqn:entropy:s0} and \eqref{eqn:entropy:w},
\begin{equation} \label{eqn:entropy:fTamper}
\Pr[\dec(f(X_1, X_2)) \notin \{s_0, s_1\} | S = s_0 ] \geq \eta.
\end{equation}
By \eqref{eqn:entropy:fTamper} and Definition~\ref{def:nmCode}, 
$\cD_f$ must be $\eps$-close to a distribution
$D_0$ that assigns at most $1-\eta$ of the probability mass to $\{ \same, s_0, s_1 \}$.
On the other hand, by \eqref{eqn:entropy:fSame}, $\cD_f$ must be $\eps$-close to a distribution $D_1$ that assigns at least $1-\eta/2$ of the probability mass to $\{ \same, s_1 \}$.
Thus, the statistical distance between $D_0$ and $D_1$ is at least $\eta/2$
(from the distinguisher corresponding to the event $\{ \same, s_1 \}$).
By triangle inequality, however, $D_0$ and $D_1$ are $(2\eps)$-close.
Therefore, $\eps \geq \eta/4$ and the result follows.

\bibliographystyle{abbrv}
\bibliography{\jobname} 

\appendix

\section{Rate $1/2$ barrier for the uniform coding scheme.}
\label{sec:barrier}

Dziembowski et al.~\cite{ref:nmc} consider the uniformly random 
coding scheme $(\enc, \dec)$ in which the decoder $\dec$ maps
any given input $x \in \zo^n$ to a uniform and independent random string in $\zo^k$.
Moreover, the encoder, given $s \in \zo^k$, outputs a uniformly random element of
$\dec^{-1}(s)$.
In this section, we argue that the uniform
coding scheme cannot achieve a rate better than $1/2$ even with respect to very
simple tampering functions.

Suppose that the scheme is indeed non-malleable with error upper bounded by a small constant
(say $1/4$), and consider any bijective tampering function 
$f\colon \zo^n \to \zo^n$. For example,
one may think of $f$ as the function that flips the first bit of the input.
For simplicity, we assume that the coding scheme achieves strong non-malleability (as
proved by Dziembowski et al.~\cite{ref:nmc}.  Since the chosen tampering function
does not have any fixed points (i.e., $f(x) \neq x$), Definition~\ref{def:nmCode:strong}
implies that there is a choice of $\cD_f$ that has no support on $\{\same\}$,
and we can restrict to such a distribution.  However, it can be shown that
the argument extends to the weaker definition of non-malleability as well.

Let $X := \enc(U_k)$ and observe that $\distr(X) = \cU_n$, which
in turn implies that $\distr(f(X)) = \cU_n$. Consider $S := \dec(f(X))$.
Note that $\distr(S)$ is a random variable depending on the randomness
of the code construction (namely, it is the empirical distribution of the
truth table of the decoder). With respect to this randomness, we have
\[
\E[\distr(S)] = \cU_k.
\]
Moreover, with overwhelming probability, the realization of
the code is so that
\[
\distr(S) \approx_{o(1)} \cU_k.
\]
Suppose this is the case and fix the randomness of the code construction accordingly.

Since for every $s \in \zo^k$, we know that $\distr(\dec(f(\enc(s))))$ is close
(in the sense described by Definition~\ref{def:nmCode}) to $\cD_f$, it 
follows that the convex combination 
\[
\sum_{s \in \zo^k} \Pr[\dec(U_n) = s]\cdot \distr(\dec(f(\enc(s))))
\]
is equally close to $\cD_f$. 
But, since $f(\enc(\dec(\U_n))) = f(\U_n) = \U_n$,
the above convex combination is exactly $\distr(\dec(\U_n)) = \distr(S)$, which
we know is close to $\U_k$. 

Thus it follows that for every $s \in \zo^k$, 
\[
(s, \dec(f(\enc(s)))) \approx_{o(1)} (s, \cU_k),
\]
and, for $U \sim \cU_k$,
\begin{equation}
\label{eqn:lower:uniform:entropy}
(U, \dec(f(\enc(U)))) \approx_{o(1)} \cU_{2k}.
\end{equation}
Since $(U, \dec(f(\enc(U))))$ is a function of $\enc(U)$, 
we get 
\[
H(U, \dec(f(\enc(U)))) \leq n.
\]
On the other hand, \ref{eqn:lower:uniform:entropy} implies that the above
entropy is close to $2k$. Thus, $k \leq (n/2) (1+o(1))$.

\section{Useful tools}

In many occasions in the paper, we deal with a chain of correlated random variables
$0 = X_0, X_1, \ldots, X_n$ where we wish to understand an event depending on $X_i$
conditioned on the knowledge of the previous variables. That is, we wish to understand
\[
\E[f(X_i) | X_0, \ldots, X_{i-1}].
\]
The following proposition shows that in order to understand the above quantity,
it suffices to have an estimate with respect to a more restricted event than
the knowledge of $X_0, \ldots, X_{i-1}$. Formally, we can state the following,
where $X$ stands for $X_i$ in the above example and $Y$ stands for $(X_0, \ldots, X_{i-1})$.

\begin{prop} \label{prop:restriction}
Let $X$ and $Y$ be possibly correlated random variables and let
$Z$ be a random variable such that the knowledge of $Z$ determines $Y$; 
that is, $Y = f(Z)$ for some function $f$. Suppose that
for every possible outcome of the random variable $Z$, namely, for
every $z \in \supp(Z)$, and for some real-valued function $g$, we have 
\begin{equation} \label{eqn:prop:restriction}
\E[g(X) | Z = z] \in I.
\end{equation}
for a particular interval $I$. Then, for every $y \in \supp(Y)$,
\[
\E[g(X) | Y = y] \in I.
\]
Similarly, suppose for some distribution $\cD$, and all $z \in \supp(Z)$,
\[
\distr( X | Z = z ) \approx_\eps \cD.
\]
Then, for all $y \in \supp(Y)$,
\[
\distr( X | Y = y ) \approx_\eps \cD.
\]
\end{prop}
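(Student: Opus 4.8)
The plan is to reduce both assertions to one elementary observation: conditioning on $Y$ is a coarsening of conditioning on $Z$, so the conditional expectation (resp.\ conditional distribution) of $X$ given the event $\{Y=y\}$ is a \emph{convex combination} of the conditional expectations (resp.\ distributions) of $X$ given the finer events $\{Z=z\}$ that make up $\{Y=y\}$. The conclusion then follows because an interval and the property ``being $\eps$-close to $\cD$'' are both convex conditions.

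In detail, I would first fix $y\in\supp(Y)$. Since $Y=f(Z)$, the event $\{Y=y\}$ is the disjoint union of the events $\{Z=z\}$ over $z\in f^{-1}(y)\cap\supp(Z)$, and this index set is nonempty precisely because $y\in\supp(Y)$. Setting $\lambda_z:=\Pr[Z=z\mid Y=y]$, the $\lambda_z$ are nonnegative and sum to $1$, and the law of total expectation gives
\[
\E[g(X)\mid Y=y] \;=\; \sum_{z}\lambda_z\,\E[g(X)\mid Z=z].
\]
By hypothesis \eqref{eqn:prop:restriction} every term $\E[g(X)\mid Z=z]$ lies in the interval $I$, and a convex combination of points of an interval stays in the interval; this proves the first part. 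For the second part the argument is verbatim with numbers replaced by distributions: the same partition yields
\[
\distr(X\mid Y=y) \;=\; \sum_{z}\lambda_z\,\distr(X\mid Z=z),
\]
a convex combination of distributions, and since statistical distance to a fixed $\cD$ is a convex function (it is half an $\ell_1$-distance, hence subadditive and positively homogeneous),
\[
\dist\bigl(\distr(X\mid Y=y),\cD\bigr)\;\le\;\sum_{z}\lambda_z\,\dist\bigl(\distr(X\mid Z=z),\cD\bigr)\;\le\;\eps,
\]
using $\distr(X\mid Z=z)\approx_\eps\cD$ for every $z\in\supp(Z)$ in the last step.

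There is no genuinely hard step here; the only care needed is in the bookkeeping of conditional probabilities. One must check that the $\lambda_z$ are well defined (we sum only over $z$ with $\Pr[Z=z]>0$, and at least one such $z$ lies in $f^{-1}(y)$ because $y\in\supp(Y)$) and that $\{Y=y\}$ is genuinely partitioned by the events $\{Z=z\}$; both are immediate, and since all random variables in the paper are discrete no measure-theoretic subtleties arise. The only real ``obstacle'' is in how the proposition is invoked later: in each application (e.g.\ with $X=X_i$, $Y=(X_0,\dots,X_{i-1})$, $Z=\reveal_i$) one must verify that the tuple being conditioned upon is a deterministic function of $Z$, which is exactly the hypothesis $Y=f(Z)$ that licenses the coarsening.
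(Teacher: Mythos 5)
Your proof is correct and is essentially identical to the paper's: both decompose the event $\{Y=y\}$ into the atoms $\{Z=z\}$ with $z\in f^{-1}(y)\cap\supp(Z)$, express the conditional expectation (resp.\ distribution) as the corresponding convex combination with weights $\Pr[Z=z\mid Y=y]$, and conclude by convexity of an interval (resp.\ convexity of the statistical-distance ball around $\cD$). No substantive difference from the paper's argument.
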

\begin{proof}
Let $T = \{ z \in \supp(Z)\colon f(z) = y \}$, and let 
$p(z) := \Pr[Z = z | Y = y]$. Then,
\[
\E[g(X) | Y = y] = \sum_{z \in T} p(z) \E[g(X) | Z = z].
\] 
Since by \eqref{eqn:prop:restriction}, each $\E[g(X) | Z = z]$ lies in $I$ 
and $\sum_{z \in T} p(z) = 1$, we deduce that
\[
\E[g(X) | Y = y] \in I.
\]
Proof of the second part is similar, by observing that if a collection
of distributions is statistically close to a particular distribution $\cD$,
any convex combination of them is equally close to $\cD$ as well.
\end{proof}

\begin{prop} \label{prop:uniformity}
Let the random variable $X \in \zo^n$ be uniform on a set of size
at least $(1-\eps)2^n$. Then, $\cD(X)$ is $(\eps/(1-\eps))$-close to $\U_n$.
\end{prop}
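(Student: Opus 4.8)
The plan is a direct computation of the statistical distance. Write $N := 2^n$, let $S := \supp(X) \subseteq \zo^n$ be the set on which $X$ is uniform, and set $m := |S| \ge (1-\eps) N$. Then $\distr(X)(x) = 1/m$ for $x \in S$ and $\distr(X)(x) = 0$ otherwise, while $\U_n(x) = 1/N$ for every $x$, so everything reduces to evaluating $\dist(\distr(X), \U_n) = \tfrac12 \sum_{x \in \zo^n} |\distr(X)(x) - \U_n(x)|$.

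First I would split the sum over the two types of points. For $x \in S$ the summand is $|1/m - 1/N| = (N-m)/(mN)$, and there are $m$ such points, contributing $(N-m)/N$ to the sum; for $x \notin S$ the summand is $1/N$, and there are $N-m$ such points, again contributing $(N-m)/N$. Adding the two parts and using the factor $\tfrac12$ gives $\dist(\distr(X), \U_n) = (N-m)/N$. (Equivalently one can view $\U_n$ as the mixture $\tfrac{m}{N}(\text{uniform on }S) + \tfrac{N-m}{N}(\text{uniform on the complement})$, which immediately yields the same bound on the distance to $\distr(X)$.)

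Finally I would substitute $m \ge (1-\eps)N$, obtaining $\dist(\distr(X), \U_n) = (N-m)/N \le \eps$, and since $1-\eps \le 1$ we have $\eps \le \eps/(1-\eps)$, so in particular the distance is at most $\eps/(1-\eps)$ as claimed (in fact the slightly sharper bound $\eps$ holds). There is essentially no obstacle here: the only points requiring any care are the bookkeeping of the two cases in the sum and the factor $\tfrac12$ in the definition of statistical distance; the weaker form $\eps/(1-\eps)$ in the statement is just a convenient way of recording the estimate, matching how Proposition~\ref{prop:uniformity} is invoked elsewhere.
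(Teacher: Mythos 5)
Your computation is correct, and since the paper states Proposition~\ref{prop:uniformity} without proof (it is a standard fact deferred to the appendix), there is no competing argument in the paper to compare against. You rightly observe that the exact statistical distance is $(N-m)/N \leq \eps$, which is sharper than the stated $\eps/(1-\eps)$; the weaker form in the proposition is just harmless slack, presumably chosen to match the shape of the estimates where it is invoked (e.g., inside Claim~\ref{claim:upper:EiUniform}, where the denominator $1 - 3\gamma V$ arises naturally from other terms in the same bound). The mixture view of $\U_n$ you mention in passing is also a clean way to see it: since $\distr(X)$ and the uniform distribution on the complement of $\supp(X)$ have disjoint supports, the distance to the mixture $\tfrac{m}{N}\distr(X) + \tfrac{N-m}{N}(\text{uniform on complement})$ is exactly $(N-m)/N$.
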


We will use the following tail bounds on summation of possibly dependent 
random variables, which are direct consequences of Azuma's inequality.

\begin{prop} \label{prop:simpleAzuma}
Let $0 = X_0, X_1, \ldots, X_n$ be possibly correlated random variables in $[0,1]$
such that for every $i \in [n]$ and for some $\gamma \geq 0$, 
\[
\E[X_i | X_0, \ldots, X_{i-1}] \leq \gamma.
\]
Then, for every $c \geq 1$,
\[
\Pr[\sum_{i=1}^n X_i \geq cn\gamma] \leq \exp(- n\gamma^2 (c-1)^2/2),
\]
or equivalently, for every $\delta > \gamma$,
\[
\Pr[\sum_{i=1}^n X_i \geq n\delta] \leq \exp(- n(\delta-\gamma)^2/2).
\]
\end{prop}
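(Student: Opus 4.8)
The plan is to reduce the whole statement to a single application of Azuma's inequality to the \emph{compensated} partial sums, which is what makes the one-sided hypothesis $\E[X_i\mid X_0,\dots,X_{i-1}]\le\gamma$ suffice. Write $S_k := X_1+\cdots+X_k$ (with $S_0:=0$), let $\mu_i := \E[X_i\mid X_0,\dots,X_{i-1}]$ — a function of $X_0,\dots,X_{i-1}$ which, by hypothesis together with the paper's convention that conditioning on random variables means the bound holds for every realization, satisfies $0\le\mu_i\le\gamma$ surely — and set
\[
Y_0 := 0, \qquad Y_k := \sum_{i=1}^{k}(X_i-\mu_i).
\]
Then $\E[Y_k-Y_{k-1}\mid X_0,\dots,X_{k-1}] = \E[X_k\mid X_0,\dots,X_{k-1}]-\mu_k = 0$, so $(Y_k)_{k=0}^{n}$ is a martingale for the filtration generated by the $X_i$; and since $X_k\in[0,1]$ and $\mu_k\in[0,1]$, its increments obey $|Y_k-Y_{k-1}| = |X_k-\mu_k|\le 1$ almost surely.

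Next I would invoke Azuma's inequality with all difference bounds equal to $1$, which gives, for every $\lambda>0$,
\[
\Pr[Y_n\ge\lambda] \le \exp\!\big(-\lambda^2/(2n)\big).
\]
Because $\mu_i\le\gamma$ surely, one has $S_n = Y_n+\sum_{i=1}^n\mu_i \le Y_n+n\gamma$, so for any $\delta>\gamma$ the choice $\lambda := n(\delta-\gamma)>0$ yields
\[
\Pr\Big[\sum_{i=1}^{n}X_i\ge n\delta\Big] \;\le\; \Pr\big[Y_n\ge n(\delta-\gamma)\big] \;\le\; \exp\!\big(-n(\delta-\gamma)^2/2\big),
\]
which is the second displayed bound. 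The first bound is then the specialization $\delta := c\gamma$ with $c\ge 1$, since $\delta-\gamma = (c-1)\gamma$ turns the exponent into $-n(c-1)^2\gamma^2/2$ (the degenerate cases $c=1$ and $\gamma=0$ being vacuous: the exponent is $0$ and the right-hand side is $1$, and when $\gamma=0$ in fact $X_i\equiv0$ a.s.). Conversely the second bound follows from the first via $c=\delta/\gamma$ when $\gamma>0$, so the two forms are equivalent.

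This is essentially routine, and I do not anticipate a genuine obstacle; the only points that need a moment's care are (i) introducing the compensated martingale $Y_k=\sum_{i\le k}(X_i-\mu_i)$ together with the elementary bound $\sum\mu_i\le n\gamma$, so that the inequality hypothesis rather than an equality is enough; (ii) checking the bounded-difference condition $|X_k-\mu_k|\le 1$, which holds merely because $X_k$ and its conditional mean both lie in $[0,1]$; and (iii) verifying the equivalence of the two displayed forms. If one prefers not to cite Azuma as a black box, the same bound drops out of a direct moment-generating-function argument: convexity of $x\mapsto e^{\lambda x}$ on $[0,1]$ gives $\E[e^{\lambda X_i}\mid X_0,\dots,X_{i-1}] \le 1+\mu_i(e^{\lambda}-1) \le e^{\gamma(e^{\lambda}-1)}$ for $\lambda\ge 0$, iterating to $\E[e^{\lambda S_n}]\le e^{n\gamma(e^{\lambda}-1)}$, after which Markov's inequality and an optimization over $\lambda$ recover the claim (indeed in the slightly stronger multiplicative form), at the cost of performing that optimization.
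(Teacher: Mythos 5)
Your proof is correct and follows essentially the same route as the paper's: both form a compensated partial sum with bounded increments and apply Azuma's inequality, then translate the deviation bound back to $\sum X_i$. The only cosmetic difference is that you center by the exact conditional means $\mu_i$ (producing a genuine martingale and making the increment bound $|X_k-\mu_k|\le 1$ immediate), while the paper centers by $\gamma$ and invokes Azuma for a supermartingale.
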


\begin{proof}
The proof is a standard Martingale argument. For $i \in [n]$, define
\[
X'_i := X_i - \gamma,
\]
and
\[
S_i := \sum_{j=1}^i X'_i  = \sum_{j-1}^i X_i - i \gamma.
\]
By assumption, $S_i$ is a super-martingale, that is, assuming $S_0 := 0$,
\[
\E[S_{i+1} | S_0, \ldots, S_i] \leq S_i.
\]
Thus, by Azuma's inequality, for all $t \geq 0$,
\[
\Pr[S_n \geq t] \leq \exp(-t^2/(2n)).
\]
Substituting $t := (c-1) n \gamma$ proves the claim.
\end{proof}

In a similar fashion (using Azuma's inequality for sub-martingales
rather than super-martingales in the proof), we may obtain a tail bound
when we have a lower bound on conditional expectations.

\begin{prop} \label{prop:simpleAzuma:sub}
Let $0 = X_0, X_1, \ldots, X_n$ be possibly correlated random variables in $[0,1]$
such that for every $i \in [n]$ and for some $\gamma \geq 0$,
\[
\E[X_i | X_0, \ldots, X_{i-1}] \geq \gamma.
\]
Then, for every $\delta < \gamma$,
\[
\Pr[\sum_{i=1}^n X_i \leq n\delta] \leq \exp(- n(\delta-\gamma)^2/2). \qedhere \]
\end{prop}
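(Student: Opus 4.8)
The plan is to mirror the proof of Proposition~\ref{prop:simpleAzuma}, but working with a sub-martingale in place of a super-martingale. First I would center the variables: set $X'_i := X_i - \gamma$ for $i \in [n]$ and let $S_i := \sum_{j=1}^i X'_j = \big(\sum_{j=1}^i X_j\big) - i\gamma$, with $S_0 := 0$. Since each $X'_i$ is a function of $X_0,\ldots,X_i$ and the hypothesis gives $\E[X_i \mid X_0,\ldots,X_{i-1}] \ge \gamma$, we have
\[
\E[S_i \mid X_0,\ldots,X_{i-1}] = S_{i-1} + \E[X_i \mid X_0,\ldots,X_{i-1}] - \gamma \ge S_{i-1},
\]
so $(S_i)$ is a sub-martingale with respect to the natural filtration of the sequence $X_0, X_1, \ldots, X_n$. (As in the companion proposition, conditioning on $S_0,\ldots,S_{i-1}$ is implied by conditioning on $X_0,\ldots,X_{i-1}$, cf.\ Proposition~\ref{prop:restriction}.)

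Next I would verify the bounded-difference condition. The hypothesis forces $\gamma \in [0,1]$ (since $\E[X_i \mid \cdot] \le 1$), and $X_i \in [0,1]$, so each increment $X'_i = X_i - \gamma$ lies in the interval $[-\gamma,\, 1-\gamma]$, which has length $1$; in particular $|X'_i| \le 1$. Azuma's inequality applied to the sub-martingale $(S_i)$ then gives, for every $t \ge 0$,
\[
\Pr[S_n \le -t] \le \exp(-t^2/(2n)).
\]

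Finally, I would translate back: the event $\sum_{i=1}^n X_i \le n\delta$ is precisely the event $S_n \le n\delta - n\gamma = -n(\gamma-\delta)$, and since $\delta < \gamma$ the quantity $t := n(\gamma - \delta)$ is positive. Substituting this $t$ into the displayed bound yields
\[
\Pr\Big[\textstyle\sum_{i=1}^n X_i \le n\delta\Big] \le \exp\big(-n^2(\gamma-\delta)^2/(2n)\big) = \exp\big(-n(\delta-\gamma)^2/2\big),
\]
which is the claimed inequality. I do not expect any genuine obstacle here: the argument is the exact symmetric counterpart of the proof of Proposition~\ref{prop:simpleAzuma}, and the only points needing a word of care are the uniform bounded-difference bound $|X'_i|\le 1$ and the harmless degenerate regime $\delta < 0$ (where the left-hand probability is simply $0$ and the inequality holds trivially).
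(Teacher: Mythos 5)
Your proof is correct and is exactly the argument the paper intends: the paper states Proposition~\ref{prop:simpleAzuma:sub} without a written proof, remarking only that one should re-run the proof of Proposition~\ref{prop:simpleAzuma} with Azuma's inequality for sub-martingales in place of super-martingales, and this is precisely what you do (center by $\gamma$, observe the sub-martingale property, check $|X'_i|\le 1$, substitute $t=n(\gamma-\delta)$). No gap or deviation to report.
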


The following tail bound is similar in flavor to the one given
by Proposition~\ref{prop:simpleAzuma}, but only applies to indicator
random variables. However, it can be better
when the individual expectations are low and the target deviation
from mean is very large.

\begin{prop} \label{prop:chernoff:dependent}
Let $0=X_0, X_1, \ldots, X_n \in \zo$ be indicator, possibly dependent, random variables such that
for every $i \in [n]$, \[ \E[X_i | X_1, \ldots, X_{i-1} ] \leq p,\]
for some $p \in [0, 1]$. Let $X := X_1 + \cdots + X_n$. Then, for every $c \geq 1$,
\[
\Pr[X > cnp] \leq (e/c)^{cnp}.
\]
\end{prop}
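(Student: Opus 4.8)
The plan is to run the exponential moment (Chernoff/Bernstein) method, adapted to the dependent setting by peeling the variables off one at a time. First I would fix a parameter $\lambda \geq 0$, to be chosen at the end as $\lambda := \ln c$ (legitimate since $c \geq 1$), and estimate the moment generating function $\E[e^{\lambda X}]$. The key elementary observation is that, since each $X_i$ is an indicator, $e^{\lambda X_i} = 1 + (e^\lambda - 1)X_i$ \emph{exactly}, so that for every realization of $X_1,\ldots,X_{i-1}$,
\[
\E\big[e^{\lambda X_i} \mid X_1,\ldots,X_{i-1}\big] = 1 + (e^\lambda - 1)\,\E\big[X_i \mid X_1,\ldots,X_{i-1}\big] \leq 1 + (e^\lambda - 1)p \leq \exp\big((e^\lambda - 1)p\big),
\]
using the hypothesis $\E[X_i \mid X_1,\ldots,X_{i-1}] \leq p$, the fact that $e^\lambda \geq 1$, and $1 + x \leq e^x$.

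Next I would iterate this bound. Conditioning on $X_1,\ldots,X_{n-1}$ and pulling the factor $e^{\lambda(X_1 + \cdots + X_{n-1})}$ (which is measurable with respect to what we condition on) out of the inner expectation, the display above gives $\E[e^{\lambda X}] \leq \exp\big((e^\lambda - 1)p\big)\,\E[e^{\lambda(X_1 + \cdots + X_{n-1})}]$; repeating $n$ times yields $\E[e^{\lambda X}] \leq \exp\big(n(e^\lambda - 1)p\big)$. Applying Markov's inequality to the nonnegative variable $e^{\lambda X}$ then gives
\[
\Pr[X > cnp] \leq e^{-\lambda c n p}\,\E[e^{\lambda X}] \leq \exp\!\big(np\,((e^\lambda - 1) - \lambda c)\big).
\]

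Finally I would optimize the exponent over $\lambda \geq 0$. The function $\lambda \mapsto (e^\lambda - 1) - \lambda c$ is convex with derivative $e^\lambda - c$, so it is minimized at $\lambda = \ln c$ (valid as $c \geq 1$), giving exponent $np(c - 1 - c\ln c)$. Since $-1 \leq 0$, this is at most $np(c - c\ln c) = cnp(1 - \ln c)$, and $\exp\big(cnp(1-\ln c)\big) = (e/c)^{cnp}$, which is exactly the claimed bound; the degenerate cases $c = 1$ (bound trivial) and $p = 0$ ($X = 0$ a.s.) are immediate. I do not expect a genuine obstacle here — the only point needing a little care is the iterated-conditioning step, where one must use that $\E[e^{\lambda X_i} \mid X_1,\ldots,X_{i-1}] \leq \exp((e^\lambda-1)p)$ holds for \emph{every} realization (so that the constant factor can be pulled through the outer expectation), which is precisely the "holds for every realization" reading of conditional bounds used throughout the paper (cf. Proposition~\ref{prop:restriction}).
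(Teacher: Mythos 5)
Your proof is correct and follows essentially the same route as the paper: bound the moment generating function by peeling off one conditional expectation at a time (using that the conditional bound holds for every realization of the prior variables), apply Markov to $e^{\lambda X}$, and set $\lambda = \ln c$. The only cosmetic difference is that you keep the factor $e^{-np}$ (via $1+(e^\lambda-1)p \le e^{(e^\lambda-1)p}$) and then discard it, whereas the paper discards the $-p$ term one step earlier via $pe^t+(1-p)\le pe^t+1$; the resulting bound is identical.
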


\begin{proof}
We closely follow the standard proof of Chernoff bounds for independent indicator random variables
(see, e.g., \cite{ref:MR}). Using Markov's bound on the exponential moment of $X$, we can
write, for a parameter $t > 0$ to be determined later,
\begin{align} \label{eqn:exp:moment}
\Pr[X > cnp] &\leq \frac{\E[\exp(tX)]}{\exp(tcnp)} = \frac{\E[\exp(tX_1) \cdots \exp(tX_n)]}{\exp(tcnp)}.
\end{align}
However, we can write down the expectation of product as the following chain of conditional
expectations
\begin{align*}
\E_{(X_1, \ldots, X_n)}[\exp(tX)] &= \E_{X_1}\Big[e^{tX_1} \E_{(X_2|X_1)} \big[e^{tX_2} \ldots \E_{(X_n|X_1,\ldots,X_{n-1})}
 e^{tX_n}] \ldots\big]\Big] \\
 &\leq (p \exp(t) + 1)^n.
\end{align*}
where the inequality uses the fact that the $X_i$ are Bernoulli random variables and
thus 
\[\E[\exp( tX_{i} ) | X_1, \ldots, X_{i-1}] \leq p\exp(t) + (1-p) \exp(0) \leq p \exp(t) + 1.\]
Using the inequality $(1+x)^n \leq \exp(nx)$ the above simplifies to
\begin{align*}
\E[\exp(tX)] \leq \exp(np\exp(t)),
\end{align*}
and thus, plugging the above result into \eqref{eqn:exp:moment},
\[
\Pr[X > cnp] \leq \frac{\exp(np\exp(t))}{\exp(tcnp)}.
\]
Choosing $t := \ln c$ yields the desired conclusion.
\end{proof}

For summation of $\ell$-wise independent random variables, we use the following tail
bound from \cite{ref:BR94}:
\begin{thm} \label{thm:BR94}
Let $\ell > 1$ be an even integer, and let $X_1, \ldots, X_n \in [0, 1]$ be
$t$-wise independent variables. Define $X := X_1 + \cdots + X_n$ and
$\mu := \E[X]$. Then,
\[
\Pr[|X-\mu| \geq A] \leq 8 \Big( \frac{\ell(\mu+\ell)}{A^2} \Big)^{\ell/2}.
\qedhere \]
\end{thm}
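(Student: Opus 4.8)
The plan is to use the high-moment (Chebyshev) method. Since $\ell$ is even, $(X-\mu)^\ell\ge 0$, so Markov's inequality gives $\Pr[|X-\mu|\ge A]\le \E[(X-\mu)^\ell]/A^\ell$; thus it suffices to prove the central-moment bound $\E[(X-\mu)^\ell]\le 8\,(\ell(\mu+\ell))^{\ell/2}$, which upon division by $A^\ell$ is exactly the claimed estimate.

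To bound the moment I would set $Y_i:=X_i-\E[X_i]$ and $\mu_i:=\E[X_i]$, so that $\E[Y_i]=0$, $|Y_i|\le 1$, $\sum_i\mu_i=\mu$, and the $Y_i$ are still $\ell$-wise independent. Expanding $\E[(X-\mu)^\ell]=\sum_{\varphi\colon[\ell]\to[n]}\E\big[\prod_{t=1}^{\ell}Y_{\varphi(t)}\big]$ and grouping the maps $\varphi$ according to the partition of $[\ell]$ into their fibres, the crucial point is that any $\varphi$ whose fibre partition contains a singleton block contributes $0$: at most $\ell$ distinct indices appear, so $\ell$-wise independence lets the expectation factor over indices, and the singleton index contributes $\E[Y_i]=0$. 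Hence only partitions all of whose blocks have size $\ge 2$ survive; such a partition has $j\le \ell/2$ blocks, of sizes $m_1,\dots,m_j\ge 2$. For such a term, $|Y_i|\le 1$ gives $|\E[Y_{i_s}^{m_s}]|\le \E[Y_{i_s}^2]\le \E[X_{i_s}^2]\le \mu_{i_s}$ (using $X_i\in[0,1]$), and summing over injective assignments of indices to the $j$ blocks produces a factor $j!\,e_j(\mu_1,\dots,\mu_n)\le (\sum_i\mu_i)^j=\mu^j$ by the power-sum bound on elementary symmetric polynomials. This yields $\E[(X-\mu)^\ell]\le \sum_{j=1}^{\ell/2}P_{\ell,j}\,\mu^j$, where $P_{\ell,j}$ counts partitions of an $\ell$-element set into $j$ blocks each of size at least $2$.

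What remains is the combinatorial/analytic estimate $\sum_{j=1}^{\ell/2}P_{\ell,j}\mu^j\le 8(\ell(\mu+\ell))^{\ell/2}$. A clean route is to observe $\sum_j P_{\ell,j}x^j=\ell!\,[z^\ell]\exp(x(e^z-1-z))$, so Cauchy's coefficient estimate gives $\sum_j P_{\ell,j}\mu^j\le \ell!\,r^{-\ell}\exp(\mu(e^r-1-r))$ for every $r>0$; taking $r$ a suitable absolute constant when $\mu\le\ell$, and $r=\sqrt{\ell/\mu}\le 1$ when $\mu\ge\ell$ (so that $e^r-1-r\le r^2$ and $\mu(e^r-1-r)\le\ell$), and then using $\ell!\le (\ell/e)^\ell\,\poly(\ell)$, gives the bound after a short optimization; alternatively one can bound $P_{\ell,j}$ directly by reserving two elements in each block. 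The conceptual core is the vanishing of all singleton-fibre terms, which is the only place $\ell$-wise independence is invoked; the hard part will be purely the bookkeeping needed to land the precise constant $8$ and the additive $\ell$ in the right place. This is the argument of Bellare and Rompel~\cite{ref:BR94}.
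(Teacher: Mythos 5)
The paper states Theorem~\ref{thm:BR94} purely as a citation to \cite{ref:BR94} and gives no proof of its own, so there is no in-paper argument to compare yours against. Your sketch is the standard high-moment proof that Bellare and Rompel themselves use: Markov on $(X-\mu)^\ell$, expansion of the $\ell$-th central moment over fibre partitions, vanishing of singleton blocks via $\ell$-wise independence (which is the only place independence is needed), the bound $|\E[Y_i^m]|\le \E[Y_i^2]\le \mu_i$ for $m\ge 2$, and the factor $j!\,e_j(\mu_1,\dots,\mu_n)\le \mu^j$ from injective index assignments. That reduces the claim to $\sum_{j=1}^{\ell/2}P_{\ell,j}\mu^j\le 8\bigl(\ell(\mu+\ell)\bigr)^{\ell/2}$, which is exactly what remains to be done and is, as you say, bookkeeping; the outline is sound.
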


\subsection*{Approximating distributions by fuzzy correlated sampling}

In this section, we show that it is possible to sharply approximate a distribution
$\cD$ with finite support by sampling possibly correlated random variables
$X_1, \ldots, X_n$ where the distribution of each $X_i$ is close to $\cD$
conditioned on the previous outcomes, and computing the empirical distribution
of the drawn samples.

\begin{lem} \label{lem:distrLearning:dependent}
Let $\cD$ be a distribution over a finite set $\Sigma$ such that
$|\supp(\cD)| \leq r$. For any
$\eta, \eps, \gamma > 0$ such that $\gamma < \eps$, there is a choice of 
\[
n = O((r+2+\log(1/\eta))/(\eps-\gamma)^2)
\] such that the following holds. 
Suppose
$0 = X_0, X_1, \ldots, X_n \in \Sigma$ are possibly correlated random variables such that
for all $i \in [n]$ and all values $0 = x_0, x_1 \ldots, x_n \in \supp(\cD)$,
\[
\distr(X_i | X_0 = x_0, \ldots, X_{i-1} = x_{i-1}) \approx_\gamma \cD.
\]
Then, with probability at least $1-\eta$,
the empirical distribution of the outcomes 
$X_1, \ldots, X_n$ is $\eps$-close to $\cD$.
\end{lem}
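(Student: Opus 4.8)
The plan is to reduce the statistical-distance bound to a union bound over \emph{subsets} of $\supp(\cD)$, each controlled by the sub-martingale tail bound of Proposition~\ref{prop:simpleAzuma:sub}. Write $\Sigma' := \supp(\cD)$, so $|\Sigma'| \le r$, and let $\hat\cD$ denote the (random) empirical distribution of $X_1,\dots,X_n$. The first step is the elementary fact that
\[
\dist(\hat\cD,\cD) \;=\; \max_{A\subseteq \Sigma'}\bigl(\cD(A)-\hat\cD(A)\bigr),
\]
since in the variational formula for statistical distance the maximizing set is $\{x : \cD(x)>\hat\cD(x)\}$, which is necessarily contained in $\Sigma'$. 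Hence the event $\{\dist(\hat\cD,\cD)>\eps\}$ is contained in the union, over the at most $2^{r}$ sets $A\subseteq\Sigma'$, of the events $\{\cD(A)-\hat\cD(A)>\eps\}$, and it suffices to bound each of these probabilities separately.

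Fix $A\subseteq\Sigma'$ and set $Y_i := \mathbf{1}[X_i\in A]$ for $i\in[n]$, with $Y_0:=0$. By hypothesis the conditional law of $X_i$ given any realization of $X_0,\dots,X_{i-1}$ is $\gamma$-close to $\cD$, so $\Pr[X_i\in A \mid X_0,\dots,X_{i-1}] \ge \cD(A)-\gamma$; passing from conditioning on the full past to conditioning only on $Y_0,\dots,Y_{i-1}$ (via Proposition~\ref{prop:restriction}) preserves this, so $\E[Y_i\mid Y_0,\dots,Y_{i-1}] \ge \cD(A)-\gamma$. If $\cD(A)\le\eps$ the event $\hat\cD(A)\le\cD(A)-\eps$ is vacuous since $\hat\cD(A)\ge 0$; otherwise $\cD(A)-\gamma>0$ and Proposition~\ref{prop:simpleAzuma:sub}, applied with lower-bound parameter $\cD(A)-\gamma$ and target $\delta=\cD(A)-\eps<\cD(A)-\gamma$, gives
\[
\Pr\!\left[\hat\cD(A)\le \cD(A)-\eps\right] \;=\; \Pr\!\left[\sum_{i=1}^n Y_i \le n(\cD(A)-\eps)\right] \;\le\; \exp\!\bigl(-n(\eps-\gamma)^2/2\bigr).
\]

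Taking a union bound over the (at most $2^{r}\le 2^{r+2}$) choices of $A$ then yields $\Pr[\dist(\hat\cD,\cD)>\eps] \le 2^{r+2}\exp(-n(\eps-\gamma)^2/2)$, which is at most $\eta$ as soon as $n \ge \frac{2}{(\eps-\gamma)^2}\bigl((r+2)\ln 2 + \ln(1/\eta)\bigr)$, matching the claimed order $O\bigl((r+2+\log(1/\eta))/(\eps-\gamma)^2\bigr)$. The one point that requires care is getting the dependence on $r$ right: a naive per-symbol argument (insisting $\hat\cD(\sigma)\ge\cD(\sigma)-\eps/r$ for each $\sigma\in\Sigma'$ and union-bounding over symbols) would lose an extra factor of $r^2$, so it is essential to union-bound over \emph{sets} $A$, which is precisely where the additive-error form of Proposition~\ref{prop:simpleAzuma:sub} (with $\eps-\gamma$ in the exponent, independent of $\cD(A)$) pays off. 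The remaining details are routine: the boundary cases $\cD(A)<\gamma$ or $\cD(A)<\eps$ are vacuous as noted, and the only measure-theoretic subtlety — conditioning on the coarser $\sigma$-algebra generated by the $Y_j$ rather than on the full history — is dispatched by Proposition~\ref{prop:restriction}.
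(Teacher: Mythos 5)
Your proof is correct and follows essentially the same route as the paper's: indicator variables $Y_i = \mathbf{1}[X_i\in A]$ for each subset $A$, the Azuma-type bound of Proposition~\ref{prop:simpleAzuma:sub} on the drift of $\E[Y_i\mid Y_0,\dots,Y_{i-1}]$ (coarsened from the full history via Proposition~\ref{prop:restriction}), and a $2^{O(r)}$ union bound over the choices of $A$. Your one cosmetic improvement — observing that the maximizing set in the variational formula for $\dist(\hat\cD,\cD)$ is always contained in $\supp(\cD)$, so one can union-bound directly over $A\subseteq\supp(\cD)$ and use only the one-sided tail — lets you skip the paper's preliminary reduction (collapsing $\Sigma\setminus\supp(\cD)$ onto a single auxiliary symbol $\star$) and its two-sided Azuma bound, but this only sharpens constants, not the order of $n$.
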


\begin{proof}
First, we argue that without loss of generality, we can assume that $|\Sigma| \leq r+1$. This is because
if not, we can define a function
$f\colon \Sigma \to \supp(\cD) \cup \{ \star \}$ as follows:
\[
f(x) := \begin{cases} 
x & \text{if $x \in \supp(\cD)$} \\
\star & \text{otherwise.}
\end{cases}
\]
Observe that for any distribution $\cD'$ over $\Sigma$, 
$\dist(\cD', \cD) = \dist(f(\cD'), \cD)$, since the elements outside
$\supp(\cD)$ always contribute to the statistical distance
and we aggregate all such mass on a single extra point $\star$,
and by doing so do not affect the statistical distance.
Thus the empirical distribution of $(X_1, \ldots, X_n)$ is $\eps$-close
to $\cD$ iff the empirical distribution of $(f(X_1), \ldots, f(X_n))$ is.

Now suppose $|\Sigma| \leq r+1$.
Let $A \subseteq \Sigma$ be any non-empty event, and denote by
$\cD'$ the empirical distribution of the outcomes $X_1, \ldots, X_n$.
Let $p := \cD(A)$, and define indicator random variables
\[   
Y_i := \begin{cases} 0 & X_i \notin A, \\ 1 & X_i \in A. \end{cases}
\]
for $i \in [n]$ and $Y_0 := 0$.
Observe that
\[
\cD'(A) = \frac{\sum_{i=1}^n Y_i}{n},
\]
and, by the assumption on the closeness of conditional distributions of the $X_i$
to $\cD$,
\[
\E[Y_i | Y_0, \ldots, Y_{i-1}] \in [p - \gamma, p + \gamma].
\]
By Propositions \ref{prop:simpleAzuma}~and~\ref{prop:simpleAzuma:sub}, we 
can thus obtain a concentration bound
\begin{align*}
\Pr[| \cD'(A) - p | > \eps] \leq 2 \exp(-(\eps-\gamma)^2 n/2).
\end{align*}
Now we can apply a union bound on all possible choices of $A$ and conclude that
\[
\Pr[\lnot (\cD' \approx_\eps \cD)] \leq 2^{r+2} \exp(-(\eps-\gamma)^2 n/2),
\]
which can be ensured to be at most $\eta$ for some choice of 
\[
n = O((r+2+\log(1/\eta))/(\eps-\gamma)^2). \qedhere
\]
\end{proof}

\end{document}